\journal{Advances in Applied Mathematics}
\begin{document}

\newtheorem{theorem}{Theorem}
\newtheorem{lemma}{Lemma}
\newtheorem{corollary}[theorem]{Corollary}
\newtheorem{definition}{Definition}
\newtheorem{proposition}{Proposition}

\begin{frontmatter}

\title{One-sided asymptotically mean stationary channels}

\author{Fran\c{c}ois Simon \fnref{affiliation}}

\address{Telecom SudParis, 9 rue Charles Fourier, 91011 Evry, FRANCE}
\fntext[affiliation]{Institut Mines-Telecom ; Telecom SudParis ; CITI}
\ead{Francois.Simon@telecom-sudparis.eu}

\begin{abstract}
This paper proposes an analysis of asymptotically mean stationary (AMS) communication channels. A hierarchy based on stability properties (stationarity, quasi-stationarity, recurrence and asymptotically mean stationarity) of channels is identified. Stationary channels are a subclass of quasi-stationary channels which are a subclass of recurrent AMS channels which are a subclass of AMS channels.  These classes are proved to be stable under Markovian composition of channels (e.g., the cascade of AMS channels is an AMS channel). Characterizations of channels of each class are given. Some properties of the quasi-stationary mean of a channel are established. Finally, ergodicity conditions of AMS channels are gathered.
\end{abstract}

\begin{keyword}
%% keywords here, in the form: keyword \sep keyword

%% MSC codes here, in the form: \MSC code \sep code
%% or \MSC[2008] code \sep code (2000 is the default)
Communication channels \sep one-sided channels \sep asymptotic mean stationary channels \sep ergodic AMS channels, quasi-stationary mean of a channel.

\MSC 94A40 \sep 37A05
\end{keyword}

\end{frontmatter}

% \linenumbers

%% main text

\section{Introduction} \label{SectionIntroduction}

Information theory considers information sources which are random processes and  noisy communication channels which are probability kernels. The communication of a source $X$ with distribution $P_X$ through a noisy channel $\nu$ is described by a joint random process (X,Y) whose distribution $P_{XY}$ is the "hookup" $P_X\nu$ of the source and of the channel. The channel output process has for distribution the marginal $P_Y$. 

Channel coding theorems of Information Theory establish conditions to reliably communicate information through noisy channels, i.e., to reliably estimate $X$ from $Y$. These theorems rely on the Shannon-McMillan-Breiman theorem which holds when the random processes under consideration are asymptotically mean stationary (AMS) and ergodic (see \cite{Girardin05}). Thus the characterization of classes of AMS and ergodic sources and channels has been one of the central questions in the field of information theory.

Analyses of information sources and communication channels often consider two-sided random processes, i.e. random processes are supposed to be associated to invertible shifts. The present paper is devoted to an analysis of one-sided AMS communication channels, i.e., the weaker assumption of non-invertible shifts is made. 

On the base of channel stability properties which are stationarity, quasi-station\-arity, recurrence and asymptotically mean stationarity, a hierarchical classification of AMS channels is proposed: stationary channels are a subclass of  quasi-stationary channels,  quasi-stationary channels form a subclass of recurrent AMS channels  and recurrent AMS channels are a subclass of AMS channels. Each class is proven to be stable under cascading (or Markovian composition).

Characterizations of each channel class are given under the form of necessary and sufficient conditions. It is also proved that if a channel is a family of recurrent AMS (resp. AMS) conditional probabilities then the channel is recurrent AMS (resp. AMS).

The quasi-stationary mean of  AMS channels is defined and it is shown that  a recurrent  AMS (resp. AMS) channel is  dominated (resp. asymptotically dominated ) by its quasi-stationary mean (w.r.t a source). The special cases of  the quasi-stationary mean of a recurrent AMS channel  w.r.t. a stationary source  and of the quasi-stationary mean of an ergodic  recurrent and AMS channel w.r.t. an ergodic and stationary source are studied.

After a brief survey of related works in Section \ref{SectionRelatedWorks}, Section \ref{SectionSourcesAndChannels} reviews the classical formal models of sources and channels. Section \ref{SectionRestrictions} shows that restrictions to tail $\sigma$-fields of sources and channels can be consistently defined. Definitions of stability properties of sources are reviewed in Section \ref{SectionSourcesStabilityProperties}.

Section \ref{SectionChannelsStabilityProperties} reminds definitions and characterizations of stationarity, quasi-stationarity and asymptotically mean stationarity of channels. A definition of channel recurrence is proposed. A necessary and sufficient condition for a channel to be recurrent w.r.t. a source is proved. 

In Section \ref{SectionClassificationOfChannels}, it is proved that the set of AMS channels includes the set of recurrent AMS channels which includes the set of quasi-stationary channels which includes the set of stationary channels.

Sections \ref{SectionRAMSChannels} and \ref{SectionOneSidedAMSChannels} give necessary and sufficient conditions for a channel to be respectively recurrent AMS and AMS. These sections establish that a channel made of a family of recurrent AMS, respectively  AMS, conditional probabilities is recurrent AMS, respectively  AMS.

Cascades of channels are studied in Section \ref{SectionCascadesOfChannels}. The quasi-stationary mean of an recurrent AMS channel with respect to a stationary source is characterized in Section \ref{SectionQuasiStationaryMean}. Section \ref{SectionQuasiStationaryMeanErgodicRAMSChannel} briefly analyses ergodicity of recurrent AMS and AMS channels. Properties of the quasi-stationary mean of an ergodic recurrent AMS channel with respect to an ergodic recurrent AMS source are given in Section \ref{SectionQuasiStationaryMeanErgodicRAMSChannel}.
%\pagebreak
\section{Related works} \label{SectionRelatedWorks}

The analyses of information sources and of communication channels often consider two-sided random processes, i.e., random processes  are built with invertible shifts.  Numerous results have been established under this assumption. 

\cite{FontanaGrayKieffer81} presents in depth analysis of two-sided AMS channels. Key results are given in that paper: the fact that it is enough to check the AMS property on stationary sources, some characterizations of two sided AMS channels and the justified definition of the stationary mean of an AMS channel. It is proved that a two-sided AMS channel is ergodic if and only if its stationary mean is ergodic.The cascade of two-sided AMS channels is proved to be AMS. The key results are not proved for one-sided channels. Most of those results are based on the fact that, for two-sided processes, the stationary mean of a probability dominates and not only asymptotically dominates the AMS probability. 

An important special case of AMS channels has been identified and analyzed in \cite{KiefferRahe81}: Markov Channels. Two-sided and one-sided Markov channels are proved to be AMS. \cite{KiefferRahe81} proves that indecomposable Markov channels are ergodic. Ergodicity of Markov channels is analyzed in \cite{GrayDunhamGobbi87}, the obtained results cover one-sided and two-sided channels. \cite{Sujan81} investigates the properties of the information quantile capacity, of the Shannon capacity and of the operational capacity of two-sided AMS channels. \cite{Kakihara91}, \cite{Kakihara03} and \cite{Kakihara04} give results on ergodicity of AMS channels assuming an invertible shift, thus targeting  two-sided channel, see also \cite{Kakihara99}. Coding theorems and lemmas given in \cite{Gray11} cover one-sided channels. A sufficient condition of ergodicity for an AMS channel is given by Lemma 2.3 of \cite{Gray11}: if, for any AMS source, the input/output process is "weakly mixing" on products of rectangles, the channel is ergodic.

%\pagebreak
\section{Sources and Channels}\label{SectionSourcesAndChannels}

This section reviews the definitions of information sources and of noisy communication channels.  Information sources can be described following two equivalent models: random processes and dynamical systems (see \cite{Gray11}). Both models will be used below. Notations follow partly \cite{Gray11} and partly \cite{Kakihara99}.

\begin{definition}
Let $\mathcal{I}$ be a countable index set. Let $(A,\mathcal{B}_A)$ be a measurable space, called the alphabet. Let $(A^\mathcal{I}, \mathcal{B}_{A^\mathcal{I}})$ be the measurable sequence space where  $\mathcal{B}_{A^\mathcal{I}}$ is the $\sigma$-field generated by the sets of rectangles $\{x=(x_i)_{i\in \mathcal{I}} \in A^\mathcal{I} / x_i\in B_i,  B_i\in \mathcal{B}_A \text{ for any }i\in \mathcal{J}\}$, $\mathcal{J}$  finite subsets of $\mathcal{I}$. A source $[A,X]$ is a random process $X= \{ X_i ; i\in \mathcal{I} \}$ with values in  $(A^\mathcal{I}, \mathcal{B}_{A^\mathcal{I}})$. The distribution of the source $[A,X]$ is denoted by $P_X$.
\end{definition}
If the index set $\mathcal{I}$ is $\mathbb{N}$ then the process $X$ is said to be one-sided. If the index set $\mathcal{I}$ is $\mathbb{Z}$ then the process $X$ is said to be two-sided. Let $T_A: A^\mathcal{I} \rightarrow A^\mathcal{I}$ be the shift transform on $A^\mathcal{I}$. For a one-sided process:
$$
T_A(x_0 x_1 \ldots x_i \ldots)=x_1 x_2 \ldots x_{i+1} \ldots
$$
For a two-sided process:
$$
T_A(\ldots x_{-j}\ldots x_0 x_1 \ldots x_i \ldots)=\ldots x_{-j+1}\ldots x_1 x_2 \ldots x_{i+1} \ldots
$$
In the latter case, the shift $T_A$ is invertible. 

$T_A$ is assumed to be $(\mathcal{B}_{A^\mathcal{I}},\mathcal{B}_{A^\mathcal{I}})$-measurable. If $\mu$ is a probability on $(A^\mathcal{I},\mathcal{B}_{A^\mathcal{I}})$ then $(A^\mathcal{I},\mathcal{B}_{A^\mathcal{I}}, T_A,\mu)$ is a dynamical system.
\begin{definition}
A source on the alphabet $(A, \mathcal{B}_A)$ is a dynamical system \\ $(A^\mathcal{I},\mathcal{B}_{A^\mathcal{I}}, T_A,\mu)$.
\end{definition}
Let $\Pi_0$ denote the ``zero-time sampling function'': $\Pi_0(x)=x_0$ for any $x=x_0x_1\ldots x_i\ldots$ (or $x=\ldots x_{-j}\ldots x_0 x_1 \ldots x_i \ldots$ in the two-sided case). The two definitions of a source are equivalent. A dynamical system $(A^\mathcal{I},\mathcal{B}_{A^\mathcal{I}}, T_A,\mu)$ determines a random process $X= \{ X_i ; i\in \mathcal{I} \}$ where $X_i(\omega)=\Pi_0(T_A^i(x))=x_i$ and the distribution of $X$ is $\mu$: $P_X=\mu$.

In the sequel, each model will be used when the most relevant. Moreover, the word source will be used to name either the random process, the dynamical system, the distribution $P_X$ or the probability $\mu$.

Let $(A,\mathcal{B}_A)$ and $(B,\mathcal{B}_B)$  be alphabets,  $(A^\mathcal{I}, \mathcal{B}_{A^\mathcal{I}})$ and $(B^\mathcal{I}, \mathcal{B}_{B^\mathcal{I}})$ be the two corresponding sequence spaces. The shifts (assumed non-invertible) on $(A^\mathcal{I}, \mathcal{B}_{A^\mathcal{I}})$ and $(B^\mathcal{I}, \mathcal{B}_{B^\mathcal{I}})$ are respectively denoted $T_A$ and $T_B$.  

Let $\mathcal{B}_{A^\mathcal{I}\times B^\mathcal{I}}$ be the $\sigma$-field generated by the rectangles $\{ F\times G / F \in A^\mathcal{I} , G \in B^\mathcal{I}\}$. $T_A$ and $T_B$ define a measurable shift $T_{AB}$ on the space $(A^\mathcal{I} \times B^\mathcal{I}, \mathcal{B}_{A^\mathcal{I}\times B^\mathcal{I}})$ where $T_{AB}(x,y)=(T_Ax, T_By)$.

\begin{definition}
A noisy communication channel $[A,\nu,B]$  is a function $\nu: A^\mathcal{I} \times \mathcal{B}_{B^\mathcal{I}} \to [0,1]$ such that:
\begin{itemize}
	\item for any $x\in A^\mathcal{I}$, the set function $G \mapsto \nu(x,G)$ is a probability on the space $(B^\mathcal{I}, \mathcal{B}_{B^\mathcal{I}})$
	\item for any $G\in \mathcal{B}_{B^\mathcal{I}}$, the function $x \mapsto \nu(x,G)$ is measurable.
\end{itemize}
If the noisy channel  $[A,\nu,B]$ takes a source $[A,X]$ as input, it produces as an output an information source  $[B,Y]$.  Let $P_{XY}$ be the distribution of the joint process $(X,Y)$ induced by the source $X$ fed into the noisy channel. $P_{XY}$ is defined on rectangles by: 
$$
\forall F\in \mathcal{B}_{A^\mathcal{I}}, \forall G\in \mathcal{B}_{B^\mathcal{I}}, P_{XY}(F\times G) =\int_F \nu(x,G) dP_X
$$
$P_{XY}$ will also be denoted by $\mu\nu$ (the hookup of $\mu$ and $\nu$ where $\mu=P_X$). 
\end{definition}
In fact a noisy communication channel is a probability kernel. In the sequel, the alphabets $(A,\mathcal{B}_A)$ and $(B,\mathcal{B}_B)$  will  be assumed standard. Then the sequence spaces $(A^\mathcal{I}, \mathcal{B}_{A^\mathcal{I}})$ and $(B^\mathcal{I}, \mathcal{B}_{B^\mathcal{I}})$ are also standard.This assumption is made to ensure that  conditional probabilities defined  are regular (see \cite{Gray09} or \cite{Faden85}). Thus, on such spaces, given a joint random process $(X,Y)$, it will always be possible to define a channel $\nu$, unique $P_X$-a.s., taking $X$ as an input and inducing the joint process $(X,Y)$ (\cite{Gray11}). 

%\pagebreak
\section{Restrictions of Sources and Channels}\label{SectionRestrictions}

Below, some results on channels will be established thanks to properties holding for process distributions on their tail $\sigma$-fields. The following straightforward lemmas show that it is possible to consistently define the restriction of a source and the restriction of a channel to the corresponding tail $\sigma$-fields.

\begin{lemma}\label{LemmaSourceRestriction}
Let $(\Omega, \mathcal{F}, T , \eta)$ be a dynamical system. Let $\mathcal{F}_\infty$ denote the tail $\sigma$-field $\cap_{n\geq 0}T^{-n}\mathcal{F}$ and $\eta_\infty$ the restriction of $\eta$ to $\mathcal{F}_\infty$. Then $(\Omega, \mathcal{F}_\infty,T ,\eta_\infty)$ is a dynamical system. 
\end{lemma}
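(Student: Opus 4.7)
The plan is to verify the three ingredients of a dynamical system in turn: that $\mathcal{F}_\infty$ is a $\sigma$-field on $\Omega$, that $\eta_\infty$ is a probability measure on $(\Omega,\mathcal{F}_\infty)$, and (the only nontrivial point) that $T$ is $(\mathcal{F}_\infty,\mathcal{F}_\infty)$-measurable.

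The first two points are essentially free. For each $n\geq 0$, $T^{-n}\mathcal{F}$ is a $\sigma$-field because $T^n$ is $(\mathcal{F},\mathcal{F})$-measurable (as a composition of measurable maps), so $\mathcal{F}_\infty$ is an intersection of $\sigma$-fields, hence a $\sigma$-field contained in $\mathcal{F}$. The restriction $\eta_\infty$ is then automatically a probability on $(\Omega,\mathcal{F}_\infty)$, since $\mathcal{F}_\infty\subseteq\mathcal{F}$ and $\eta$ is a probability on $\mathcal{F}$.

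The main step is the measurability of $T$ with respect to $\mathcal{F}_\infty$, i.e., the inclusion $T^{-1}\mathcal{F}_\infty\subseteq\mathcal{F}_\infty$. I would write
$$
T^{-1}\mathcal{F}_\infty \;=\; T^{-1}\!\!\bigcap_{n\geq 0} T^{-n}\mathcal{F} \;\subseteq\; \bigcap_{n\geq 0} T^{-(n+1)}\mathcal{F} \;=\; \bigcap_{m\geq 1} T^{-m}\mathcal{F},
$$
and then argue that this last intersection coincides with $\mathcal{F}_\infty=\bigcap_{m\geq 0} T^{-m}\mathcal{F}$. One inclusion is obvious; for the other, any $F\in\bigcap_{m\geq 1}T^{-m}\mathcal{F}$ lies in particular in $T^{-1}\mathcal{F}$, and since the original measurability of $T$ gives $T^{-1}\mathcal{F}\subseteq\mathcal{F}=T^{-0}\mathcal{F}$, we get $F\in\mathcal{F}$ as well. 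Hence $T^{-1}\mathcal{F}_\infty\subseteq\mathcal{F}_\infty$.

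There is no real obstacle here; the only point that requires a moment of care is keeping track of the index shift $n\mapsto n+1$ and noticing that the apparently weaker intersection $\bigcap_{m\geq 1}T^{-m}\mathcal{F}$ actually equals $\mathcal{F}_\infty$ precisely because $T$ is $(\mathcal{F},\mathcal{F})$-measurable. Once that bookkeeping is done, the three conditions together give that $(\Omega,\mathcal{F}_\infty,T,\eta_\infty)$ is a dynamical system.
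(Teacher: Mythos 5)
Your proposal is correct and follows the same route as the paper, whose entire proof is the one-line observation that $T^{-1}\mathcal{F}_\infty \subset \mathcal{F}_\infty$ makes $T$ $(\mathcal{F}_\infty,\mathcal{F}_\infty)$-measurable. You simply spell out the index-shift argument behind that inclusion (and the routine facts about $\mathcal{F}_\infty$ and $\eta_\infty$) that the paper leaves implicit.
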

\begin{proof}
Since $T^{-1}\mathcal{F}_\infty \subset \mathcal{F}_\infty$, $T$ is $(\mathcal{F}_\infty,\mathcal{F}_\infty)$-measurable.
\end{proof}
\begin{definition}\label{DefinitionSourceRestriction}
The restriction of a source $[A,X]$ to its tail $\sigma$-field is the dynamical system $(A^\mathcal{I}, (\mathcal{B}_{A^\mathcal{I}})_\infty,T_A, (P_X)_\infty)$
\end{definition}
\begin{lemma}\label{LemmaChannelRestriction}
Let $[A,X]$ be a source with distribution $\mu$ and $[A,\nu,B]$ a channel. There exists a probability kernel $\nu': ((A^\mathcal{I},\mathcal{B}_{A^\mathcal{I}}) \times (\mathcal{B}_{B^\mathcal{I}})_\infty) \to [0,1]$ such that:
$$
\nu'(x , . ) = \nu_\infty(x,.) \text{  }\mu_\infty\text{-a.e.}
$$
where $\nu_\infty(x,.)$ is the restriction of the probability $\nu(x,.)$ to the tail $\sigma$-field $(\mathcal{B}_{B^\mathcal{I}})_\infty$.
$\nu'$ is the restriction of the channel $[A,\nu,B]$ and is denoted by $[A,\nu_\infty,B]$.
\end{lemma}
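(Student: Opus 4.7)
\emph{Proof plan.} My plan is to exhibit the obvious candidate, namely $\nu'(x,G) := \nu(x,G)$ for every $x \in A^{\mathcal{I}}$ and every $G \in (\mathcal{B}_{B^{\mathcal{I}}})_\infty$, and then verify the two defining properties of a probability kernel; with this choice $\nu'(x,\cdot)$ will in fact coincide with $\nu_\infty(x,\cdot)$ pointwise, which is stronger than the required $\mu_\infty$-almost-everywhere equality.

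First, for each fixed $x \in A^{\mathcal{I}}$, the set function $G \mapsto \nu(x,G)$ is a probability on $(B^{\mathcal{I}}, \mathcal{B}_{B^{\mathcal{I}}})$ by the definition of a channel. Since countable additivity and the normalization $\nu(x,B^{\mathcal{I}}) = 1$ are preserved when restricting to any sub-$\sigma$-field, the restriction $\nu_\infty(x,\cdot)$ to $(\mathcal{B}_{B^{\mathcal{I}}})_\infty$ is a probability on $(B^{\mathcal{I}}, (\mathcal{B}_{B^{\mathcal{I}}})_\infty)$. This discharges the first clause in the definition of a probability kernel.

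Second, I would note that $(\mathcal{B}_{B^{\mathcal{I}}})_\infty = \bigcap_{n \geq 0} T_B^{-n}\mathcal{B}_{B^{\mathcal{I}}}$ is contained in $\mathcal{B}_{B^{\mathcal{I}}}$, since $T_B$ is $(\mathcal{B}_{B^{\mathcal{I}}},\mathcal{B}_{B^{\mathcal{I}}})$-measurable and hence each $T_B^{-n}\mathcal{B}_{B^{\mathcal{I}}}$ is a sub-$\sigma$-field of $\mathcal{B}_{B^{\mathcal{I}}}$. Therefore, for any $G \in (\mathcal{B}_{B^{\mathcal{I}}})_\infty$, the map $x \mapsto \nu'(x,G) = \nu(x,G)$ is $\mathcal{B}_{A^{\mathcal{I}}}$-measurable by the second defining property of $\nu$; this is exactly the measurability required of $\nu'$.

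I anticipate no real obstacle here: the lemma is flagged as straightforward by the author, and the argument reduces to the observation that both defining properties of a probability kernel descend to any sub-$\sigma$-field of the output space. The ``$\mu_\infty$-a.e.'' wording in the statement is simply the customary way of indicating that the restriction, like any regular conditional probability, is canonical only up to modification on $\mu_\infty$-null sets; the pointwise choice $\nu' = \nu_\infty$ constructed above is one valid representative, and any other kernel that agrees with $\nu_\infty(x,\cdot)$ outside a $\mu_\infty$-null set of $x$'s would do as well.
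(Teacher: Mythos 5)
Your construction $\nu'(x,G):=\nu(x,G)$ does define a probability kernel from $(A^{\mathcal{I}},\mathcal{B}_{A^{\mathcal{I}}})$ to $(B^{\mathcal{I}},(\mathcal{B}_{B^{\mathcal{I}}})_\infty)$, and it satisfies the displayed equality trivially, so on the most literal reading nothing is false. But this is not the paper's argument, and it misses the content the lemma is meant to supply. The paper constructs $\nu'$ by disintegrating the restricted hookup $(\mu\nu)_\infty$ over its input marginal $\mu_\infty$, i.e.\ $\nu'$ is chosen so that $\mu_\infty\nu'(F\times G)=(\mu\nu)_\infty(F\times G)$ for all $F\in(\mathcal{B}_{A^{\mathcal{I}}})_\infty$ and $G\in(\mathcal{B}_{B^{\mathcal{I}}})_\infty$ (this is where the standard-space assumption and the source $\mu$ enter), and the point of the lemma is that this disintegration can be identified $\mu_\infty$-a.e.\ with the pointwise restriction $\nu_\infty$. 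The resulting compatibility $\mu_\infty\nu_\infty=(\mu\nu)_\infty$ is exactly what is used downstream, e.g.\ in the proofs of Proposition~\ref{PropositionChannelClassification} and Proposition~\ref{PropositionNecessarySufficientConditionAMSChannel2}, where the hookup of the restricted source with the restricted channel is formed and identified with the restriction of the original hookup.

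The concrete defect of your choice is that it cannot play that role. The restricted source is the dynamical system $(A^{\mathcal{I}},(\mathcal{B}_{A^{\mathcal{I}}})_\infty,T_A,\mu_\infty)$ of Definition~\ref{DefinitionSourceRestriction}, so forming $\mu_\infty\nu'$ requires $x\mapsto\nu'(x,G)$ to be $(\mathcal{B}_{A^{\mathcal{I}}})_\infty$-measurable; your $\nu'=\nu_\infty$ is in general only $\mathcal{B}_{A^{\mathcal{I}}}$-measurable in $x$. For instance, for the deterministic channel sending $x$ to the constant sequence $(x_0,x_0,\dots)$ and a tail event $G$ of the output such as $\{y: y_n=1 \text{ infinitely often}\}$, one gets $\nu(x,G)=1_{\{x_0=1\}}$, which depends on the coordinate $x_0$ and is not measurable with respect to the input tail $\sigma$-field. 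So your kernel cannot be composed with $\mu_\infty$, and the identity $\mu_\infty\nu'=(\mu\nu)_\infty$ is neither stated nor obtained in your argument. In short, you have verified that restricting the output $\sigma$-field of a kernel again yields a kernel --- true but empty (note that $\mu$ plays no role whatsoever in your proof) --- whereas the lemma is meant to produce a version of the restricted channel adapted to the restricted input space and compatible with $(\mu\nu)_\infty$, and that step genuinely requires the disintegration argument of the paper.
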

\begin{proof}
Given a source with distribution $\mu$ and a channel $\nu$, one can define a probability kernel $\nu'$ such that $\mu_\infty \nu' = (\mu\nu)_\infty$ since $\mu_\infty$ is the "input" marginal of $(\mu\nu)_\infty$:
$$
\forall F\in (\mathcal{B}_{A^\mathcal{I}})_\infty, \forall G\in (\mathcal{B}_{B^\mathcal{I}})_\infty, \mu_\infty\nu'(F \times G) = (\mu\nu)_\infty(F\times G) = \mu\nu(F\times G)
$$
Then
$$
\int_F \nu'(x,G) d\mu_\infty = \int_F \nu(x,G) d\mu = \int_F \nu_\infty(x,G) d\mu = \int_F \nu_\infty(x,G) d\mu_\infty
$$
Then $\forall G\in (\mathcal{B}_{B^\mathcal{I}})_\infty, \nu'(x,G) = \nu_\infty(x,G)$ $\mu_\infty$-a.e. 
\end{proof}

%\pagebreak
\section{Stability Properties of Sources}\label{SectionSourcesStabilityProperties}

Some well known properties of sources or dynamical systems are reminded (cf \cite{Gray09}). Technical lemmas are given for later use.

\begin{definition}
A dynamical system $(\Omega, \mathcal{F}, T,\eta)$ is stationary if 
$$
\forall F\in \mathcal{F}, \eta(T^{-1}F)=\eta(F)
$$
\end{definition} 

\begin{definition}
A dynamical system $(\Omega, \mathcal{F}, T,\eta)$ is recurrent if 
$$
\forall F\in \mathcal{F}, \eta(F \setminus \cup_{k\geq 1} T^{-k}F)=0
$$
i.e. any event $F$ is recurrent.
\end{definition} 

\begin{definition}
A dynamical system $(\Omega, \mathcal{F}, T,\eta)$ is incompressible if 
$$
\forall F\in \mathcal{F} \text{ such that }T^{-1}F \subset F, \eta(F \setminus T^{-1}F)=0
$$
\end{definition} 

A dynamical system is recurrent if and only if it is incompressible (\cite{Gray09}). A stationary dynamical system is recurrent (\cite{Gray09}).

\begin{definition}
A dynamical system $(\Omega, \mathcal{F}, T,\eta)$ is asymptotically mean stationary (AMS) if 
$$
\forall F \in \mathcal{F}, \lim_{n \to \infty}\frac{1}{n} \sum_{k=0}^{n-1}\eta(T^{-k}F)\text{ exists}
$$
If $\overline{\eta}(F)$ is this limit, then $\overline{\eta}$ is a stationary probability on $(\Omega, \mathcal{F})$, called the stationary mean of $\eta$.
\end{definition} 
$\overline{\eta}$ asymptotically dominates $\eta$ ($\eta \ll^a \overline{\eta}$): 
$$
\overline{\eta}(F) =0 \Rightarrow \lim_{n \to \infty} \eta(T^{-n}F)=0
$$
A source with distribution $\mu$ is AMS if and only if  there exists a stationary source with distribution $\eta$ such that $\mu \ll^a \eta$. See \cite{Gray09}. 

The following lemma is Theorem 7.4 of \cite{Gray09}.
\begin{lemma}\label{LemmaTh7dot4Gray2009}
Let  $(\Omega, \mathcal{F}, T,\eta)$ be an AMS dynamical system. Then $\eta$ is dominated by its stationary mean $\overline{\eta}$ ($\eta \ll \overline{\eta}$) if and only if $(\Omega, \mathcal{F}, T,\eta)$ is recurrent.
\end{lemma}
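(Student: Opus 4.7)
The statement is an iff, so the plan is to prove each direction separately, exploiting the characterization of recurrence as incompressibility already recalled in the text.

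For the easy direction ($\eta \ll \overline{\eta} \Rightarrow$ recurrence), I would check incompressibility. Take any $F \in \mathcal{F}$ with $T^{-1}F \subset F$. Stationarity of $\overline{\eta}$ gives $\overline{\eta}(T^{-1}F) = \overline{\eta}(F)$, so $\overline{\eta}(F \setminus T^{-1}F) = 0$. Absolute continuity $\eta \ll \overline{\eta}$ then yields $\eta(F \setminus T^{-1}F) = 0$, which is exactly incompressibility; hence $\eta$ is recurrent.

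For the nontrivial direction (recurrent $\Rightarrow \eta \ll \overline{\eta}$), suppose $\overline{\eta}(F) = 0$ and let me introduce the orbit set
\[
F^* = \bigcup_{k \geq 0} T^{-k}F.
\]
The idea is to pass from the hypothesis on $F$ to a statement about $F^*$, because $F^*$ is the natural $T$-invariant-like object: one has $T^{-1}F^* \subset F^*$, so incompressibility applies directly. Stationarity of $\overline{\eta}$ and countable subadditivity give $\overline{\eta}(F^*) = 0$, and then the asymptotic domination $\eta \ll^a \overline{\eta}$ forces $\eta(T^{-n}F^*) \to 0$, i.e.\ $\eta(F^*_n) \to 0$ where $F^*_n = \bigcup_{k \geq n} T^{-k}F = T^{-n}F^*$.

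On the other hand, incompressibility applied to $F^*$ gives $\eta(F^* \setminus T^{-1}F^*) = 0$, i.e.\ $\eta(F^*) = \eta(T^{-1}F^*)$; iterating, $\eta(F^*) = \eta(T^{-n}F^*)$ for all $n \geq 0$. Combining with the limit above yields $\eta(F^*) = 0$, and since $F \subset F^*$ we conclude $\eta(F) = 0$, which is absolute continuity.

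The only subtle step is the second direction, where I must resist the temptation to apply incompressibility to $F$ itself (which need not satisfy $T^{-1}F \subset F$). The trick is to replace $F$ by its forward orbit $F^*$, so that incompressibility bites; combining this with asymptotic domination (a free consequence of AMS, recalled just before the lemma) then closes the argument. Everything else is standard manipulation of invariant sets under a stationary probability.
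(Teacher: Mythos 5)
Your proof is correct. Note, however, that the paper itself offers no proof of this statement: it is quoted verbatim as Theorem 7.4 of Gray's \emph{Probability, Random Processes and Ergodic Properties} (2009) and used as a black box, so there is no in-paper argument to compare against. Your two directions are both sound: the forward direction is the standard observation that stationarity of $\overline{\eta}$ kills $F\setminus T^{-1}F$ for any $F\supset T^{-1}F$ and absolute continuity transfers this to $\eta$; the converse correctly replaces $F$ by its forward orbit $F^{*}=\bigcup_{k\geq 0}T^{-k}F$, which is the right move since $F$ itself need not be compressible-testable. One small point worth making explicit in the iteration step: from $\eta(F^{*})=\eta(T^{-1}F^{*})$ alone you cannot iterate, because $\eta(T^{-1}F^{*}\setminus T^{-2}F^{*})=\eta\bigl(T^{-1}(F^{*}\setminus T^{-1}F^{*})\bigr)$ is not controlled by $\eta(F^{*}\setminus T^{-1}F^{*})$ when $\eta$ is not stationary; instead you should apply incompressibility afresh to each set $T^{-k}F^{*}$, each of which satisfies $T^{-1}(T^{-k}F^{*})\subset T^{-k}F^{*}$, to get $\eta(T^{-k}F^{*})=\eta(T^{-k-1}F^{*})$ for every $k$. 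With that clarification the chain $\eta(F^{*})=\eta(T^{-n}F^{*})\to 0$ closes the argument, and your proof is a valid self-contained substitute for the citation.
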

For example, if the shift $T$ is invertible, then $(\Omega, \mathcal{F}, T,\eta)$ is  AMS if and only if $\eta \ll \overline{\eta}$ (\cite{Gray09}). Then an AMS two-sided dynamical system $(\Omega, \mathcal{F}, T,\eta)$ is also recurrent.

\begin{definition}
A dynamical system $(\Omega, \mathcal{F}, T,\eta)$ is ergodic if, for any invariant event $F$ (i.e., $T^{-1}F=F$), either  $\eta(F)=0$ or $\eta(F)=1$.
\end{definition} 

\begin{lemma}\label{LemmaStationaryMeanRestriction}
Let $(\Omega, \mathcal{F}, T, \eta)$ be an AMS dynamical system. $(\Omega, \mathcal{F}_\infty, T, \eta_\infty)$ is an AMS dynamical system and its asymptotic stationary mean is the restriction of the asymptotic stationary mean of $\eta$ to the tail $\sigma$-field:
$$
\overline{(\eta_\infty)}=(\overline{\eta})_\infty
$$
\end{lemma}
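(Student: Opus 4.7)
The plan is to reduce everything to the AMS property of the original system $(\Omega,\mathcal{F},T,\eta)$ by noting that, for any event $F\in\mathcal{F}_\infty$, the whole orbit $\{T^{-k}F\}_{k\geq 0}$ lies inside $\mathcal{F}_\infty$, so restricting the measure to the tail $\sigma$-field has no effect on the Cesàro averages we need to control.

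First I would record the stability of $\mathcal{F}_\infty$ under $T^{-1}$: the inclusion $T^{-1}\mathcal{F}_\infty\subset\mathcal{F}_\infty$ (already used in Lemma \ref{LemmaSourceRestriction}) combined with iteration gives $T^{-k}F\in\mathcal{F}_\infty$ for every $F\in\mathcal{F}_\infty$ and every $k\geq 0$. Since $\eta_\infty$ is by definition the restriction of $\eta$ to $\mathcal{F}_\infty$, this yields the pointwise identity
$$
\eta_\infty(T^{-k}F)=\eta(T^{-k}F)\qquad\text{for all }k\geq 0,\ F\in\mathcal{F}_\infty.
$$

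Second, I would invoke the AMS hypothesis on $(\Omega,\mathcal{F},T,\eta)$ directly to get existence of the Cesàro limit:
$$
\lim_{n\to\infty}\frac{1}{n}\sum_{k=0}^{n-1}\eta_\infty(T^{-k}F)
=\lim_{n\to\infty}\frac{1}{n}\sum_{k=0}^{n-1}\eta(T^{-k}F)
=\overline{\eta}(F),
$$
and since $F\in\mathcal{F}_\infty$ we have $\overline{\eta}(F)=(\overline{\eta})_\infty(F)$. This simultaneously proves that $(\Omega,\mathcal{F}_\infty,T,\eta_\infty)$ is AMS and identifies its stationary mean with $(\overline{\eta})_\infty$. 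Note that $(\overline{\eta})_\infty$ is automatically stationary on $(\Omega,\mathcal{F}_\infty,T)$ as the restriction of the stationary probability $\overline{\eta}$ to a sub-$\sigma$-field on which $T$ acts measurably, matching the definition of the asymptotic stationary mean.

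There is no real obstacle here; the only point that could trip one up is making sure that $T^{-k}F$ stays in $\mathcal{F}_\infty$ so that $\eta_\infty$ is evaluated only on sets where it is defined, and that is precisely the $T^{-1}$-invariance of the tail $\sigma$-field already proved in Lemma \ref{LemmaSourceRestriction}. No machinery beyond the definitions of AMS and of the tail restriction is required.
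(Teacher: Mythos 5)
Your proof is correct and follows essentially the same route as the paper: both arguments observe that for $F\in\mathcal{F}_\infty$ the Cesàro averages of $\eta_\infty$ and $\eta$ coincide, so the limit exists and equals $\overline{\eta}(F)=(\overline{\eta})_\infty(F)$. Your version merely makes explicit the $T^{-1}$-stability of the tail $\sigma$-field that justifies the identification $\eta_\infty(T^{-k}F)=\eta(T^{-k}F)$, which the paper leaves implicit.
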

\begin{proof}
Let $F \in \mathcal{F}_\infty$. 
\begin{eqnarray}
\lim_{n \to \infty} \frac{1}{n} \sum_{i=0}^{n-1} \eta_\infty(T^{-i}F) & = & \lim_{n \to \infty} \frac{1}{n} \sum_{i=0}^{n-1} \eta(T^{-i}F) \nonumber \\
		& = & \overline{\eta}(F) \nonumber \\
		& = & (\overline{\eta})_\infty(F) \nonumber
\end{eqnarray}
The limit $\overline{(\eta_\infty)}(F)=\lim_{n \to \infty} \frac{1}{n} \sum_{i=0}^{n-1} \eta_\infty(T^{-i}F)$ exists, then $\eta_\infty$ is AMS and $\overline{(\eta_\infty)}=(\overline{\eta})_\infty$
\end{proof}

Lemma \ref{LemmaDominanceOnTailSigmaField} is an extension of some statements of Theorem 3 of \cite{GrayKieffer80} without the assumption of stationarity of the dominating probability. 
\begin{lemma}\label{LemmaDominanceOnTailSigmaField}
Let $[A,X]$ be a source with distribution $\mu$,  let $[A,Y]$ be a source with distribution $\eta$. Let $(\mathcal{B}_{A^\mathcal{I}})_{\infty}$ denote the tail $\sigma$-field $\cap_{n\geq 0}T_A^{-n}\mathcal{B}_{A^\mathcal{I}}$,  $\mu_{\infty}$ and $\eta_{\infty}$ denote the restrictions of $\mu$ and $\eta$ to $(\mathcal{B}_{A^\mathcal{I}})_{\infty}$. Then $\eta$ asymptotically dominates $\mu$ if and only if $\eta_\infty$ dominates $\mu_\infty$:
$$
\mu \ll^a \eta \Leftrightarrow \mu_\infty \ll \eta_\infty
$$
\end{lemma}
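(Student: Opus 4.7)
The plan is to prove the two implications separately, relying on two structural facts about the one-sided shift: (i) the tail $\sigma$-field $(\mathcal{B}_{A^\mathcal{I}})_\infty$ is stable under the preimage map $T_A^{-1}$, so $T_A^{-n}F$ remains a tail event whenever $F$ is; and (ii) any $F \in (\mathcal{B}_{A^\mathcal{I}})_\infty$ admits for every $n$ a representation $F = T_A^{-n}F_n$ with $F_n \in \mathcal{B}_{A^\mathcal{I}}$.

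For the implication $\mu \ll^a \eta \Rightarrow \mu_\infty \ll \eta_\infty$, I would take $F \in (\mathcal{B}_{A^\mathcal{I}})_\infty$ with $\eta_\infty(F) = \eta(F) = 0$. Asymptotic dominance applied directly to $F$ yields $\mu(T_A^{-n}F) \to 0$, and using the representation $F = T_A^{-n}F_n$ one obtains the identities $\mu(F) = (\mu T_A^{-n})(F_n)$ and $\eta(F) = (\eta T_A^{-n})(F_n) = 0$ for each $n$. The aim is to combine these identities, taken across varying $n$, with the asymptotic dominance condition applied to the sequence $(F_n)$, in order to squeeze $\mu(F)$ to zero.

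For the converse $\mu_\infty \ll \eta_\infty \Rightarrow \mu \ll^a \eta$, I would take $G \in \mathcal{B}_{A^\mathcal{I}}$ with $\eta(G) = 0$ and form the tail event $\widetilde G := \cap_{N \geq 0} \cup_{n \geq N} T_A^{-n} G$, which lies in $(\mathcal{B}_{A^\mathcal{I}})_\infty$ since $\cup_{n \geq N} T_A^{-n} G \in T_A^{-N}\mathcal{B}_{A^\mathcal{I}}$ for every $N$. The plan is to establish $\eta(\widetilde G) = 0$, apply the hypothesis to deduce $\mu(\widetilde G) = 0$, and conclude via the reverse Fatou inequality $\limsup_n \mu(T_A^{-n}G) \leq \mu(\widetilde G) = 0$, which forces the desired convergence to zero.

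The principal obstacle in both directions is the absence of stationarity of $\eta$: the classical stationary version of this result (Gray and Kieffer, 1980) exploits $\eta(T_A^{-n}G) = \eta(G) = 0$ to bound $\eta(\widetilde G) \leq \sum_n \eta(T_A^{-n}G) = 0$, and analogously $\eta(F_n) = 0$ in the forward direction, both of which are not immediately available here. Overcoming this will require leveraging the fact that $T_A^{-1}$ acts bijectively on the tail $\sigma$-field---a one-sided analogue of shift invertibility, as $F_n$ can be taken to be the image $T_A^n(F)$ using the saturation of tail events under $T_A^n$---together with the restriction consistency already established in Lemma~\ref{LemmaSourceRestriction} and Lemma~\ref{LemmaStationaryMeanRestriction}.
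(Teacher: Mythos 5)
Your plan assembles the right objects but, in both directions, the step that is supposed to defeat the non-stationarity of $\eta$ is exactly the step that is missing, and the device you invoke to supply it (that $T_A^{-1}$ acts bijectively on the tail $\sigma$-field, so $F_n$ may be taken to be $T_A^n(F)$) does not do the job. In the forward direction, after writing $F=T_A^{-n}F_n$ you know $\mu(F)=\mu T_A^{-n}(F_n)$ and $\eta T_A^{-n}(F_n)=0$ --- note this is $\eta(T_A^{-n}F_n)=0$, \emph{not} $\eta(F_n)=0$. Asymptotic dominance controls $\mu(T_A^{-n}G)$ for a \emph{fixed} $G$ with $\eta(G)=0$; it says nothing as stated about a sequence of sets $F_n$ varying with $n$ for which only the pulled-back measures vanish, so ``applying the asymptotic dominance condition to the sequence $(F_n)$'' is not an available move. (Your opening observation that $\mu(T_A^{-n}F)\to 0$ is true but irrelevant: a tail event need not be invariant, so $\mu(T_A^{-n}F)$ is not $\mu(F)$.) The paper closes this gap with an idea absent from your proposal: for each $n$ it takes the Lebesgue decomposition of $\mu T_A^{-n}$ with respect to $\eta T_A^{-n}$, producing singular sets $B_n$ with $\eta T_A^{-n}(B_n)=0$ and densities $f_n$, uses $\mu\ll^a\eta$ to get $\mu(T_A^{-n}B_n)\to 0$, and deduces the estimate $\sup_{F}\vert \mu T_A^{-n}(F)-\int_F f_n\,d\eta T_A^{-n}\vert\to 0$ \emph{uniformly} over measurable $F$. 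Applied to the varying sets $F_n$, this uniformity is what lets one conclude $\mu(F)=\lim_n\int_{F_n}f_n\,d\eta T_A^{-n}=0$, each integral vanishing because $\eta T_A^{-n}(F_n)=0$.

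In the converse direction your route (form $\widetilde G=\cap_N\cup_{n\geq N}T_A^{-n}G$ and use reverse Fatou) is the classical one, but, as you yourself point out, the key step $\eta(\widetilde G)=0$ is obtained in the stationary case from $\eta(T_A^{-n}G)=\eta(G)=0$, and you offer no replacement for it; the bijectivity of $T_A^{-1}$ on tail events does not yield one, since the problem is about the measure $\eta$ of forward-shifted copies of $G$, not about representing tail events. The paper does not reprove this direction at all: it observes that the corresponding implication in Theorem 3 of Gray--Kieffer (1980) is the one whose proof does not invoke stationarity of the dominating measure, and imports it. So your proposal, as it stands, proves neither implication; the substantive new content of the paper's proof --- the Lebesgue-decomposition/uniform-approximation argument for the forward direction --- is the piece you would need to find.
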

\begin{proof}
In the proof of Theorem 3 of \cite{GrayKieffer80}, the stationarity of $\eta$ is used only to prove that $\mu \ll^a \eta \Rightarrow \mu_\infty \ll \eta_\infty$ (Corollary 1 in \cite{GrayKieffer80}). Follows a modified proof of this statement without assuming stationarity of $\eta$.

From the Lebesgue decomposition theorem, for any $n=0,1,2,\cdots$, there exist $B_n \in \mathcal{B}_{A^\mathcal{I}}$ such that for any $F \in \mathcal{B}_{A^\mathcal{I}}$:
\begin{equation}\label{EqLebesgueDecomposition}
\mu T_A^{-n}(F)=\mu T_A^{-n}(F\cap B_n) + \int_F f_n d\eta T_A^{-n}
\end{equation}
where $f_n=\left( \frac{d\mu T_A^{-n}}{d\eta T_A^{-n}}\right)_a$ and $\eta T_A^{-n}(B_n)=0$. Since $\mu \ll^a \eta$, for any $n$, $\mu T_A^{-n} \ll^a \eta T_A^{-n}$. Hence:
$$
\lim_{n \to \infty}\mu(T_A^{-n} B_n)=0
$$
From (\ref{EqLebesgueDecomposition}), for any $F \in \mathcal{B}_{A^\mathcal{I}}$ and any $n$:
$$
0\leq \mu T_A^{-n}(F)- \int_F f_n d\eta T_A^{-n} = \mu T_A^{-n}(F\cap B_n) \leq \mu(T_A^{-n}B_n) \to 0
$$
Thus:
\begin{equation}\label{EqSupDeviation}
\lim_{n \to \infty}\sup_{F \in \mathcal{B}_{A^\mathcal{I}}} \vert \mu T_A^{-n}(F)- \int_F f_n d\eta T_A^{-n} \vert = 0
\end{equation}

Let $F \in (\mathcal{B}_{A^\mathcal{I}})_{\infty}$ such that $\eta(F)=0$. For any $n$, there exists $F_n \in \mathcal{B}_{A^\mathcal{I}}$ such that $F= T_A^{-n}F_n$. By (\ref{EqSupDeviation}):
$$
\lim_{n \to \infty} \vert \mu(F)- \int_{F_n} f_n d\eta T_A^{-n} \vert= \lim_{n \to \infty} \vert \mu T_A^{-n}(F_n)- \int_{F_n} f_n d\eta T_A^{-n} \vert = 0
$$
But $\eta(F)=\eta T_A^{-n}(F_n)=0$, then $\int_{F_n} f_n d\eta T_A^{-n}=0$. Hence:
$$
\mu(F)=0
$$
\end{proof}

\begin{lemma}\label{LemmaAMSProcess}
A dynamical system $(\Omega, \mathcal{F}, T, \eta)$ is AMS if and only if the dynamical system $(\Omega, \mathcal{F}_\infty , T, \eta_\infty)$ is recurrent and AMS.
\end{lemma}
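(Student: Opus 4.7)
The plan is to handle the two implications separately: the forward direction assembles the preceding lemmas into a short chain, while the converse reduces, via Lemma \ref{LemmaDominanceOnTailSigmaField}, to exhibiting a stationary probability on $\mathcal{F}$ whose restriction to the tail $\sigma$-field is $\overline{(\eta_\infty)}$.

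For $(\Rightarrow)$, assume $(\Omega,\mathcal{F},T,\eta)$ is AMS with stationary mean $\overline{\eta}$. Lemma \ref{LemmaStationaryMeanRestriction} gives that $(\Omega,\mathcal{F}_\infty,T,\eta_\infty)$ is AMS with $\overline{(\eta_\infty)}=(\overline{\eta})_\infty$. The asymptotic domination $\eta\ll^a\overline{\eta}$ recalled just after the AMS definition, combined with Lemma \ref{LemmaDominanceOnTailSigmaField}, yields $\eta_\infty\ll(\overline{\eta})_\infty=\overline{(\eta_\infty)}$, and Lemma \ref{LemmaTh7dot4Gray2009} then certifies that the restricted system is recurrent.

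For $(\Leftarrow)$, assume $(\Omega,\mathcal{F}_\infty,T,\eta_\infty)$ is AMS and recurrent; Lemma \ref{LemmaTh7dot4Gray2009} gives $\eta_\infty\ll\overline{(\eta_\infty)}$. The goal is to produce a stationary probability $\zeta$ on $\mathcal{F}$ with $\zeta_\infty=\overline{(\eta_\infty)}$. Given such a $\zeta$, the chain $\eta_\infty\ll\overline{(\eta_\infty)}=\zeta_\infty$ together with Lemma \ref{LemmaDominanceOnTailSigmaField} delivers $\eta\ll^a\zeta$; the characterization that a source is AMS iff it is asymptotically dominated by a stationary source then forces $\eta$ to be AMS.

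The main obstacle is thus the construction of $\zeta$. The natural candidate is a cluster point of the Ces\`aro averages $\zeta_n:=\frac{1}{n}\sum_{i=0}^{n-1}\eta T^{-i}$, viewed as probabilities on $\mathcal{F}$. Two properties come for free: any such cluster point is stationary, since the total variation of the signed measure $\zeta_n-\zeta_n T^{-1}$ is at most $2/n$; and its restriction to $\mathcal{F}_\infty$ equals $\overline{(\eta_\infty)}$, because $T^{-i}F\in\mathcal{F}_\infty$ whenever $F\in\mathcal{F}_\infty$ (using $T^{-1}\mathcal{F}_\infty\subseteq\mathcal{F}_\infty$), so that $\zeta_n(F)=\frac{1}{n}\sum_{i=0}^{n-1}\eta_\infty(T^{-i}F)\to\overline{(\eta_\infty)}(F)$ by the AMS hypothesis on $\eta_\infty$. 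The delicate step is therefore extracting a setwise cluster point of $\{\zeta_n\}$ on all of $\mathcal{F}$; here one leverages the standard-Borel structure of the alphabets to place $\Omega$ in a Polish framework, and uses the domination $\eta_\infty\ll\overline{(\eta_\infty)}$ on the tail $\sigma$-field via a compactness/selection argument to rule out any merely finitely additive limit.
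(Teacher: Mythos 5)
Your forward implication is correct and is essentially the paper's own argument: Lemma \ref{LemmaStationaryMeanRestriction} gives that $\eta_\infty$ is AMS with $\overline{(\eta_\infty)}=(\overline{\eta})_\infty$, Lemma \ref{LemmaDominanceOnTailSigmaField} converts $\eta\ll^a\overline{\eta}$ into $\eta_\infty\ll\overline{(\eta_\infty)}$, and Lemma \ref{LemmaTh7dot4Gray2009} then yields recurrence of the restricted system. No issue there.

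The converse is where you part company with the paper and where there is a genuine gap. The paper never constructs a stationary probability on all of $\mathcal{F}$: its entire proof is the chain ``$\eta$ is AMS iff $\eta\ll^a\overline{\eta}$ iff (Lemma \ref{LemmaDominanceOnTailSigmaField}) $\eta_\infty\ll\overline{\eta}_\infty$ iff (Lemma \ref{LemmaTh7dot4Gray2009}) $\eta_\infty$ is recurrent and AMS,'' read in both directions with the stationary mean itself as the dominating measure. Your reduction of the converse to exhibiting a stationary $\zeta$ on $\mathcal{F}$ with $\zeta_\infty=\overline{(\eta_\infty)}$ is logically sound given Lemma \ref{LemmaDominanceOnTailSigmaField} and the asymptotic-domination characterization of AMS, but the construction you sketch does not produce such a $\zeta$. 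The missing object is precisely a \emph{setwise}, countably additive cluster point of $\zeta_n=\frac{1}{n}\sum_{i=0}^{n-1}\eta T^{-i}$: weak* compactness in the Polish topology only controls integrals of bounded continuous functions and says nothing about the values on tail events, so the identity $\zeta(F)=\overline{(\eta_\infty)}(F)$ for $F\in\mathcal{F}_\infty$ is lost for such a limit; setwise sequential compactness would require the measures $\eta T^{-i}$ to be uniformly dominated by a single measure on $\mathcal{F}$ (Dunford--Pettis), which is not available; and while a finitely additive, $T$-invariant cluster point always exists (as an ultrafilter limit in the dual of the bounded $\mathcal{F}$-measurable functions), the domination $\eta_\infty\ll\overline{(\eta_\infty)}$ on the tail $\sigma$-field gives no mechanism for upgrading it to countable additivity on $\mathcal{F}$. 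Note also that when $\eta$ \emph{is} AMS the setwise limit of $\zeta_n$ on $\mathcal{F}$ is $\overline{\eta}$ by definition, so ``extracting a setwise cluster point'' is essentially equivalent to the conclusion you are trying to reach; the phrase ``compactness/selection argument'' is standing in for the whole difficulty. To keep your architecture you would need an actual proof that a $T$-invariant probability on $\mathcal{F}_\infty$ extends to a $T$-invariant probability on $\mathcal{F}$; otherwise, drop the detour and argue as the paper does, by reversing the chain of equivalences through Lemmas \ref{LemmaTh7dot4Gray2009} and \ref{LemmaDominanceOnTailSigmaField}.
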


\begin{proof}
$\eta$ is AMS if and only if $\eta \ll^a \overline{\eta}$  if and only if (from Lemma \ref{LemmaDominanceOnTailSigmaField}) $\eta_\infty \ll \overline{\eta}_\infty$ if and only if $\eta_\infty$ is recurrent and AMS  (from Lemma \ref{LemmaTh7dot4Gray2009}).
\end{proof}

%\pagebreak
\section{Stability Properties of Channels}\label{SectionChannelsStabilityProperties}

Definitions of stability properties (stationarity, quasi-stationarity and a\-symptotically mean stationarity) of channels are reminded. A definition of recurrence for channels is proposed. Dealing with one-sided channels, a clear distinction is made between stationary channels and quasi-stationary channels. The terminology chosen here is different from the one used in \cite{Kakihara04} which names strictly stationary channels for which $\nu(x,T_B^{-1}G) = \nu(T_A x,G) \text{  }\mu \text{.a.e.}$ and stationary those named quasi-stationary here. The choice is made for the sake of consistency with \cite{Gray11}. The quasi-stationary mean of a channel is defined.

\subsection{Stationarity and Quasi-stationarity}

\begin{definition} \label{DefinitionStationaryChannel}
A channel $[A,\nu,B]$ is {\em stationary} with respect to a stationary source $[A,X]$ with distribution $\mu$ if 
$$
\forall G\in \mathcal{B}_{B^\mathcal{I}}, \nu(x,T_B^{-1}G) = \nu(T_A x,G)\text{ }\mu\text{-a.e.}
$$
A channel $[A,\nu,B]$ is stationary if it is stationary with respect to any stationary source.
\end{definition}

This implies that if $\mu$ is stationary and $\nu$ is stationary, $\mu\nu$ is stationary:  $\forall F\in \mathcal{B}_{A^\mathcal{I}}$ and $\forall G\in \mathcal{B}_{B^\mathcal{I}}$
\begin{eqnarray}
\mu\nu(T_A^{-1}F\times T_B^{-1}G) & = & \int_{T_A^{-1}F} \nu(x, T_B^{-1}G) d\mu \nonumber \\
		& = & \int_{T_A^{-1}F} \nu(T_A x, G) d\mu \nonumber \\
		& = & \int_{F} \nu(x, G) d\mu \nonumber \\
		& = & \mu\nu(F \times G) \nonumber
\end{eqnarray}
Since a joint process is stationary if and only if it is stationary on rectangles, $\mu\nu$ is a stationary probability.

\begin{definition} \label{DefinitionQuasiStationaryChannelStationaryChannel}
A channel $[A,\nu,B]$ is {\em quasi-stationary} with respect to a stationary source $[A,X]$ with distribution $\mu$ if the hookup $\mu\nu$ is stationary. A channel $[A,\nu,B]$ is quasi-stationary if it is quasi-stationary with respect to any stationary source. 
\end{definition}

Obviously a stationary channel is quasi-stationary.

\begin{proposition}\label{PropositionQuasiStationaryChannel}
Let $[A,X]$ be a source with distribution $\mu$ and $[A,\nu,B]$ a channel. Then the hookup $\mu\nu$ is stationary if and only if  
\begin{itemize}
	\item $\mu$ is stationary 
	\item and  $\forall F\in \mathcal{B}_{A^\mathcal{I}}$, $\forall G\in \mathcal{B}_{B^\mathcal{I}}$, 
		$$
		\int 1_F(T_A x) \nu(x,T_B^{-1}G) d\mu = \int 1_F(T_A x) \nu(T_A x,G) d\mu
		$$
\end{itemize}
As a consequence
$$
\forall G\in \mathcal{B}_{B^\mathcal{I}}, \nu(x,T_B^{-1}G) = \nu(T_A x,G) \text{  }\mu_{T_A^{-1}\mathcal{B}_{A^\mathcal{I}}}\text{-a.e.}
$$
where $\mu_{T_A^{-1}\mathcal{B}_{A^\mathcal{I}}}$ is the restriction of $\mu$ to the $\sigma$-field $T_A^{-1}\mathcal{B}_{A^\mathcal{I}}$.
\end{proposition}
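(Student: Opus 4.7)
The plan is to unpack stationarity of the hookup $\mu\nu$ on rectangles, use the definition of the shift $T_{AB}$, and then extract the two listed conditions one at a time.

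First, since a product measure is determined by its values on rectangles, $\mu\nu$ is stationary if and only if $\mu\nu(T_{AB}^{-1}(F\times G))=\mu\nu(F\times G)$ for every $F\in\mathcal{B}_{A^\mathcal{I}}$ and every $G\in\mathcal{B}_{B^\mathcal{I}}$. Using $T_{AB}^{-1}(F\times G)=T_A^{-1}F\times T_B^{-1}G$ and unfolding the hookup via its defining integral, this translates into
\[
\int 1_F(T_A x)\,\nu(x,T_B^{-1}G)\,d\mu \;=\; \int_F \nu(x,G)\,d\mu.
\]
Specializing $G=B^\mathcal{I}$ forces $\mu(T_A^{-1}F)=\mu(F)$ for every $F$, so $\mu$ is stationary. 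Granted stationarity of $\mu$, the change-of-variable identity $\int h\,d\mu=\int h\circ T_A\,d\mu$ applied to $h(x)=1_F(x)\nu(x,G)$ rewrites the right-hand side as $\int 1_F(T_A x)\,\nu(T_A x,G)\,d\mu$, which is the second bullet. The converse simply reverses this computation: given the two conditions, one recovers the rectangle identity, hence stationarity of $\mu\nu$.

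For the consequence, the integral identity
\[
\int_{T_A^{-1}F} \nu(x,T_B^{-1}G)\,d\mu \;=\; \int_{T_A^{-1}F} \nu(T_A x,G)\,d\mu
\]
holds over every generator $T_A^{-1}F$ of the sub-$\sigma$-field $T_A^{-1}\mathcal{B}_{A^\mathcal{I}}$, hence over every set in that sub-$\sigma$-field. Since $x\mapsto\nu(T_A x,G)$ is by construction $T_A^{-1}\mathcal{B}_{A^\mathcal{I}}$-measurable, the defining property of conditional expectation yields
\[
E_\mu\!\left[\nu(\cdot,T_B^{-1}G)\,\big|\,T_A^{-1}\mathcal{B}_{A^\mathcal{I}}\right] \;=\; \nu(T_A\,\cdot\,,G) \qquad \mu\text{-a.s.},
\]
which is precisely the meaning of the stated $\mu_{T_A^{-1}\mathcal{B}_{A^\mathcal{I}}}$-a.e.\ equality.

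The proof is essentially an unpacking exercise and I expect no serious obstacle. The only subtlety is interpretive: the function $\nu(x,T_B^{-1}G)$ is a priori only $\mathcal{B}_{A^\mathcal{I}}$-measurable, while $\nu(T_A x,G)$ is $T_A^{-1}\mathcal{B}_{A^\mathcal{I}}$-measurable, so the final a.e.\ equality must be read through the conditional-expectation lens (as above) rather than as a raw pointwise equality modulo a small set.
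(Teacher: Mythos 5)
Your proof is correct and follows essentially the same route as the paper's: restrict stationarity of $\mu\nu$ to rectangles, deduce stationarity of $\mu$ from the marginal, and apply the change-of-variable (transfer) identity to rewrite $\int_F\nu(x,G)\,d\mu$ as $\int 1_F(T_Ax)\,\nu(T_Ax,G)\,d\mu$, with the converse obtained by reversing the computation and invoking stationarity on rectangles. Your closing remark that the final $\mu_{T_A^{-1}\mathcal{B}_{A^\mathcal{I}}}$-a.e.\ equality must be read as a conditional-expectation identity (since $\nu(\cdot,T_B^{-1}G)$ is only $\mathcal{B}_{A^\mathcal{I}}$-measurable) is a correct and worthwhile clarification of a point the paper leaves implicit.
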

\begin{proof}
Assume that $\mu\nu$ is stationary. $\mu$ is the (input) marginal of the stationary probability $\mu\nu$ thus $\mu$ is stationary. 

$\forall F\in \mathcal{B}_A$ and $\forall G\in \mathcal{B}_B$, since $\mu\nu$ is stationary:
$$
\mu\nu(T_{AB}^{-1} F\times G)) = \mu\nu(F\times G) 
$$
$$
\int 1_F(T_A x) \nu(x,T_B^{-1}G) d\mu = \int 1_F(x) \nu(x,G) d\mu
$$
Thanks to the stationarity of $\mu$ and to the transfer theorem (or change of variable):
$$
\int 1_F(x) \nu(x,G) d\mu= \int 1_F(x) \nu(x,G) d\mu T_A^{-1} = \int 1_F(T_A x) \nu(T_A x,G) d\mu \nonumber
$$
 Thus 
$$
\int 1_F(T_A x) \nu(x,T_B^{-1}G) d\mu = \int 1_F(T_A x) \nu(T_A x,G) d\mu
$$ 
This proves the necessary condition. 
Sufficiency comes from the same calculations and from the fact that a joint process is stationary if and only if it is stationary on the set of rectangles (\cite{Gray09}).
\end{proof}

If the shift $T_A$ is invertible, then $T_A^{-1}\mathcal{B}_{A^\mathcal{I}}=\mathcal{B}_{A^\mathcal{I}}$. Then quasi-stationarity and stationarity are equivalent for two-sided channels, and this implies the stronger form of Proposition \ref{PropositionQuasiStationaryChannel}  established in \cite{FontanaGrayKieffer81} (Lemma 1):  for invertible shifts, $\mu\nu$ is stationary if and only if both $\mu$ and $\nu$ are stationary.

\subsection{Recurrence}

\begin{definition} \label{DefinitionRecurrentChannel}
A channel $[A,\nu,B]$ is recurrent  with respect to a recurrent source $[A,X]$ with distribution $\mu$ if the hookup $\mu\nu$ is recurrent. A channel is recurrent if it is recurrent w.r.t. any recurrent source.
\end{definition}

\begin{definition} \label{DefinitionIncompressibleChannel}
A channel $[A,\nu,B]$ is incompressible  with respect to a  incompressible  source $\mu$ if the hookup $\mu\nu$ is  incompressible. A channel is incompressible if it is incompressible w.r.t. any incompressible source.
\end{definition}

Recurrence and incompressibility are equivalent properties for sources, the same obviously holds for channels.

\begin{proposition}\label{PropositionRecurrentChannel2}
A channel $[A,\nu,B]$ is recurrent with respect to a  source $[A,X]$ with distribution $\mu$ if and only if $\mu$ is recurrent and $\nu(x,.)$ is recurrent $\mu$-a.e.
\end{proposition}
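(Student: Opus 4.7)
Plan: The equivalence splits into the two implications, since the definition of channel recurrence w.r.t. a source $\mu$ is that the hookup $\mu\nu$ is recurrent as a dynamical system under $T_{AB}$.

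\emph{Necessity} ($\mu\nu$ recurrent $\Rightarrow$ $\mu$ recurrent and $\nu(x,\cdot)$ recurrent $\mu$-a.e.): I would apply the recurrence of $\mu\nu$ to the rectangles $F\times B^\mathcal{I}$ and $A^\mathcal{I}\times G$. Using $T_{AB}^{-k}(F\times B^\mathcal{I}) = T_A^{-k}F\times B^\mathcal{I}$, the first reduces via Fubini to $\mu(F\setminus \cup_{k\geq 1}T_A^{-k}F) = 0$, giving recurrence of $\mu$. Similarly, the second reduces to $\int \nu(x, G\setminus \cup_{k\geq 1}T_B^{-k}G)\,d\mu = 0$, so $\nu(x, G\setminus \cup_{k\geq 1}T_B^{-k}G) = 0$ $\mu$-a.e.\ for each fixed $G$. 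To upgrade this ``$\mu$-a.e.\ per $G$'' to a single $\mu$-null set outside of which $\nu(x,\cdot)$ is recurrent for every $G$, I would invoke countable generation of $\mathcal{B}_{B^\mathcal{I}}$ (standardness of $B$), fix a countable generating algebra, take the union of the exceptional null sets, and extend to the full $\sigma$-field using that the class $\{G : \nu(x, G\setminus \cup_k T_B^{-k}G) = 0\}$ is closed under countable unions.

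\emph{Sufficiency} ($\mu$ recurrent and $\nu(x,\cdot)$ recurrent $\mu$-a.e.\ $\Rightarrow$ $\mu\nu$ recurrent): I would switch to the equivalent incompressibility characterization and fix an arbitrary $E$ satisfying $T_{AB}^{-1}E \subset E$, aiming for $\mu\nu(E\setminus T_{AB}^{-1}E) = 0$. The section-wise translation of the condition gives $T_B^{-1}E_{T_Ax}\subset E_x$ $\mu$-a.e., and iterating produces the decreasing chain $E_x\supset T_B^{-1}E_{T_Ax}\supset T_B^{-2}E_{T_A^2x}\supset\cdots$. Fubini then rewrites the target as
\[
\mu\nu(E\setminus T_{AB}^{-1}E) = \int \nu\bigl(x, E_x \setminus T_B^{-1}E_{T_Ax}\bigr)\,d\mu(x),
\]
so everything reduces to showing the integrand vanishes $\mu$-a.e.

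\emph{Main obstacle.} The difficulty will be that the section $E_{T_A^kx}$ varies with $k$, so incompressibility of $\nu(x,\cdot)$ cannot be invoked directly against a single fixed test set. My plan is to pass to the ``invariant core'' $\tilde E_x := \cap_{k\geq 0}T_B^{-k}E_{T_A^kx}$, telescope the difference $E_x\setminus \tilde E_x$ into the disjoint union $\sqcup_{k\geq 0}\bigl(T_B^{-k}E_{T_A^kx}\setminus T_B^{-(k+1)}E_{T_A^{k+1}x}\bigr)$, and combine recurrence of $\mu$ (to transport sections along the $T_A$-orbit back to a common reference) with incompressibility of $\nu(x,\cdot)$ applied to each increment. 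Standardness of the alphabets makes the needed measurable selections available; assembling the pieces gives that the $k=0$ increment has $\nu(x,\cdot)$-measure zero $\mu$-a.e., which is exactly what is required.
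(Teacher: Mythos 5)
Your necessity direction matches the paper's (test $\mu\nu$-recurrence on the rectangles $F\times B^{\mathcal{I}}$ and $A^{\mathcal{I}}\times G$), and your attention to upgrading ``$\mu$-a.e.\ per $G$'' to a single null set is welcome; note only that closure under countable unions of the class $\{G:\nu(x,G\setminus\cup_{k}T_B^{-k}G)=0\}$ does not by itself carry you from a countable generating algebra to the full $\sigma$-field (that class is not closed under complements or monotone limits), so you also need the outer-measure approximation step that the paper isolates in Lemma~\ref{LemmaRecurrenceOnRectangles}.

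The sufficiency direction has a genuine gap, and it sits exactly at the point you flag as the ``main obstacle.'' Incompressibility of $\nu(x,\cdot)$ is a statement about a \emph{fixed} set $G$ with $T_B^{-1}G\subset G$: it gives $\nu(x,G\setminus T_B^{-1}G)=0$. None of your increments $T_B^{-k}E_{T_A^kx}\setminus T_B^{-(k+1)}E_{T_A^{k+1}x}$ has this form, because $E_{T_A^{k}x}$ and $E_{T_A^{k+1}x}$ are unrelated sets, and recurrence of $\mu$ (returns of the $x$-orbit to subsets of $A^{\mathcal{I}}$) supplies no mechanism for ``transporting'' the section $E_{T_A^kx}$ back to a common reference. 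What the decreasing chain does give you is $\sum_{k}\nu(x,D_k(x))\leq 1$, hence $\nu(x,D_k(x))\to 0$; but without stationarity of $\mu\nu$ there is no way to infer that the $k=0$ term vanishes from the vanishing of the tail --- that inference is exactly Poincar\'e recurrence, which is what is being proved. The paper sidesteps the moving-section problem entirely: by Lemma~\ref{LemmaRecurrenceOnRectangles} (proved in the appendix by closing the class of incompressible sets under countable unions and then approximating an arbitrary measurable set in outer measure by countable unions of field elements), it suffices to check incompressibility on rectangles $F\times G$ with $T_A^{-1}F\subset F$ and $T_B^{-1}G\subset G$, where the section over every $x\in F$ is the same fixed set $G$ and the fiberwise and base incompressibilities apply directly. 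Your argument needs this reduction (or an equivalent substitute); as written, the key step is asserted rather than proved.
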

The proof relies on the following lemmas.
\begin{lemma}\label{LemmaRecurrentChannel1}
Let $O_x=\{ y \in B^\mathcal{I}/ (x,y)\in O \}$ be the section of $O$ at $x$. A channel $[A,\nu,B]$ is recurrent with respect to $\mu$ if and only if 
$$
\forall O \in \mathcal{B}_{A^\mathcal{I}\times B^\mathcal{I}}, \nu(x, O_x \setminus \bigcup_{i=1}^\infty (T_{AB}^{-i}O)_x) = 0 \text{  }\mu\text{-a.e.}
$$
\end{lemma}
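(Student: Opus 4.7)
The plan is to unpack both sides of the equivalence through the kernel representation of the hookup $\mu\nu$ and then exploit the nonnegativity of $\nu(x,\cdot)$ to turn a vanishing integral into a $\mu$-a.e. statement.

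First, by Definition \ref{DefinitionRecurrentChannel}, $[A,\nu,B]$ is recurrent w.r.t.\ $\mu$ iff $\mu\nu$ is recurrent as a dynamical system on $(A^\mathcal{I} \times B^\mathcal{I}, \mathcal{B}_{A^\mathcal{I}\times B^\mathcal{I}}, T_{AB})$, i.e. iff for every $O \in \mathcal{B}_{A^\mathcal{I}\times B^\mathcal{I}}$,
$$
\mu\nu\bigl(O \setminus \bigcup_{i=1}^\infty T_{AB}^{-i}O\bigr) = 0.
$$
The key step is to rewrite this quantity by using the standard kernel-integration formula: for any $E \in \mathcal{B}_{A^\mathcal{I}\times B^\mathcal{I}}$,
$$
\mu\nu(E) = \int \nu(x, E_x)\, d\mu(x),
$$
where $E_x$ denotes the $x$-section. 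This extends the rectangle definition of the hookup by the usual monotone class / $\pi$-$\lambda$ argument, and is well-defined under the paper's standing assumption that the alphabets are standard (so that sections are measurable and $x \mapsto \nu(x, E_x)$ is measurable).

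Next I would observe that taking sections commutes with countable unions and set differences, so
$$
\bigl(O \setminus \bigcup_{i=1}^\infty T_{AB}^{-i}O\bigr)_x \;=\; O_x \setminus \bigcup_{i=1}^\infty (T_{AB}^{-i}O)_x.
$$
Substituting into the kernel formula gives
$$
\mu\nu\bigl(O \setminus \bigcup_{i=1}^\infty T_{AB}^{-i}O\bigr) = \int \nu\bigl(x,\, O_x \setminus \bigcup_{i=1}^\infty (T_{AB}^{-i}O)_x\bigr)\, d\mu(x).
$$
Since the integrand is nonnegative, this integral is zero iff it vanishes $\mu$-a.e., which is exactly the stated section condition. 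Running the equivalence over all $O \in \mathcal{B}_{A^\mathcal{I} \times B^\mathcal{I}}$ yields both implications at once.

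The only real obstacle is the kernel-integration identity for non-rectangle $E$, which is not explicitly stated earlier in the excerpt; I would simply cite it as a consequence of the standardness of the alphabets and the uniqueness of the extension of $\mu\nu$ from rectangles. Everything else is a routine manipulation of sections and a standard fact about integrals of nonnegative functions.
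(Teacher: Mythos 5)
Your proposal is correct and follows essentially the same route as the paper's proof: both reduce the recurrence of $\mu\nu$ to the vanishing of $\int \nu\bigl(x,(O\setminus\bigcup_{i\geq 1}T_{AB}^{-i}O)_x\bigr)\,d\mu$, use the fact that sections commute with countable unions and set differences, and conclude via nonnegativity of the integrand. Your explicit remark that the kernel-integration identity $\mu\nu(E)=\int\nu(x,E_x)\,d\mu$ for non-rectangle $E$ requires a monotone-class extension is a point the paper uses silently, so no gap.
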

\begin{proof}[Proof of Lemma \ref{LemmaRecurrentChannel1}]
Since the section of an union is the union of sections and since the section of a set-difference is the set-difference of the sections:
\begin{eqnarray}
   \mu\nu(O \setminus \bigcup_{i=1}^\infty (T_{AB}^{-i}O)) & = & 0 \nonumber \\
\Leftrightarrow \int \nu(x,(O \setminus \bigcup_{i=1}^\infty (T_{AB}^{-i}O))_x) d\mu & = & 0 \nonumber \\
\Leftrightarrow \nu(x,(O \setminus \bigcup_{i=1}^\infty (T_{AB}^{-i}O))_x) d\mu & = & 0 \text{  }\mu\text{.-a.e.}\nonumber \\
\Leftrightarrow \nu(x,O_x \setminus \bigcup_{i=1}^\infty (T_{AB}^{-i}O)_x)d\mu & = & 0 \text{  }\mu\text{-a.e.}\nonumber 
\end{eqnarray}
\end{proof}

\begin{lemma}\label{LemmaRecurrenceOnRectangles}
Let $\mathcal{G}$ be the set of rectangles $F \times G$ of $A^\mathcal{I}\times B^\mathcal{I}$, $\mathcal{F}$ the field generated by $\mathcal{G}$. Let $\eta$ be a probability on $(A^\mathcal{I}\times B^\mathcal{I}, \mathcal{B}_{A^\mathcal{I}\times B^\mathcal{I}})$. Then the following statements are equivalent:
\begin{enumerate}
	\item \label{ItemRecurrenceOnGeneratingSets} for any $F\times G \in \mathcal{G}$, $\eta(F\times G \setminus \cup_{k\geq 1}T_{AB}^{-k}F\times G)=0$.
	\item \label{ItemRecurrenceOnCountableUnionsOfFieldSets} For any countable family $(R_i)_{i\geq0}$ of elements of the field $\mathcal{F}$, $\eta( (\cup_{i\geq0} R_i) \setminus \cup_{k\geq 1}T_{AB}^{-k}(\cup_{i\geq0} R_i))=0$.
	\item \label{ItemRecurrence} The dynamical system $(A^\mathcal{I}\times B^\mathcal{I}, \mathcal{B}_{A^\mathcal{I}\times B^\mathcal{I}}, T_{AB} , \eta)$ is recurrent.
	\item \label{ItemIncompressibilityOnGeneratingSets} For any $F\times G \in \mathcal{G}$ such that $T_{AB}^{-1}F\times G \subset F\times G$, $\eta(F\times G \setminus T_{AB}^{-1}F\times G)=0$.
	\item \label{ItemIncompressibilityOnCountableUnionsOfFieldSets} Let $(R_i)_{i\geq0}$ a countable family of elements of the field $\mathcal{F}$ such that $\forall i$ $T_{AB}^{-1}R_i \subset R_i$, then $\eta( (\cup_{i\geq0} R_i) \setminus T_{AB}^{-1}(\cup_{i\geq0} R_i))=0$ 
	\item \label{ItemIncompressibility} The dynamical system $(A^\mathcal{I}\times B^\mathcal{I}, \mathcal{B}_{A^\mathcal{I}\times B^\mathcal{I}}, T_{AB} , \eta )$ is incompressible.
\end{enumerate}
\end{lemma}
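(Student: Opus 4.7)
My plan is to prove the six conditions equivalent via a cyclic chain of implications built from three tools: closure of the classes of recurrent and incompressible sets under countable unions, an algebraic observation identifying recurrence with incompressibility on unions $\cup_i R_i$ whenever each $R_i$ satisfies $T_{AB}^{-1}R_i \subset R_i$, and the \cite{Gray09} equivalence of recurrence and incompressibility on the full $\sigma$-field of the dynamical system $(A^\mathcal{I}\times B^\mathcal{I}, \mathcal{B}_{A^\mathcal{I}\times B^\mathcal{I}}, T_{AB}, \eta)$.

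The trivial class inclusions give (3) $\Rightarrow$ (2) $\Rightarrow$ (1) and (6) $\Rightarrow$ (5) $\Rightarrow$ (4), and (3) $\Leftrightarrow$ (6) is immediate from \cite{Gray09}. I would then prove (1) $\Rightarrow$ (2) and (4) $\Rightarrow$ (5) by closure of the relevant classes under countable unions: any $x$ in $(\cup_i F_i) \setminus \cup_{k \geq 1} T_{AB}^{-k}(\cup_i F_i)$ must lie in some $F_{i_0}$ and satisfy $T_{AB}^k x \notin F_j$ for every $j$ and every $k \geq 1$, hence $x \in F_{i_0} \setminus \cup_{k \geq 1} T_{AB}^{-k} F_{i_0}$, a null set by hypothesis; $\sigma$-additivity over $i_0$ concludes. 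For (2) $\Rightarrow$ (5), iterating $T_{AB}^{-1}R_i \subset R_i$ makes $(T_{AB}^{-k}\cup_i R_i)_{k \geq 0}$ a decreasing chain whose $k \geq 1$ union equals $T_{AB}^{-1}\cup_i R_i$, so the recurrence statement of (2) applied to $\cup_i R_i$ reads exactly as the incompressibility statement of (5).

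To close the cycle, the crucial remaining step is (5) $\Rightarrow$ (6), extending incompressibility from $T_{AB}^{-1}$-invariant countable unions of field elements to all $T_{AB}^{-1}$-invariant measurable sets. I would exploit that the class $\mathcal{I}_G = \{F \in \mathcal{B}_{A^\mathcal{I}\times B^\mathcal{I}} : T_{AB}^{-1}F \subset F \text{ and } \eta(F \setminus T_{AB}^{-1}F) = 0\}$ is closed under countable unions and countable intersections (the intersection case via $\cap_i F_i \setminus T_{AB}^{-1}\cap_i F_i \subset \cup_i(F_i \setminus T_{AB}^{-1}F_i)$), and then approximate a generic $T_{AB}^{-1}$-invariant $F$ by rectangles $F_0 \times G_0$ satisfying $T_A^{-1}F_0 \subset F_0$ and $T_B^{-1}G_0 \subset G_0$, using the product structure of $\mathcal{G}$ together with the standard-Borel hypothesis. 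The hard part is precisely this last approximation: standard outer field-approximants $R_n$ of $F$ do not inherit $T_{AB}^{-1}$-invariance, and their saturations $\cup_k T_{AB}^{-k}R_n$ can overshoot $F$ uncontrollably when $\eta$ is not stationary, so the technical crux is to construct invariant approximants from inside $F$ using the decreasing chain $(T_{AB}^{-k}F)_{k \geq 0}$ and the product structure of the $\sigma$-field.
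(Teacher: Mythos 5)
Your easy implications, the two closure-under-countable-union arguments behind (1) $\Rightarrow$ (2) and (4) $\Rightarrow$ (5), the observation that (2) applied to an invariant union collapses to (5), and the appeal to \cite{Gray09} for (3) $\Leftrightarrow$ (6) are all correct and coincide with the paper's ingredients (its Lemmas on countable unions of recurrent, resp.\ incompressible, sets). But your cycle closes only through the step (5) $\Rightarrow$ (6), and that step is not proved: after noting that the class $\mathcal{I}_G$ of invariant incompressible sets is closed under countable unions and intersections, you explicitly concede that you do not know how to produce $T_{AB}^{-1}$-invariant field-level approximants of a general invariant $F$. This is not a deferrable technicality; it is the entire content of the lemma (without it your implications only show that (3) and (6) imply the other four statements, never the converse), and the obstruction you name is real: the side condition $T_{AB}^{-1}F \subset F$ is not stable under outer approximation, and $\mathcal{I}_G$ is neither a monotone class nor a $\lambda$-system on all of $\mathcal{B}_{A^\mathcal{I}\times B^\mathcal{I}}$, so no generated-$\sigma$-field argument is available. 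Invoking the standard-alphabet hypothesis does not help here; the paper's proof never uses it for this lemma.

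The paper avoids the problem by making the field-to-$\sigma$-field passage at the level of \emph{recurrence}, where there is no invariance hypothesis on the set being tested. First, (5) $\Rightarrow$ (2) is obtained by saturation rather than approximation: for a countable union $E$ of field elements, $E^* = \cup_{k\geq 1}T_{AB}^{-k}E$ is again a countable union of field elements (preimages of rectangles are rectangles), and $E' = E \cup E^*$ satisfies $T_{AB}^{-1}E' \subset E^* \subset E'$, so applying (5) to $E'$ yields $\eta(E \setminus E^*)=0$, i.e.\ recurrence of $E$. Then (2) $\Rightarrow$ (3) uses the Carath\'eodory outer-measure characterization of $\eta$: given an arbitrary $O$ and $\epsilon>0$, cover $O$ and $O^c$ by countable unions $\alpha$ and $\beta$ of field elements with $\eta(\alpha \setminus \beta^c)<\epsilon$; since $\beta^c \subset O \subset \alpha$, one gets $O \setminus \cup_{k\geq 1}T_{AB}^{-k}O \subset (\beta^c \setminus \cup_{k\geq 1}T_{AB}^{-k}\beta^c) \cup (\alpha \setminus \beta^c)$, the first set being null by the field-level recurrence, whence $\eta(O \setminus \cup_{k\geq 1}T_{AB}^{-k}O)<\epsilon$ for every $\epsilon$. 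Finally (3) $\Leftrightarrow$ (6) delivers incompressibility on the whole $\sigma$-field. If you want to salvage your plan, replace your (5) $\Rightarrow$ (6) by these two steps; trying to carry incompressibility itself across the approximation is exactly the wrong place to cross.
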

\begin{proof}[Proof of Lemma \ref{LemmaRecurrenceOnRectangles}]
See \ref{SectionProofsOfLemmas}.
\end{proof}
\begin{proof}[Proof of Proposition \ref{PropositionRecurrentChannel2}]
Assume that $\mu\nu$ is recurrent. Let $F\in \mathcal{B}_{A^\mathcal{I}}$ such that $T_A^{-1}F \subset F$. Then $T_A^{-1}F \times T_B^{-1}B^\mathcal{I }\subset F\times B^\mathcal{I }$. $\mu\nu$ is recurrent, equivalently incompressible. Then:
$$
\mu(T_A^{-1}F)=\mu\nu(T_A^{-1}F \times T_B^{-1}B^\mathcal{I })=\mu\nu( F\times B^\mathcal{I })=\mu(F)
$$
Thus $\mu$ is incompressible, equivalently recurrent.

By Lemma \ref{LemmaRecurrentChannel1} applied to sets $O=A^\mathcal{I}\times G$ , $\nu(x,.)$ is recurrent $\mu$-a.e.

Assume now that $\mu$ is recurrent and $\nu(x,.)$ is recurrent $\mu$-a.e. From Lemma \ref{LemmaRecurrenceOnRectangles}, it is enough to prove recurrence or incompressibility for rectangles. For a rectangle $F \times G \in \mathcal{B}_{A^\mathcal{I}} \times \mathcal{B}_{B^\mathcal{I}}$ such that $T_A^{-1}F \subset F$ and $T_B^{-1}G \subset G$:
$$
\mu\nu(T_A^{-1}F \times T_B^{-1}G) =\int_{T_A^{-1}F} \nu(x, T_B^{-1}G) d\mu
$$
$\nu(x,.)$ is recurrent thus incompressible  $\mu$-a.e. then:
$$
\mu\nu(T_A^{-1}F \times T_B^{-1}G) =\int_{T_A^{-1}F} \nu(x, G) d\mu
$$
$T_A^{-1}F\subset F$ and, since $\mu$ is recurrent, $\mu(T_A^{-1}F)=\mu(F)$ then:
$$
\mu\nu(T_A^{-1}F \times T_B^{-1}G) =\int_{F} \nu(x, G) d\mu = \mu\nu(F\times G)
$$

\end{proof}

\subsection{Asymptotically Mean Stationarity}

\begin{definition}\label{DefinitionAMSChannel}
A channel $[A,\nu,B]$ is  AMS with respect to an AMS  source $[A,X]$ with distribution $\mu$ if the hookup $\mu\nu$ is  AMS. A channel is AMS if it is AMS with respect to any AMS source.
\end{definition}

A detailed analysis of AMS channels is proposed in section \ref{SectionOneSidedAMSChannels}. This analysis relies on properties of channels which are both recurrent and AMS, studied in section \ref{SectionRAMSChannels}. Such channels will be called R-AMS. Recurrent and AMS sources will also be called R-AMS.

\begin{definition}\label{DefinitionRAMSChannel}
A channel $[A,\nu,B]$ is  recurrent and AMS (R-AMS) with respect to a recurrent and AMS (R-AMS) source $[A,X]$ with distribution $\mu$ if the hookup $\mu\nu$ is  recurrent and AMS (R-AMS). A channel is R-AMS if it is R-AMS with respect to any R-AMS source.
\end{definition}

Let $\mu$ be the distribution of an AMS source with stationary mean $\overline{\mu}$: $\mu\ll^a \overline{\mu}$. Let $\nu$ be an AMS channel. Then $\mu\nu$ is AMS: $\mu\nu \ll^a \overline{\mu\nu}$. The "input marginal" of $\overline{\mu\nu}$ is $\overline{\mu}$, hence there exists a (unique modulo $\overline{\mu}$) channel $\overline{\nu}_\mu$ quasi-stationary w.r.t $\overline{\mu}$ such that $\overline{\mu\nu}=\overline{\mu}\ \overline{\nu}_\mu$. 

\begin{definition}
Let $[A,\nu,B]$ be an AMS channel and $[A,X]$ an AMS source with distribution $\mu$. The channel $[A,\overline{\nu}_{\mu},B]$ such that $\overline{\mu\nu} = \overline{\mu}\ \overline{\nu}_\mu$ is the quasi-stationary mean of the AMS channel $\nu$ with respect to the AMS source distribution $\mu$.
\end{definition}

{\bf Remark:}
\begin{itemize}
	\item Let $\mu$ be the distribution of  an R-AMS source with stationary mean $\overline{\mu}$ and $\nu$  an R-AMS channel. By Lemma \ref{LemmaTh7dot4Gray2009}, $\mu\ll\overline{\mu}$ and $\mu\nu \ll \overline{\mu\nu}$. In this case, the quasi-stationary mean $\overline{\nu}_\mu$ of $\nu$ with respect to $\mu$ is such that  $\mu\nu \ll \overline{\mu}\ \overline{\nu}_\mu$. 
	\item  given an R-AMS source distribution $\mu$ and an R-AMS channel $\nu$, from Lemma \ref{Lemma2Fontana} given below, $\mu \ll \overline{\mu} \Rightarrow \mu\nu \ll \overline{\mu}\nu$ and since $\nu$ is R-AMS $\mu\nu \ll \overline{\mu}\nu \ll \overline{\overline{\mu}\nu}= \overline{\mu}\ \overline{\nu_{\overline{\mu}}}$. According to \cite{FontanaGrayKieffer81}, in general $\overline{\nu}_\mu \neq \ \overline{\nu_{\overline{\mu}}}$. It will be shown in section~\ref{SectionQuasiStationaryMeanErgodicRAMSChannel} that $\overline{\nu}_\mu = \ \overline{\nu_{\overline{\mu}}}$ when considering ergodic R-AMS sources and ergodic R-AMS channels.
\end{itemize}

%\pagebreak
\section{Classification of channels}\label{SectionClassificationOfChannels}

It is possible to hierarchically classify channels according to stability properties. It has been mentioned above that stationary channels are quasi-stationary. It is proved below that quasi-stationary channels are R-AMS and R-AMS channels are AMS.

\begin{proposition}\label{PropositionChannelClassification}
Let $[A,\nu,B]$ be a channel and $[A,X]$ a source with distribution $\mu$.
\begin{enumerate}
	\item \label{PropositionChannelClassificationFirstItem} if $\nu$ is quasi-stationary and $\mu$ is R-AMS then $\mu\nu$ is R-AMS.
	\item \label{PropositionChannelClassificationSecondItem}if $\nu$ is R-AMS and $\mu$ is AMS then $\mu\nu$ is AMS.
	\item \label{PropositionChannelClassificationThirdItem}The set of stationary channels is a subset of quasi-stationary channels which is a subset of R-AMS channels which is a subset of AMS channels.
\end{enumerate}
\end{proposition}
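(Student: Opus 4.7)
The plan is to prove items (\ref{PropositionChannelClassificationFirstItem}) and (\ref{PropositionChannelClassificationSecondItem}) by direct dominance arguments, and then to read off item (\ref{PropositionChannelClassificationThirdItem}) from the channel-class definitions. For (\ref{PropositionChannelClassificationFirstItem}), the idea is to exhibit a stationary probability that outright dominates $\mu\nu$. Since $\mu$ is R-AMS, Lemma \ref{LemmaTh7dot4Gray2009} gives $\mu \ll \overline{\mu}$, and since $\nu$ is quasi-stationary and $\overline{\mu}$ is stationary, $\overline{\mu}\nu$ is stationary by Definition \ref{DefinitionQuasiStationaryChannelStationaryChannel}. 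Applying the forward-referenced implication $\mu \ll \overline{\mu} \Rightarrow \mu\nu \ll \overline{\mu}\nu$ (Lemma \ref{Lemma2Fontana}) gives $\mu\nu \ll \overline{\mu}\nu$ with the dominating probability stationary. This single dominance yields both required properties: the corresponding asymptotic dominance by a stationary probability forces $\mu\nu$ to be AMS, while dominance of $\mu\nu$ by an incompressible probability transfers incompressibility (every $F$ with $T_{AB}^{-1}F \subset F$ satisfies $\overline{\mu}\nu(F\setminus T_{AB}^{-1}F)=0$, hence $\mu\nu(F \setminus T_{AB}^{-1}F)=0$), so $\mu\nu$ is recurrent; together $\mu\nu$ is R-AMS.

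For (\ref{PropositionChannelClassificationSecondItem}) I would route the argument through the tail $\sigma$-field. By Lemma \ref{LemmaAMSProcess}, $\mu\nu$ is AMS if and only if $(\mu\nu)_\infty$ is R-AMS on the joint tail $(\mathcal{B}_{A^\mathcal{I}\times B^\mathcal{I}})_\infty$. Since $\overline{\mu}$ is stationary (hence R-AMS) and $\nu$ is R-AMS, $\overline{\mu}\nu$ is R-AMS by Definition \ref{DefinitionRAMSChannel}; Lemma \ref{LemmaTh7dot4Gray2009} then gives $\overline{\mu}\nu \ll \overline{\overline{\mu}\nu}$, and by Lemma \ref{LemmaStationaryMeanRestriction} the restriction $(\overline{\overline{\mu}\nu})_\infty$ is a stationary probability on the joint tail. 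The key step is to establish $(\mu\nu)_\infty \ll (\overline{\mu}\nu)_\infty$ on the joint tail: I would start from $\mu \ll^a \overline{\mu}$, convert it into $\mu_\infty \ll \overline{\mu}_\infty$ on the input tail via Lemma \ref{LemmaDominanceOnTailSigmaField}, and then propagate this absolute continuity through the channel to the joint tail by combining Lemma \ref{Lemma2Fontana} with the channel restriction supplied by Lemma \ref{LemmaChannelRestriction}. Chaining $(\mu\nu)_\infty \ll (\overline{\mu}\nu)_\infty \ll (\overline{\overline{\mu}\nu})_\infty$ then exhibits $(\mu\nu)_\infty$ as dominated by a stationary probability, which is simultaneously AMS (by asymptotic dominance) and recurrent (by dominance of an incompressible probability), i.e.\ R-AMS, so $\mu\nu$ is AMS.

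Item (\ref{PropositionChannelClassificationThirdItem}) then follows by unwinding the channel-class definitions: the inclusion of stationary into quasi-stationary channels was recorded right after Definition \ref{DefinitionQuasiStationaryChannelStationaryChannel}; applying (\ref{PropositionChannelClassificationFirstItem}) to an arbitrary R-AMS source shows that a quasi-stationary channel is R-AMS per Definition \ref{DefinitionRAMSChannel}, and applying (\ref{PropositionChannelClassificationSecondItem}) to an arbitrary AMS source shows that an R-AMS channel is AMS per Definition \ref{DefinitionAMSChannel}. The main obstacle lies in (\ref{PropositionChannelClassificationSecondItem}), in the absolute-continuity transfer from $\mu_\infty \ll \overline{\mu}_\infty$ on the input tail to $(\mu\nu)_\infty \ll (\overline{\mu}\nu)_\infty$ on the joint tail: the joint tail $\sigma$-field may strictly contain the product of the input and output tails, so the restricted kernel of Lemma \ref{LemmaChannelRestriction} does not automatically yield dominance on all of $(\mathcal{B}_{A^\mathcal{I}\times B^\mathcal{I}})_\infty$ without a careful approximation of joint-tail sets by preimages under $T_{AB}^{-n}$.
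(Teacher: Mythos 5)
Your proposal follows essentially the same route as the paper: item (\ref{PropositionChannelClassificationFirstItem}) via $\mu \ll \overline{\mu} \Rightarrow \mu\nu \ll \overline{\mu}\nu$ with $\overline{\mu}\nu$ stationary, item (\ref{PropositionChannelClassificationSecondItem}) via the tail-$\sigma$-field chain $\mu_\infty\nu_\infty \ll \overline{\mu}_\infty\nu_\infty \ll \overline{\mu}_\infty\,\overline{\nu}_{\overline{\mu}_\infty}$ combined with Lemmas \ref{LemmaDominanceOnTailSigmaField} and \ref{LemmaAMSProcess}, and item (\ref{PropositionChannelClassificationThirdItem}) by specializing the first two to arbitrary sources of the relevant class. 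The only point worth noting is that the obstacle you flag at the end---identifying $(\mu\nu)_\infty$ with $\mu_\infty\nu_\infty$ when the joint tail $\sigma$-field may strictly contain the product of the marginal tails---is passed over silently in the paper's own proof as well, so you are not missing any ingredient that the paper actually supplies.
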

The proof of Proposition \ref{PropositionChannelClassification} relies on the following lemma which is Lemma 2 of \cite{FontanaGrayKieffer81}.
\begin{lemma}\label{Lemma2Fontana}
Let $\mu$ and $\eta$ be the distributions of two sources on the same alphabet $A$. Then,  for any channel $[A,\nu,B]$ 
$$
\mu \ll \eta \Rightarrow \mu\nu \ll \eta\nu
$$
\end{lemma}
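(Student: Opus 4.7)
The plan is to reduce the absolute-continuity statement to a pointwise statement about sections, and then transfer from $\eta$-a.e. to $\mu$-a.e.\ using the hypothesis $\mu \ll \eta$.

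First, I would extend the definition of the hookup from rectangles to general measurable sets. For any rectangle $F \times G$, the definition gives
\[
\mu\nu(F\times G) \;=\; \int_F \nu(x,G)\,d\mu \;=\; \int \nu(x,(F\times G)_x)\,d\mu,
\]
where $(F\times G)_x = G$ if $x\in F$ and $\emptyset$ otherwise. A standard monotone class / $\pi$--$\lambda$ argument on $\mathcal{B}_{A^\mathcal{I}\times B^\mathcal{I}}$ then shows that for every $H \in \mathcal{B}_{A^\mathcal{I}\times B^\mathcal{I}}$ the section $H_x = \{y\in B^\mathcal{I} : (x,y)\in H\}$ lies in $\mathcal{B}_{B^\mathcal{I}}$, the map $x\mapsto \nu(x,H_x)$ is measurable, and
\[
\mu\nu(H) \;=\; \int \nu(x,H_x)\,d\mu,\qquad \eta\nu(H) \;=\; \int \nu(x,H_x)\,d\eta.
\]
This is the key representation that lets the proof go through for arbitrary measurable $H$, not merely rectangles.

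Next, suppose $H \in \mathcal{B}_{A^\mathcal{I}\times B^\mathcal{I}}$ satisfies $\eta\nu(H)=0$. Since $\nu(x,H_x)\geq 0$, the vanishing of the integral against $\eta$ forces $\nu(x,H_x)=0$ for $\eta$-a.e.\ $x$. The set $N = \{x : \nu(x,H_x) > 0\}$ is measurable (as the measurability of $x\mapsto \nu(x,H_x)$ was established above) and $\eta(N)=0$. Invoking the hypothesis $\mu \ll \eta$ yields $\mu(N)=0$, so $\nu(x,H_x)=0$ also for $\mu$-a.e.\ $x$. Integrating against $\mu$ then gives
\[
\mu\nu(H) \;=\; \int \nu(x,H_x)\,d\mu \;=\; 0,
\]
which is exactly $\mu\nu \ll \eta\nu$.

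The only substantive step is the monotone class extension of the section formula to all of $\mathcal{B}_{A^\mathcal{I}\times B^\mathcal{I}}$; once that is in hand, the transfer from $\eta$-a.e.\ to $\mu$-a.e.\ is immediate from absolute continuity. I expect the section-measurability bookkeeping to be the main (and only) technical point, and it is routine on the standard sequence spaces assumed in the paper.
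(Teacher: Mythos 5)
Your proof is correct, and it is the standard argument: the paper itself gives no proof of this lemma, merely citing it as Lemma~2 of Fontana--Gray--Kieffer (1981), and the section-formula argument you give (extend $\mu\nu(H)=\int \nu(x,H_x)\,d\mu$ from rectangles to all of $\mathcal{B}_{A^\mathcal{I}\times B^\mathcal{I}}$ by a monotone class argument, then transfer the $\eta$-null set $\{x:\nu(x,H_x)>0\}$ to a $\mu$-null set via $\mu\ll\eta$) is precisely the proof in that reference. No gaps; the monotone-class step you flag as the only technical point is indeed routine since $\nu(x,\cdot)$ is a probability measure, so the class of good sets is closed under complements and increasing limits.
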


\begin{proof}[Proof of Proposition \ref{PropositionChannelClassification}]
~\\
\begin{description}
	\item[\ref{PropositionChannelClassificationFirstItem})] Let $\nu$ be a quasi-stationary channel and $\mu$ the distribution of an R-AMS source. Then $\mu \ll \overline{\mu}$. By Lemma \ref{Lemma2Fontana}, this implies that $\mu\nu \ll \overline{\mu}\nu$. $\mu\nu$ is dominated by the stationary probability $\overline{\mu}\nu$ then, by Lemma \ref{LemmaTh7dot4Gray2009}, $\mu\nu$ is R-AMS.
	\item[\ref{PropositionChannelClassificationSecondItem})] Let $\nu$ be an R-AMS channel and $\mu$  the distribution of an AMS source. Then $\mu \ll^a \overline{\mu}$. By Lemma \ref{LemmaAMSProcess}, $\mu_\infty \ll \overline{\mu}_\infty$. By Lemma \ref{Lemma2Fontana}, this implies that $\mu_\infty\nu_\infty \ll \overline{\mu}_\infty\nu_\infty$. $ \overline{\mu}_\infty\nu_\infty \ll \overline{\mu}_\infty\ \overline{\nu}_{\overline{\mu}_\infty}$ since $\nu$ is R-AMS and $\overline{\mu}$ stationary. Then $\mu\nu \ll^a\overline{\mu}\ \overline{\nu}_{\overline{\mu}}$,  then $\mu\nu$ is AMS.
	\item [\ref{PropositionChannelClassificationThirdItem})] This is a direct consequence of (\ref{PropositionChannelClassificationFirstItem}),  (\ref{PropositionChannelClassificationSecondItem}) and the fact that a stationary channel is quasi-stationary.
\end{description}
\end{proof}

%\pagebreak
\section{R-AMS channels}\label{SectionRAMSChannels}

In this section, necessary and sufficient conditions for a channel to be R-AMS are given: the R-AMS property of a channel needs to be checked only on stationary sources and a channel is R-AMS if and only if it is dominated by a quasi-stationary channel. A sufficient condition is also proved: if a channel is made of a collection of probabilities $\nu(x,.)$, each R-AMS $\mu$-a.s. for a stationary $\mu$, then the channel is R-AMS w.r.t. $\mu$

\begin{proposition}\label{PropositionNecessarySufficientConditionRAMSChannel}
The following statements are equivalent:
\begin{enumerate}
	\item \label{PropositionNSCondRAMSChlFirstStatement} the channel $[A,\nu,B]$ is R-AMS
	\item \label{PropositionNSCondRAMSChlSecondStatement} the channel $[A,\nu,B]$ is R-AMS with respect to any stationary source
	\item \label{PropositionNSCondRAMSChlThirdStatement} for any stationary source $[A,X]$ with distribution $\mu$ there exists a quasi-stationary channel $\overline{\nu}_\mu$ such that $\nu(x,.) \ll \overline{\nu}_{\mu}(x,.)$ $\mu$-a.e.
	\item \label{PropositionNSCondRAMSChlFourthStatement} for any R-AMS source $[A,X]$ with distribution $\mu$ there exists a quasi-stationary channel $\overline{\nu}_\mu$ such that $\nu(x,.) \ll \overline{\nu}_{\mu}(x,.)$ $\overline{\mu}$-a.e. and  $\mu$-a.e	
\end{enumerate}
\end{proposition}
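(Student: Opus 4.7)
The plan is to close the cycle $(\ref{PropositionNSCondRAMSChlFirstStatement}) \Rightarrow (\ref{PropositionNSCondRAMSChlSecondStatement}) \Rightarrow (\ref{PropositionNSCondRAMSChlThirdStatement}) \Rightarrow (\ref{PropositionNSCondRAMSChlFourthStatement}) \Rightarrow (\ref{PropositionNSCondRAMSChlFirstStatement})$. The first implication is immediate because every stationary source is R-AMS.

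For $(\ref{PropositionNSCondRAMSChlSecondStatement}) \Rightarrow (\ref{PropositionNSCondRAMSChlThirdStatement})$, fix a stationary source with distribution $\mu$. By hypothesis $\mu\nu$ is R-AMS, so Lemma~\ref{LemmaTh7dot4Gray2009} gives $\mu\nu \ll \overline{\mu\nu}$. As $\mu$ is stationary, it is the input marginal of $\overline{\mu\nu}$, so the quasi-stationary mean furnishes a channel $\overline{\nu}_\mu$, quasi-stationary with respect to $\mu$, with $\overline{\mu\nu} = \mu\overline{\nu}_\mu$. The task then reduces to upgrading the joint domination $\mu\nu \ll \mu\overline{\nu}_\mu$ into the pointwise one $\nu(x,\cdot) \ll \overline{\nu}_\mu(x,\cdot)$ for $\mu$-almost every $x$. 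Writing $h = d(\mu\nu)/d(\mu\overline{\nu}_\mu)$ and disintegrating $\mu\overline{\nu}_\mu$ along its kernel, I would show that for each fixed $G \in \mathcal{B}_{B^\mathcal{I}}$, $\nu(x,G) = \int_G h(x,y)\,\overline{\nu}_\mu(x,dy)$ holds $\mu$-a.e. Countable generation of $\mathcal{B}_{B^\mathcal{I}}$ (from the standard Borel assumption on $B$) then collapses the $G$-indexed exceptional null sets into a single $\mu$-null set, on whose complement $\nu(x,\cdot) \ll \overline{\nu}_\mu(x,\cdot)$.

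For $(\ref{PropositionNSCondRAMSChlThirdStatement}) \Rightarrow (\ref{PropositionNSCondRAMSChlFourthStatement})$, given an R-AMS source $\mu$, apply (\ref{PropositionNSCondRAMSChlThirdStatement}) to the stationary $\overline{\mu}$ to obtain a quasi-stationary $\overline{\nu}_{\overline{\mu}}$ with $\nu(x,\cdot) \ll \overline{\nu}_{\overline{\mu}}(x,\cdot)$ $\overline{\mu}$-a.e.; set $\overline{\nu}_\mu := \overline{\nu}_{\overline{\mu}}$. Since $\mu$ is R-AMS, Lemma~\ref{LemmaTh7dot4Gray2009} gives $\mu \ll \overline{\mu}$, and the pointwise domination therefore also holds $\mu$-a.e. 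For $(\ref{PropositionNSCondRAMSChlFourthStatement}) \Rightarrow (\ref{PropositionNSCondRAMSChlFirstStatement})$, let $\mu$ be R-AMS with $\overline{\nu}_\mu$ as in (\ref{PropositionNSCondRAMSChlFourthStatement}). Quasi-stationarity of $\overline{\nu}_\mu$ together with stationarity of $\overline{\mu}$ makes $\overline{\mu}\,\overline{\nu}_\mu$ stationary. A section argument then yields $\mu\nu \ll \overline{\mu}\,\overline{\nu}_\mu$: if $\overline{\mu}\,\overline{\nu}_\mu(E)=0$, then $\overline{\nu}_\mu(x,E_x)=0$ $\overline{\mu}$-a.e., hence $\mu$-a.e.~thanks to $\mu \ll \overline{\mu}$, hence $\nu(x,E_x)=0$ $\mu$-a.e.~by the pointwise domination, so $\mu\nu(E)=0$. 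A classical consequence of Birkhoff's ergodic theorem is that a probability dominated by a stationary probability is AMS and dominated by its own stationary mean, so Lemma~\ref{LemmaTh7dot4Gray2009} finally delivers the R-AMS property of $\mu\nu$.

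The main obstacle is the disintegration step in $(\ref{PropositionNSCondRAMSChlSecondStatement}) \Rightarrow (\ref{PropositionNSCondRAMSChlThirdStatement})$: passing from a joint Radon-Nikodym domination to a pointwise domination of the kernels, uniformly in the test set $G$. This is precisely where the standard Borel setting of Section~\ref{SectionSourcesAndChannels} is used, both to get regular conditional probabilities and to merge the a priori $G$-dependent exceptional $\mu$-null sets into a single one via the countable generation of $\mathcal{B}_{B^\mathcal{I}}$.
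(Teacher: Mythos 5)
Your proposal is correct and its mathematical core coincides with the paper's: both arguments rest on the equivalence between the R-AMS property of $\mu\nu$, domination of $\mu\nu$ by a stationary probability (Lemma~\ref{LemmaTh7dot4Gray2009}), and pointwise domination of the kernel by a quasi-stationary kernel. The organizational differences are these. First, the paper proves three separate bi-implications $(\ref{PropositionNSCondRAMSChlSecondStatement})\Leftrightarrow(\ref{PropositionNSCondRAMSChlFirstStatement})$, $(\ref{PropositionNSCondRAMSChlThirdStatement})\Leftrightarrow(\ref{PropositionNSCondRAMSChlSecondStatement})$, $(\ref{PropositionNSCondRAMSChlFourthStatement})\Leftrightarrow(\ref{PropositionNSCondRAMSChlThirdStatement})$, handling the reduction from R-AMS sources to stationary sources via Lemma~\ref{Lemma2Fontana} ($\mu\ll\overline{\mu}\Rightarrow\mu\nu\ll\overline{\mu}\nu$) and transitivity of $\ll$; you instead close a single cycle and absorb that reduction into $(\ref{PropositionNSCondRAMSChlFourthStatement})\Rightarrow(\ref{PropositionNSCondRAMSChlFirstStatement})$ through the pointwise domination, which works equally well. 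Second, and more substantively, where the paper simply invokes Lemma~\ref{LemmaChannelDominanceEquivHookupDominance} (Lemmas 3 and 4 of Fontana--Gray--Kieffer, stating $\nu(x,.)\ll\nu'(x,.)$ $\mu$-a.e. $\Leftrightarrow$ $\mu\nu\ll\mu\nu'$), you re-derive both directions: the disintegration-plus-countable-generation argument in $(\ref{PropositionNSCondRAMSChlSecondStatement})\Rightarrow(\ref{PropositionNSCondRAMSChlThirdStatement})$ is exactly the standard proof of the hard direction of that lemma, and your section argument in $(\ref{PropositionNSCondRAMSChlFourthStatement})\Rightarrow(\ref{PropositionNSCondRAMSChlFirstStatement})$ is the easy direction. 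Your route is self-contained but duplicates an available lemma; citing Lemma~\ref{LemmaChannelDominanceEquivHookupDominance} would shorten your step $(\ref{PropositionNSCondRAMSChlSecondStatement})\Rightarrow(\ref{PropositionNSCondRAMSChlThirdStatement})$ to one line. One shared imprecision, which you inherit from the paper rather than introduce: statement $(\ref{PropositionNSCondRAMSChlThirdStatement})$ asks for a \emph{quasi-stationary channel}, which by Definition~\ref{DefinitionQuasiStationaryChannelStationaryChannel} means quasi-stationary with respect to every stationary source, whereas the construction only exhibits a channel quasi-stationary with respect to the given $\mu$; the paper's own proof glosses over this in the same way.
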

The proof of Proposition \ref{PropositionNecessarySufficientConditionRAMSChannel} uses the following  lemma which gathers Lemma 3 and Lemma 4 of \cite{FontanaGrayKieffer81}. It states that "hookup dominance" is equivalent to "channel dominance".
\begin{lemma}\label{LemmaChannelDominanceEquivHookupDominance}
Let $[A,X]$ be a source with distribution $\mu$. Let  $[A,\nu,B]$ and $[A,\nu',B]$ be two arbitrary channels. Then 
$$
\nu(x,.) \ll \nu'(x,.) \text{ }\mu\text{-a.e. } \Leftrightarrow \mu\nu \ll \mu\nu'
$$
\end{lemma}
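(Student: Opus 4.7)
The plan is to treat the two implications separately. The forward direction is a direct application of the hookup formula, while the reverse requires the Radon--Nikodym theorem together with a quantifier-switching step that exploits the standard Borel assumption on $(B^\mathcal{I}, \mathcal{B}_{B^\mathcal{I}})$.

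For $(\Rightarrow)$, I would first extend the defining identity of the hookup from rectangles to all $E \in \mathcal{B}_{A^\mathcal{I}\times B^\mathcal{I}}$ by a routine monotone class argument, obtaining
$$\mu\nu(E) = \int_{A^\mathcal{I}} \nu(x, E_x)\, d\mu(x), \qquad E_x = \{ y \in B^\mathcal{I} : (x,y) \in E \},$$
and similarly for $\nu'$. If $\mu\nu'(E) = 0$ then $\nu'(x, E_x) = 0$ for $\mu$-a.e.\ $x$; combined with the pointwise dominance hypothesis (which holds off some $\mu$-null set $N$), this forces $\nu(x, E_x) = 0$ outside a $\mu$-null set, and integrating yields $\mu\nu(E) = 0$.

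For $(\Leftarrow)$, assume $\mu\nu \ll \mu\nu'$ and let $h = d(\mu\nu)/d(\mu\nu')$ denote the Radon--Nikodym derivative. Applied to a rectangle $F \times G$, the hookup identity for $\mu\nu'$ gives
$$\int_F \nu(x,G)\, d\mu = \mu\nu(F \times G) = \int_F \left( \int_G h(x,y)\, \nu'(x,dy) \right) d\mu(x),$$
so for each fixed $G$ there is a $\mu$-null set $N_G$ outside which $\nu(x,G) = \int_G h(x,y)\,\nu'(x,dy)$. The main obstacle is exactly the switch of quantifiers: I need a single $\mu$-null set on which this identity holds simultaneously for every $G \in \mathcal{B}_{B^\mathcal{I}}$. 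To handle this I would fix a countable algebra $\mathcal{A}$ generating $\mathcal{B}_{B^\mathcal{I}}$ (available since the sequence space is standard Borel) and set $N = \bigcup_{G \in \mathcal{A}} N_G$, still $\mu$-null. For $x \notin N$, both $G \mapsto \nu(x,G)$ and $G \mapsto \int_G h(x,y)\,\nu'(x,dy)$ are finite measures on $\mathcal{B}_{B^\mathcal{I}}$ agreeing on $\mathcal{A}$, so a Carath\'eodory/monotone class extension shows they agree on all of $\mathcal{B}_{B^\mathcal{I}}$. In particular, $\nu'(x,G) = 0$ implies $\int_G h(x,y)\,\nu'(x,dy) = 0$, hence $\nu(x,G) = 0$, proving $\nu(x,\cdot) \ll \nu'(x,\cdot)$ for every $x \notin N$.
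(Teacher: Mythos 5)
Your proof is correct. Note, however, that the paper does not prove this lemma at all: it is stated as a quotation of Lemmas 3 and 4 of Fontana, Gray and Kieffer (1981), so your argument supplies a proof that the paper omits, and it is essentially the classical one. The forward direction via the section formula $\mu\nu(E)=\int \nu(x,E_x)\,d\mu$ and the reverse direction via the Radon--Nikodym derivative $h=d(\mu\nu)/d(\mu\nu')$ on the product space, with the exceptional null set made independent of $G$ by passing to a countable generating field (available here because the alphabets, hence the sequence spaces, are assumed standard), are both sound. The one point worth making explicit is that $B^{\mathcal{I}}$ itself should be placed in the countable algebra, so that off the null set the set function $G\mapsto\int_G h(x,y)\,\nu'(x,dy)$ has total mass $\nu(x,B^{\mathcal{I}})=1$ and is therefore a finite measure, which is what licenses the uniqueness-of-extension step.
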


\begin{proof}[Proof of Proposition \ref{PropositionNecessarySufficientConditionRAMSChannel}]
~\\
\begin{description}
	\item[(\ref{PropositionNSCondRAMSChlSecondStatement})$\Leftrightarrow$(\ref{PropositionNSCondRAMSChlFirstStatement})]  Assume that a  channel $[A,\nu,B]$ is such that $\eta\nu$ is R-AMS for any distribution  $\eta$ of a stationary source. 

Let $[A,X]$  be an R-AMS source with distribution $\mu$. Then the stationary mean $\overline{\mu}$ exists and dominates $\mu$: $\mu \ll \overline{\mu}$.  By Lemma \ref{Lemma2Fontana}, $\mu\nu \ll \overline{\mu}\nu$. 

$\overline{\mu}$ is stationary, then, by assumption, $\overline{\mu}\nu$ is R-AMS. Then $\overline{\mu}\nu \ll \overline{\overline{\mu}\nu}$ where $\overline{\overline{\mu}\nu}$ is stationary. Since $\ll$ is transitive, $\mu\nu \ll \overline{\overline{\mu}\nu}$, thus $\mu\nu$ is R-AMS. 

Obviously if $\nu$ is R-AMS w.r.t. any R-AMS source, it is R-AMS w.r.t. any stationary source (a stationary source is R-AMS).

	\item[(\ref{PropositionNSCondRAMSChlThirdStatement})$\Leftrightarrow$(\ref{PropositionNSCondRAMSChlSecondStatement})] From Lemma \ref{LemmaChannelDominanceEquivHookupDominance}, (\ref{PropositionNSCondRAMSChlThirdStatement})$\Leftrightarrow \mu\nu \ll \mu\overline{\nu}_\mu$ where $\mu\overline{\nu}_\mu$ is stationary, for any stationary $\mu$.

	\item[(\ref{PropositionNSCondRAMSChlFourthStatement})$\Leftrightarrow$(\ref{PropositionNSCondRAMSChlThirdStatement})] It is obvious that \ref{PropositionNSCondRAMSChlFourthStatement})$\Rightarrow$(\ref{PropositionNSCondRAMSChlThirdStatement}). Assume (\ref{PropositionNSCondRAMSChlThirdStatement}). Let $[A,X]$ be an R-AMS source with distribution $\mu$.  Since $\overline{\mu}$ is stationary, by  (\ref{PropositionNSCondRAMSChlThirdStatement}), there exists a quasi-stationary channel $\overline{\nu}_{\overline{\mu}}$ such that that $\nu(x,.) \ll \overline{\nu}_{\overline{\mu}}(x,.)$ $\overline{\mu}$-a.e. and thus $\mu$.a.e. 

\end{description} 
\end{proof}

\begin{proposition}\label{PropositionKernelRAMSisRAMSChannel}
Let $[A,X]$ be a stationary source with distribution $\mu$ and let $[A,\nu,B]$ be a channel such that $(B^\mathcal{I},\mathcal{B}_{B^\mathcal{I}},\nu(x,.), T_B)$ is an R-AMS dynamical system $\mu$-a.e. Then the channel  $\nu$ is  R-AMS w.r.t. $\mu$.
\end{proposition}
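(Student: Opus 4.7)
The plan is to construct an AMS probability on $A^{\mathcal{I}}\times B^{\mathcal{I}}$ that dominates the hookup $\mu\nu$; AMS of $\mu\nu$ will follow, and recurrence is free via Proposition~\ref{PropositionRecurrentChannel2}. The construction replaces each fiber $\nu(x,\cdot)$ by its $T_B$-stationary mean. Concretely, fix a $\mu$-null set $N$ off of which $\nu(x,\cdot)$ is R-AMS, and define
$$\overline{\nu}(x,G):=\lim_{n\to\infty}\frac{1}{n}\sum_{k=0}^{n-1}\nu(x,T_B^{-k}G) \quad (x\notin N),$$
letting $\overline{\nu}(x,\cdot)$ be any fixed stationary probability on $N$. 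Each fiber $\overline{\nu}(x,\cdot)$ is then a $T_B$-stationary probability, and measurability in $x$ holds because the Cesàro averages are measurable and converge pointwise on $N^c$. Lemma~\ref{LemmaTh7dot4Gray2009} applied fiberwise gives $\nu(x,\cdot)\ll\overline{\nu}(x,\cdot)$ $\mu$-a.e., and Lemma~\ref{LemmaChannelDominanceEquivHookupDominance} lifts this to $\mu\nu\ll\mu\overline{\nu}$.

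The core step is to show $\mu\overline{\nu}$ is AMS. Using the $T_B$-invariance of each $\overline{\nu}(x,\cdot)$, for any rectangle $F\times G$,
$$\frac{1}{n}\sum_{k=0}^{n-1}\mu\overline{\nu}(T_{AB}^{-k}(F\times G))=\int \overline{\nu}(x,G)\cdot\Bigl(\frac{1}{n}\sum_{k=0}^{n-1}1_F(T_A^kx)\Bigr)d\mu(x).$$
Birkhoff's pointwise ergodic theorem for the stationary system $(A^\mathcal{I},\mu,T_A)$ yields $\mu$-a.e.\ convergence of the inner Cesàro sum, and bounded convergence then produces the outer limit. Countable additivity of the resulting set function on the rectangle algebra follows from countable additivity of $\overline{\nu}(x,\cdot)$ together with bounded convergence; the standard extension argument then gives a stationary probability $\overline{\mu\overline{\nu}}$, and $\mu\overline{\nu}$ is AMS.

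Finally, $\mu\overline{\nu}\ll^a\overline{\mu\overline{\nu}}$, and combined with $\mu\nu\ll\mu\overline{\nu}$ and the $L^1$-absolute continuity of the Radon--Nikodym density $d(\mu\nu)/d(\mu\overline{\nu})$, this yields $\mu\nu\ll^a\overline{\mu\overline{\nu}}$; since $\overline{\mu\overline{\nu}}$ is stationary, $\mu\nu$ is AMS. Recurrence of $\mu\nu$ is immediate from Proposition~\ref{PropositionRecurrentChannel2}: $\mu$ is stationary hence recurrent, and each $\nu(x,\cdot)$ being R-AMS is in particular recurrent $\mu$-a.e. Together these yield the R-AMS conclusion. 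The main obstacle I anticipate is the AMS step for $\mu\overline{\nu}$: beyond the Birkhoff-and-bounded-convergence computation on rectangles, one must carefully justify countable additivity of the limit set function and its extension to a stationary probability on all of $\mathcal{B}_{A^\mathcal{I}\times B^\mathcal{I}}$, while also ensuring that the measurability of $\overline{\nu}$ is uniform in $G$.
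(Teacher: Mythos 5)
Your argument is correct, but it is organized differently from the paper's. Both proofs start from the same object --- the fiberwise stationary mean $\overline{\nu}(x,\cdot)=\lim_n\frac1n\sum_k\nu(x,T_B^{-k}\cdot)$, with $\nu(x,\cdot)\ll\overline{\nu}(x,\cdot)$ $\mu$-a.e.\ --- and both invoke the pointwise ergodic theorem on the stationary input. The paper, however, applies the ergodic theorem to the functions $x\mapsto\overline{\nu}(x,G)$ to manufacture a new channel $\Psi(x,G)=\lim_n\frac1n\sum_i\overline{\nu}(T_A^ix,G)$ (Vitali--Hahn--Saks makes $\Psi(x,\cdot)$ a probability), shows $\mu\Psi$ is stationary, and uses the invariant-set identity $\int\psi_{O_x}\,d\mu=\int\overline{\nu}(x,O_x)\,d\mu$ to get $\mu\nu\ll\mu\Psi$ directly; domination by a stationary probability then gives recurrence and AMS simultaneously via Lemma~\ref{LemmaTh7dot4Gray2009}. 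You instead apply Birkhoff to the indicators $1_F$ to show that the Ces\`aro limits of $\mu\overline{\nu}\circ T_{AB}^{-k}$ exist on rectangles, conclude that $\mu\overline{\nu}$ is AMS by the field-extension argument, chain $\mu\nu\ll\mu\overline{\nu}\ll^a\overline{\mu\overline{\nu}}$, and handle recurrence separately through Proposition~\ref{PropositionRecurrentChannel2}. Your route buys you the avoidance of the auxiliary channel $\Psi$, at the price of the extension machinery (the passage from ``limits exist on a generating field'' to AMS rests on standardness of the spaces and the Gray--Kieffer characterization rather than on bounded convergence alone --- but this is exactly what the paper itself uses in Proposition~\ref{PropositionAMSQuasiStationaryMean}, so it is available) and of the $\ll$-followed-by-$\ll^a$ composition, which you justify correctly via absolute continuity of the integral of the Radon--Nikodym density. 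One simplification worth noting: once you know $\mu\overline{\nu}$ is AMS, it is also recurrent by Proposition~\ref{PropositionRecurrentChannel2} (since $\mu$ is stationary and each $\overline{\nu}(x,\cdot)$ is stationary), so Lemma~\ref{LemmaTh7dot4Gray2009} gives $\mu\overline{\nu}\ll\overline{\mu\overline{\nu}}$ and hence $\mu\nu\ll\overline{\mu\overline{\nu}}$ with $\overline{\mu\overline{\nu}}$ stationary; this yields the R-AMS conclusion in one stroke and lets you drop both the asymptotic-domination chaining and the separate recurrence paragraph.
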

\begin{proof}
Let $\mu$ be the distribution of a stationary source. 
Let $\Omega_1$ be the set of $x$'s such that $(B^\mathcal{I},\mathcal{B}_{B^\mathcal{I}},\nu(x,.), T_B)$ is a recurrent and AMS dynamical system. $\mu(\Omega_1)=1$.

For any $x\in \Omega_1$ and any $G$, since $\nu(x,.)$ is R-AMS, the limit 
$$
\overline{\nu}(x,G)=\lim_{n \to \infty} \frac{1}{n}\sum_{i=0}^{n-1} \nu(x,T_B^{-i}G)
$$
 exists, $\overline{\nu}(x,.)$ is a stationary probability and $\nu(x,.) \ll \overline{\nu}(x,.)$.

Let $\phi: A^\mathcal{I}\times \mathcal{B}_{B^\mathcal{I}}  \to [0,1]$ be such that for any $G$,  $\forall x\in \Omega_1$, $\phi(x,G)=\overline{\nu}(x,G)$ and $\forall x\in A^\mathcal{I} \setminus \Omega_1$, $\phi(x,G)=\nu(x,G)$. 
For any $x$, the set function $G \mapsto \phi(x,G)$ is a probability on $(B^\mathcal{I}, \mathcal{B}_{B^\mathcal{I}})$. Moreover the function $x \mapsto \phi(x,G)$ is measurable for any $G$. In other words $\phi$ is a channel (probability kernel). 

The function $x\mapsto \phi(x,G)$ is $\mu$-integrable for any $G$, then, thanks to the pointwise ergodic theorem (see e.g. \cite{Kakihara99}), there exits a $\mu$-integrable function $\psi_G$, for any $G$, such that:
\begin{enumerate}
	\item \label{PWETInvariantFunction}$\psi_G(T_A x)= \psi_G(x)$  $\mu$-a.e.
	\item $\psi_G(x) =\lim_{n\to \infty}  \frac{1}{n}\sum_{i=0}^{n-1} \phi((T_A^i x, G)$ $\mu$-a.e. Since $\mu(\Omega_1)=1$, $\psi_G(x)=  \lim_{n\to \infty}  \frac{1}{n}\sum_{i=0}^{n-1} \overline{\nu}(T_A^ix,G)$ $\mu$-a.e.
	\item \label{PWETInvariantSet} for any invariant $F$, $\int_F \psi_G(x)d\mu = \int_F \phi(x,G) d\mu$ and, since $\mu(\Omega_1)=1$, $\int_F \psi_G(x)d\mu= \int_F \overline{\nu}(x,G)d\mu $
\end{enumerate}
From the Vitali-Hahn-Saks theorem, the set function $\Psi(x,.): G \mapsto \psi_G(x)$ is a probability $\mu$-a.e. Moreover the function $x\mapsto \psi_G(x)=\Psi(x,G)$ is measurable for any $G$. This means that $\Psi=\{ \Psi(x,.), x\in A^\mathcal{I}\}$ is a channel.
$$
\mu\Psi(T_A^{-1}F\times T_B^{-1}G)  =  \int_{T_A^{-1}F} \psi_{T_B^{-1}G}(x) d\mu
$$
By (\ref{PWETInvariantFunction})
$$
\mu\Psi(T_A^{-1}F\times T_B^{-1}G)  =  \int_{T_A^{-1}F} \psi_{T_B^{-1}G}(T_Ax) d\mu
$$
$\mu$ is stationary then
$$
\mu\Psi(T_A^{-1}F\times T_B^{-1}G)  =  \int_{F} \psi_{T_B^{-1}G}(x) d\mu
$$
$\overline{\nu}(x,.)$ is stationary then
$$
\mu\Psi(T_A^{-1}F\times T_B^{-1}G)  =  \int_{F} \psi_{G}(x) d\mu
$$
$$
\mu\Psi(T_A^{-1}F\times T_B^{-1}G)  =  \mu\Psi(F\times G) \nonumber
$$
Thus $\mu\Psi$ is stationary.

Let $O$ such that $\mu\Psi(O)=\int_{A^\mathcal{I}} \psi_{O_x}(x) d\mu = 0$. Then, by~\ref{PWETInvariantSet}), $A^\mathcal{I}$ being an invariant set:
$$
\int_{A^\mathcal{I}} \overline{\nu}(x,O_x) d\mu = \int_{A^\mathcal{I}} \psi_{O_x}(x) d\mu = 0
$$
Then $\overline{\nu}(x,O_x) =0$ $\mu$-a.e. Since $\nu(x,.) \ll \overline{\nu}(x,.)$ ($\nu(x,.)$ is R-AMS)
$$
\mu\nu(O)=\int \nu(x,O_x) d\mu = 0
$$
Then $\mu\nu \ll \mu\Psi$ which is stationary. This implies that $\mu\nu$ is R-AMS. 
 \end{proof}

%\pagebreak
\section{One-sided AMS channels}\label{SectionOneSidedAMSChannels}

In this section, necessary and sufficient conditions for a channel to be AMS are given: the AMS property of a channel needs to be checked only on stationary sources and a channel is AMS if and only if it is asymptotically dominated by a quasi-stationary channel. A sufficient condition is also proved: if a channel is made of a collection of probabilities $\nu(x,.)$, each AMS $\mu$-a.s. for a stationary $\mu$, then the channel is AMS w.r.t. $\mu$.

\begin{proposition}\label{PropositionNecessarySufficientConditionAMSChannel2}
The following statements are equivalent:
\begin{enumerate}
	\item \label{PropositionNSCondAMSChl2FirstStatement} the channel $[A,\nu,B]$ is AMS
	\item \label{PropositionNSCondAMSChl2SecondStatement} the channel restriction $[A,\nu_\infty,B]$ is R-AMS
	\item \label{PropositionNSCondAMSChl2ThirdStatement} the channel $[A,\nu,B]$ is AMS with respect to any stationary source
	\item \label{PropositionNSCondAMSChl2FourthStatement} for any stationary source $[A,X]$ with distribution $\mu$ there exists a quasi-stationary channel $\overline{\nu}_\mu$ such that $\nu(x,.) \ll^a \overline{\nu}_{\mu}(x,.)$ $\mu_\infty$-a.e.
	\item \label{PropositionNSCondAMSChl2FifthStatement} for any AMS source $[A,X]$ with distribution $\mu$ there exists a quasi-stationary channel $\overline{\nu}_\mu$ such that $\nu(x,.) \ll^a \overline{\nu}_{\mu}(x,.)$ $\overline{\mu}_\infty$-a.e.
\end{enumerate}
\end{proposition}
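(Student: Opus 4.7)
My approach is to reduce the AMS property of the channel $\nu$ to the R-AMS property of its tail restriction $\nu_\infty$ by means of Lemma~\ref{LemmaAMSProcess}, and then transport the characterization of R-AMS channels from Proposition~\ref{PropositionNecessarySufficientConditionRAMSChannel} back to the AMS level. The central identity, from Lemma~\ref{LemmaChannelRestriction}, is $(\mu\nu)_\infty = \mu_\infty\nu_\infty$, which combined with Lemma~\ref{LemmaAMSProcess} says that $\mu\nu$ is AMS if and only if $\mu_\infty\nu_\infty$ is R-AMS. I would prove the cycle (1)$\Rightarrow$(3)$\Rightarrow$(2)$\Rightarrow$(4)$\Rightarrow$(5)$\Rightarrow$(1).

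(1)$\Rightarrow$(3) is immediate, since every stationary source is AMS. For (3)$\Rightarrow$(2), let $\mu$ be stationary; then $\mu\nu$ is AMS by hypothesis, hence $\mu_\infty\nu_\infty=(\mu\nu)_\infty$ is R-AMS. Because $\mu$ stationary implies $\mu_\infty$ stationary, $\nu_\infty$ is then R-AMS with respect to every stationary source, and the analogue of Proposition~\ref{PropositionNecessarySufficientConditionRAMSChannel}(2) for channels with tail output $\sigma$-field yields that $\nu_\infty$ is R-AMS. For (2)$\Rightarrow$(4), the corresponding form of Proposition~\ref{PropositionNecessarySufficientConditionRAMSChannel}(3) applied to $\nu_\infty$ supplies, for each stationary $\mu$, a quasi-stationary channel $\overline{\nu}_\mu$ (extending the quasi-stationary mean of $\nu_\infty$ to the full output $\sigma$-field) with $\nu_\infty(x,\cdot) \ll (\overline{\nu}_\mu)_\infty(x,\cdot)$ $\mu_\infty$-a.e.; applying Lemma~\ref{LemmaDominanceOnTailSigmaField} pointwise in $x$ converts this into $\nu(x,\cdot) \ll^a \overline{\nu}_\mu(x,\cdot)$ $\mu_\infty$-a.e., which is (4).

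For (4)$\Rightarrow$(5), given AMS $\mu$ with stationary mean $\overline{\mu}$, apply (4) to the stationary source $\overline{\mu}$ and take the resulting quasi-stationary channel as $\overline{\nu}_\mu$; the required dominance $\nu(x,\cdot) \ll^a \overline{\nu}_\mu(x,\cdot)$ $\overline{\mu}_\infty$-a.e. then holds by construction. For (5)$\Rightarrow$(1), let $\mu$ be AMS with stationary mean $\overline{\mu}$, so $\mu\ll^a\overline{\mu}$ and hence $\mu_\infty\ll\overline{\mu}_\infty$ by Lemma~\ref{LemmaDominanceOnTailSigmaField}; Lemma~\ref{Lemma2Fontana} gives $\mu_\infty\nu_\infty \ll \overline{\mu}_\infty\nu_\infty$. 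The hypothesis $\nu(x,\cdot) \ll^a \overline{\nu}_\mu(x,\cdot)$ $\overline{\mu}_\infty$-a.e. translates (again by Lemma~\ref{LemmaDominanceOnTailSigmaField}) into $\nu_\infty(x,\cdot) \ll (\overline{\nu}_\mu)_\infty(x,\cdot)$ $\overline{\mu}_\infty$-a.e., whence Lemma~\ref{LemmaChannelDominanceEquivHookupDominance} yields $\overline{\mu}_\infty\nu_\infty \ll \overline{\mu}_\infty(\overline{\nu}_\mu)_\infty = (\overline{\mu}\,\overline{\nu}_\mu)_\infty$. Chaining these gives $(\mu\nu)_\infty\ll(\overline{\mu}\,\overline{\nu}_\mu)_\infty$; since $\overline{\mu}\,\overline{\nu}_\mu$ is stationary as the hookup of a stationary source with a quasi-stationary channel, Lemma~\ref{LemmaTh7dot4Gray2009} ensures $(\mu\nu)_\infty$ is R-AMS, and Lemma~\ref{LemmaAMSProcess} then gives that $\mu\nu$ is AMS.

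The main obstacle is the bookkeeping around the tail $\sigma$-field: verifying that $(\mu\nu)_\infty$ really is the hookup $\mu_\infty\nu_\infty$ in all the senses needed, checking that R-AMS for the restricted channel $\nu_\infty$ (whose output $\sigma$-field is the tail) fits the framework of Proposition~\ref{PropositionNecessarySufficientConditionRAMSChannel}, and confirming that the tail-level quasi-stationary dominator can be represented as the restriction of a quasi-stationary channel on the full output $\sigma$-field. Once these identifications are set up, the argument is a formal chain of dominance and stationary-mean manipulations driven by Lemmas~\ref{LemmaAMSProcess}, \ref{LemmaDominanceOnTailSigmaField}, \ref{Lemma2Fontana}, \ref{LemmaChannelDominanceEquivHookupDominance} and Proposition~\ref{PropositionNecessarySufficientConditionRAMSChannel}.
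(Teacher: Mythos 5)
Your proposal is correct and follows essentially the same route as the paper: both reduce the AMS property to the R-AMS property of the tail restriction via Lemma~\ref{LemmaAMSProcess} and the identity $(\mu\nu)_\infty=\mu_\infty\nu_\infty$, and both drive the dominance statements with Lemmas~\ref{LemmaDominanceOnTailSigmaField}, \ref{Lemma2Fontana} and \ref{LemmaChannelDominanceEquivHookupDominance}. The only difference is organizational (a single cycle versus the paper's pairwise equivalences), and the tail-$\sigma$-field bookkeeping you flag as the main obstacle is exactly the same issue the paper's own proof handles implicitly.
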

\begin{proof}
~\\
\begin{description}
	\item[(\ref{PropositionNSCondAMSChl2SecondStatement})$\Leftrightarrow$(\ref{PropositionNSCondAMSChl2FirstStatement})] 

Assume that $\nu_\infty$ is R-AMS and let $[A,X]$ be an AMS source with distribution $\mu$. By Lemma \ref{LemmaAMSProcess}, $\mu_\infty$ is R-AMS and thus $\mu_\infty\nu_\infty$ is R-AMS.  By Lemma \ref{LemmaAMSProcess}, $\mu\nu$ is AMS.

Assume now that $\nu$ is AMS and let $\mu$ be the distribution of a source such that $\mu_\infty$ is an R-AMS probability on $(A^\mathcal{I}, (\mathcal{B_{A^\mathcal{I}})_\infty})$. 

$\mu_\infty$ is R-AMS then by Lemma \ref{LemmaAMSProcess} $\mu$ is AMS. $\mu$ and $\nu$ are AMS then $\mu\nu$ is AMS and, by Lemma \ref{LemmaAMSProcess}, $(\mu\nu)_\infty=\mu_\infty\nu_\infty$ is R-AMS.

	\item[(\ref{PropositionNSCondAMSChl2ThirdStatement})$\Leftrightarrow$(\ref{PropositionNSCondAMSChl2FirstStatement})] 

It is obvious that \ref{PropositionNSCondAMSChl2FirstStatement})$\Rightarrow$(\ref{PropositionNSCondAMSChl2ThirdStatement}). 

Assume (\ref{PropositionNSCondAMSChl2ThirdStatement}). Let $[A,X]$ be an AMS source with distribution $\mu$.  Since $\overline{\mu}$ is stationary, $\overline{\mu}\nu$ is AMS. Thus, by Lemma \ref{Lemma2Fontana} $\mu_\infty\nu_\infty \ll \overline{\mu}_\infty\nu_\infty$ and by Lemma \ref{LemmaAMSProcess} $\overline{\mu}_\infty\nu_\infty$ is R-AMS then
$$
\mu_\infty\nu_\infty \ll \overline{\mu}_\infty\nu_\infty \ll \overline{\mu}_\infty\ \overline{\nu_\infty}
$$
Then, since $(\overline{\nu})_\infty = \overline{\nu_\infty}$,  by Lemma \ref{LemmaAMSProcess} $\mu\nu \ll^a \overline{\mu}\ \overline{\nu}$.

	\item[(\ref{PropositionNSCondAMSChl2FourthStatement})$\Leftrightarrow$(\ref{PropositionNSCondAMSChl2ThirdStatement})] 

Let $\nu$ be such that (\ref{PropositionNSCondAMSChl2FourthStatement}) holds and let $[A,X]$ be a stationary source with distribution $\mu$: 
$$
\nu(x,.) \ll^a \overline{\nu}_\mu(x,.) \text{  }\mu_\infty\text{-a.e.}
$$
Then, by  Lemma \ref{LemmaDominanceOnTailSigmaField}, 
$$
\nu_\infty(x,.) \ll (\overline{\nu}_\mu)_\infty(x,.)\text{  }\mu_\infty\text{-a.e.}
$$
Thanks to Lemma \ref{LemmaChannelDominanceEquivHookupDominance}, this implies that $(\mu\nu)_\infty \ll (\mu\overline{\nu}_\mu)_\infty$. Then, by  Lemma \ref{LemmaAMSProcess}, $\mu\nu \ll^a \mu\overline{\nu}_\mu$. Hence $\nu$ is AMS w.r.t. $\mu$.

Let $\nu$ such that (\ref{PropositionNSCondAMSChl2ThirdStatement}) holds and let $[A,X]$ be a stationary source with distribution $\mu$. $\mu\nu$ is AMS then 
$$
\mu_\infty\nu_\infty \ll \mu_\infty\overline{\nu}_\mu
$$
Thanks to Lemma \ref{LemmaChannelDominanceEquivHookupDominance}, it holds that 
$$
\nu_\infty(x, .) \ll (\overline{\nu}_\mu(x,.))_\infty \text{  }\mu_\infty\text{-a.e.}
$$
 By Lemma  \ref{LemmaDominanceOnTailSigmaField}, $\nu(x, .) \ll \overline{\nu}_\mu(x,.)$ $\mu_\infty$-a.e.

	\item[(\ref{PropositionNSCondAMSChl2FifthStatement})$\Leftrightarrow$(\ref{PropositionNSCondAMSChl2FourthStatement})] 

It is obvious that (\ref{PropositionNSCondAMSChl2FifthStatement}) implies (\ref{PropositionNSCondAMSChl2FourthStatement}). 

Assume (\ref{PropositionNSCondAMSChl2FourthStatement}) and let $[A,X]$ be an AMS source with distribution $\mu$. $\overline{\mu}$ exists and is stationary then there exists a quasi-stationary channel $\overline{\nu}_{\overline{\mu}}$  such that $\nu(x,.) \ll^a \overline{\nu}_{\mu}(x,.)$ $\overline{\mu}_\infty$-a.e.
\end{description} 
\end{proof}

\begin{proposition}\label{PropositionKernelAMSisAMSChannel}
Let $[A,X]$ be a stationary  source with distribution $\mu$ and let $[A,\nu,B]$ be a channel such that $(B^\mathcal{I},\mathcal{B}_{B^\mathcal{I}},\nu(x,.), T_B)$ is an AMS dynamical system $\mu$-a.e. Then the channel  $\nu$ is  AMS w.r.t. $\mu$.
\end{proposition}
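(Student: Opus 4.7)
The plan is to reduce to the recurrent-and-AMS case already handled by Proposition \ref{PropositionKernelRAMSisRAMSChannel}, using Lemma \ref{LemmaAMSProcess} to pass between AMS of a dynamical system and the R-AMS property of its restriction to the tail $\sigma$-field. This mirrors the reasoning used to prove the equivalence (\ref{PropositionNSCondAMSChl2FirstStatement})$\Leftrightarrow$(\ref{PropositionNSCondAMSChl2SecondStatement}) in Proposition \ref{PropositionNecessarySufficientConditionAMSChannel2}.

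First I would restrict fibrewise. By hypothesis, $(B^\mathcal{I},\mathcal{B}_{B^\mathcal{I}},T_B,\nu(x,\cdot))$ is AMS $\mu$-a.e. Applying Lemma \ref{LemmaAMSProcess} for each such $x$, the restriction $(B^\mathcal{I},(\mathcal{B}_{B^\mathcal{I}})_\infty,T_B,\nu_\infty(x,\cdot))$ is R-AMS $\mu$-a.e., where $\nu_\infty$ is the channel restriction supplied by Lemma \ref{LemmaChannelRestriction}. Thus the kernel $\nu_\infty$ satisfies the fibrewise hypothesis of Proposition \ref{PropositionKernelRAMSisRAMSChannel}.

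Next I would apply Proposition \ref{PropositionKernelRAMSisRAMSChannel} to the stationary source $\mu$ and the restricted channel $\nu_\infty$. The proof of Proposition \ref{PropositionKernelRAMSisRAMSChannel}, which constructs a stationary majorant $\Psi$ from the Cesaro limit $\overline{\nu}$ via the pointwise ergodic theorem applied to $\phi(\cdot,G)$ and via the Vitali-Hahn-Saks theorem, uses only the general measurable-space structure and transfers unchanged when the output $\sigma$-field $\mathcal{B}_{B^\mathcal{I}}$ is replaced by its $T_B^{-1}$-invariant sub-$\sigma$-field $(\mathcal{B}_{B^\mathcal{I}})_\infty$: indeed each function $x\mapsto \nu_\infty(x,G)$ remains $\mathcal{B}_{A^\mathcal{I}}$-measurable and $[0,1]$-valued for every $G\in (\mathcal{B}_{B^\mathcal{I}})_\infty$, which is all the pointwise ergodic theorem needs. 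Following the identification already invoked in the proof of Proposition \ref{PropositionNecessarySufficientConditionAMSChannel2}, the resulting R-AMS probability $\mu\nu_\infty$ coincides with $(\mu\nu)_\infty$.

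Finally, Lemma \ref{LemmaAMSProcess} applied in the reverse direction to the dynamical system $(A^\mathcal{I}\times B^\mathcal{I},\mathcal{B}_{A^\mathcal{I}\times B^\mathcal{I}},T_{AB},\mu\nu)$ promotes R-AMS of $(\mu\nu)_\infty$ to AMS of $\mu\nu$, which is exactly the claim that $\nu$ is AMS with respect to $\mu$. The one delicate point I expect is the verification that Proposition \ref{PropositionKernelRAMSisRAMSChannel} still applies to the restricted channel $\nu_\infty$ whose output $\sigma$-field is the tail $\sigma$-field rather than the full Borel $\sigma$-field; once that measurability/integrability check is carried out, the argument reduces to a routine sandwich of Proposition \ref{PropositionKernelRAMSisRAMSChannel} between two applications of Lemma \ref{LemmaAMSProcess}.
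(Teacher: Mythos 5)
Your proposal is correct and follows essentially the same route as the paper: restrict fibrewise via Lemma \ref{LemmaAMSProcess} to get that $(\nu(x,\cdot))_\infty$ is R-AMS $\mu$-a.e., apply Proposition \ref{PropositionKernelRAMSisRAMSChannel} to the restricted channel, and then pass back from R-AMS of $(\mu\nu)_\infty$ to AMS of $\mu\nu$ (the paper packages this last step as the equivalence (\ref{PropositionNSCondAMSChl2FirstStatement})$\Leftrightarrow$(\ref{PropositionNSCondAMSChl2SecondStatement}) of Proposition \ref{PropositionNecessarySufficientConditionAMSChannel2}, which itself rests on Lemma \ref{LemmaAMSProcess}). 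The ``delicate point'' you flag about re-running Proposition \ref{PropositionKernelRAMSisRAMSChannel} on the tail $\sigma$-field is handled no more explicitly in the paper than in your sketch, so there is no gap relative to the paper's own level of rigor.
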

\begin{proof}
Let $\mu$ be the distribution of a stationary source and let $[A,\nu,B]$ be a channel such that $(B^\mathcal{I},\mathcal{B}_{B^\mathcal{I}},\nu(x,.), T_B)$ is an AMS dynamical system $\mu$-a.e. By Lemma \ref{LemmaAMSProcess}, $(B^\mathcal{I},(\mathcal{B}_{B^\mathcal{I}})_\infty,(\nu(x,.))_\infty, T_B)$ is an R-AMS dynamical system $\mu$-a.e. By Proposition  \ref{PropositionKernelRAMSisRAMSChannel}, the channel restriction $\nu_\infty$ is R-AMS w.r.t $\mu_\infty$. Thanks to Proposition \ref{PropositionNecessarySufficientConditionAMSChannel2}, $\nu$ is AMS w.r.t. $\mu$.
\end{proof}
Markov channels  have been shown to be AMS in \cite{KiefferRahe81} following a quite involved proof. A Markov channel $[A,\nu,B]$ is a family of finite non-homogene\-ous Markov processes with distributions $\nu(x,.), x \in A^\mathcal{I}$. From \cite{FaigleSchonuth07}, a finite non-homogeneous Markov process is AMS. Thus, by Proposition \ref{PropositionKernelAMSisAMSChannel} and Proposition \ref{PropositionNecessarySufficientConditionAMSChannel2}, a Markov channel is AMS.

%\pagebreak
\section{Cascades of Channels}\label{SectionCascadesOfChannels}

Cascades of channels arise, for example, when considering a communication system which is a sequence or cascade made of a coder (deterministic channel), of a noisy communication channel and of a decoder (deterministic channel). Another example is that of multi-hop communications through a sequence of noisy channels. 

For ease of reading, channels will be denoted by conditional probabilities: 
$$
\nu(x,.) = P_{Y|X}(.|x)
$$
Two channels $[A,P_{Y|X},B]$ and $[B,P_{Z|Y},C]$ form a cascade if, given a source $[A,X]$, $X \rightarrow Y \rightarrow Z$ is a Markov chain, where $Y$ and $Z$ are respectively the processes corresponding to the output marginals of $P_{XY}$ and $P_{YZ}$:
$$
\forall H \in \mathcal{B}_{C^\mathcal{I}}, P_{Z|XY}(H|xy)=P_{Z|Y}(H|y) \text{  }P_{XY}\text{-a.e.}
$$
In this case (cf \cite{Gray11}):
$$
\forall H \in \mathcal{B}_{C^\mathcal{I}}, P_{Z|X}(H|x)=\int_{B^\mathcal{I}} P_{Z|Y}(H|y) dP_{Y|X}
$$
The cascading operation on channels is in fact a Markovian composition.

The propositions given below show that the classes of stationary channels, of quasi-stationary channels, of R-AMS channels and of AMS channels are stable for cascading or Markovian composition. It is shown that if a cascade ends with a recurrent channel then the cascade is recurrent.

The following proposition is proved in \cite{Gray11}.
\begin{proposition}\label{PropositionCascadeOfStationaryChannels}
The cascade of two stationary channels $[A,P_{Y|X},B]$ and $[B,P_{Z|Y},C]$ is a stationary channel $[A,P_{Z|X},C]$. 
\end{proposition}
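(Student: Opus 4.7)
The plan is to verify stationarity of the cascade directly from Definition \ref{DefinitionStationaryChannel}. Fix an arbitrary stationary source $[A,X]$ with distribution $\mu$; it suffices to show that for every $H \in \mathcal{B}_{C^\mathcal{I}}$,
$$
P_{Z|X}(x, T_C^{-1}H) \;=\; P_{Z|X}(T_A x, H) \qquad \mu\text{-a.e.}
$$
Starting from the cascade integral formula $P_{Z|X}(x, H) = \int_{B^\mathcal{I}} P_{Z|Y}(y, H)\, dP_{Y|X}(\cdot|x)$, I would rewrite the left-hand side as $\int P_{Z|Y}(y, T_C^{-1}H)\, dP_{Y|X}(\cdot|x)$ and transform the integrand using the stationarity of $P_{Z|Y}$.

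To apply the defining relation $P_{Z|Y}(y, T_C^{-1}H) = P_{Z|Y}(T_B y, H)$ inside the integral, I need a stationary source on $B^\mathcal{I}$ with respect to which this relation holds a.e. The natural candidate is $P_Y = \mu P_Y$, the output marginal of the hookup $\mu P_{Y|X}$. By the stationarity computation in Section \ref{SectionChannelsStabilityProperties}, the joint distribution $\mu P_{Y|X}$ is stationary, hence so is its marginal $P_Y$. Stationarity of $P_{Z|Y}$ therefore gives $P_{Z|Y}(y, T_C^{-1}H) = P_{Z|Y}(T_B y, H)$ for $P_Y$-a.e. $y$, and since $P_Y = \int P_{Y|X}(\cdot|x)\,d\mu(x)$ any $P_Y$-null set is $P_{Y|X}(\cdot|x)$-null for $\mu$-a.e. $x$, so the substitution is legitimate inside the inner integral for $\mu$-a.e. $x$.

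After substitution I have $\int P_{Z|Y}(T_B y, H)\, dP_{Y|X}(\cdot|x)$, and I would apply the transfer theorem to rewrite this as $\int P_{Z|Y}(y', H)\, dQ_x(y')$ where $Q_x := P_{Y|X}(\cdot|x)\circ T_B^{-1}$. The stationarity of the first channel $P_{Y|X}$ (w.r.t.\ the stationary $\mu$) yields $Q_x(G) = P_{Y|X}(x, T_B^{-1}G) = P_{Y|X}(T_A x, G)$ for $\mu$-a.e. $x$. Substituting back, the expression collapses to $\int P_{Z|Y}(y', H)\, dP_{Y|X}(\cdot|T_A x) = P_{Z|X}(T_A x, H)$, which is exactly the required identity. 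Finally, measurability of $x \mapsto P_{Z|X}(x,H)$ is inherited from the two component kernels, so $P_{Z|X}$ is a genuine probability kernel.

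The main obstacle is the bookkeeping of the exceptional null sets, which a priori depend both on $H$ and on the particular stationary $\mu$ under consideration. The key observation that controls this is that the null set produced by stationarity of $P_{Z|Y}$ lives on $B^\mathcal{I}$ under the stationary output distribution $\mu P_Y$, and can therefore be transferred through the slice kernel $P_{Y|X}(\cdot|x)$ by Fubini, while the null set produced by stationarity of $P_{Y|X}$ already lies in $A^\mathcal{I}$ under $\mu$. Once these two exceptional sets are combined, all equalities hold $\mu$-a.e., which is exactly what Definition \ref{DefinitionStationaryChannel} requires.
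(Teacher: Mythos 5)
Your argument is correct in substance, but note that the paper itself gives no proof of this proposition --- it simply cites Gray (2011) --- so the only in-paper comparison point is the proof of the quasi-stationary analogue, Proposition \ref{PropositionCascadeOfQuasiStationaryChannels}. That proof is forced to work at the level of the joint distribution $P_{XYZ}$ on triple rectangles (with Fubini and a division by $P_Y(T_B^{-1}G)$), because quasi-stationarity only yields an a.e.\ identity on the restricted $\sigma$-field $T_A^{-1}\mathcal{B}_{A^\mathcal{I}}$. You instead exploit the stronger pointwise-a.e.\ kernel identity available for genuinely stationary channels and verify Definition \ref{DefinitionStationaryChannel} directly on the composed kernel; this is the natural route and is essentially the argument in Gray's book. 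Your handling of the first exceptional set (transferring the $P_Y$-null set through the slices $P_{Y|X}(\cdot|x)$ by Fubini) is exactly right. One point you should make explicit: in the final substitution you replace integration against $Q_x = P_{Y|X}(\cdot|x)\circ T_B^{-1}$ by integration against $P_{Y|X}(\cdot|T_A x)$ for the \emph{general} bounded measurable integrand $y'\mapsto P_{Z|Y}(y',H)$, which requires the two \emph{measures} to coincide for $\mu$-a.e.\ $x$, not merely $Q_x(G)=P_{Y|X}(T_A x,G)$ for each fixed $G$ off a $G$-dependent null set. This upgrade is legitimate here because the alphabets are standard, so $\mathcal{B}_{B^\mathcal{I}}$ is countably generated: take a single $\mu$-null set outside of which the two measures agree on a countable generating field and invoke uniqueness of extension. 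Your closing remark identifies the null-set dependence on $H$ and on $\mu$ (neither of which is problematic for the statement) but misses this dependence on $G$, which is the one place where the bookkeeping genuinely matters.
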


The same holds for quasi-stationary channels.

\begin{proposition}\label{PropositionCascadeOfQuasiStationaryChannels}
The cascade of   two quasi-stationary channels $[A,P_{Y|X},B]$ and $[B,P_{Z|Y},C]$ is a quasi-stationary channel $[A,P_{Z|X},C]$. 
\end{proposition}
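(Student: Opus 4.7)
The plan is to show that for any stationary source $\mu = P_X$ on $A^\mathcal{I}$, the hookup $\mu P_{Z|X}$ is stationary on $A^\mathcal{I} \times C^\mathcal{I}$; by Definition~\ref{DefinitionQuasiStationaryChannelStationaryChannel} this is exactly quasi-stationarity of the cascade. First I would build up the intermediate joints: since $P_{Y|X}$ is quasi-stationary and $\mu$ is stationary, $P_{XY} = \mu P_{Y|X}$ is stationary on $A^\mathcal{I} \times B^\mathcal{I}$, hence its $B$-marginal $P_Y$ is stationary; then, applying quasi-stationarity of $P_{Z|Y}$ to the stationary source $P_Y$ yields that $P_{YZ} = P_Y P_{Z|Y}$ is stationary on $B^\mathcal{I} \times C^\mathcal{I}$.

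Next I would invoke the integral characterization from Proposition~\ref{PropositionQuasiStationaryChannel}: since $\mu$ is stationary, quasi-stationarity of the cascade w.r.t.\ $\mu$ amounts to the identity $\int 1_F(T_A x)\, P_{Z|X}(T_C^{-1}H|x)\, d\mu = \int 1_F(T_A x)\, P_{Z|X}(H|T_A x)\, d\mu$ for all $F \in \mathcal{B}_{A^\mathcal{I}}$ and $H \in \mathcal{B}_{C^\mathcal{I}}$. Expanding $P_{Z|X}(H|x) = \int P_{Z|Y}(H|y)\, P_{Y|X}(dy|x)$ and using Fubini, the left-hand side becomes $\int 1_F(T_A x)\, P_{Z|Y}(T_C^{-1}H|y)\, dP_{XY}(x,y)$; for the right-hand side, applying stationarity of $\mu$ to $x \mapsto 1_F(x) P_{Z|X}(H|x)$ and then stationarity of $P_{XY}$ to $(x,y) \mapsto 1_F(x) P_{Z|Y}(H|y)$ successively rewrites it as $\int 1_F(T_A x)\, P_{Z|Y}(H|T_B y)\, dP_{XY}(x,y)$.

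The proof thus reduces to the vanishing $\int 1_F(T_A x)\, [P_{Z|Y}(T_C^{-1}H|y) - P_{Z|Y}(H|T_B y)]\, dP_{XY}(x,y) = 0$, and this is the delicate step. Equivalently, writing $P_{XYZ}(d(x,y,z)) = P_{XY}(d(x,y))\, P_{Z|Y}(dz|y)$, it says that the three-way joint $P_{XYZ}$ is stationary on rectangles of the form $T_A^{-1}F \times B^\mathcal{I} \times T_C^{-1}H$, from which stationarity of the $(A,C)$-marginal $P_{XZ} = \mu P_{Z|X}$ is immediate. To close, I would combine Proposition~\ref{PropositionQuasiStationaryChannel} applied to $P_{Z|Y}$ w.r.t.\ $P_Y$ (which says the bracketed $y$-factor integrates to zero against any $T_B^{-1}\mathcal{B}_{B^\mathcal{I}}$-measurable function on $(B^\mathcal{I}, P_Y)$) with the stationarity of $P_{XY}$, which controls how $1_F(T_A x)$ disintegrates over $y$ through the regular reverse conditional $P_{X|Y}(T_A^{-1}F|\cdot)$. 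The main obstacle is that this reverse-conditional disintegration is not itself $T_B^{-1}\mathcal{B}_{B^\mathcal{I}}$-measurable, so both quasi-stationarities must be used together with the Markov factorization $X \to Y \to Z$ to carry out the final vanishing on rectangular generators and then propagate it to $(A,C)$-rectangles by monotone class.
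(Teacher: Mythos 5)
Your setup is sound and parallels the paper's first step: $P_{XY}$, $P_Y$ and $P_{YZ}$ are stationary, and by Proposition~\ref{PropositionQuasiStationaryChannel} the claim reduces to the identity $\int 1_F(T_A x)\,\bigl[P_{Z|Y}(T_C^{-1}H|y)-P_{Z|Y}(H|T_B y)\bigr]\,dP_{XY}(x,y)=0$, i.e.\ $\int P_{X|Y}(T_A^{-1}F|y)\,g(y)\,dP_Y(y)=0$ with $g(y)=P_{Z|Y}(T_C^{-1}H|y)-P_{Z|Y}(H|T_B y)$. You also correctly diagnose the difficulty: quasi-stationarity of $P_{Z|Y}$ w.r.t.\ $P_Y$ only yields $\int \phi\, g\, dP_Y=0$ for $T_B^{-1}\mathcal{B}_{B^\mathcal{I}}$-measurable test functions $\phi$, and $P_{X|Y}(T_A^{-1}F|\cdot)$ is not such a function. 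But that is exactly where your proposal stops. The sentence ``both quasi-stationarities must be used together with the Markov factorization to carry out the final vanishing'' is a plan, not an argument: the quantity you must show vanishes is $E_{P_Y}\bigl[\mathrm{Cov}\bigl(P_{X|Y}(T_A^{-1}F|\cdot),\,P_{Z|Y}(T_C^{-1}H|\cdot)\mid T_B^{-1}\mathcal{B}_{B^\mathcal{I}}\bigr)\bigr]$, and you give no mechanism forcing this conditional covariance to average to zero --- note that quasi-stationarity of $P_{Y|X}$ has already been fully spent in establishing stationarity of $P_{XY}$, so it is not available as an unused resource in the form you invoke it. Since the whole content of the proposition lives in this one step, the proposal is not a proof.

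For comparison, the paper does not pass through the reverse kernel $P_{X|Y}$ at all. It works with the three-fold joint $P_{XYZ}$ evaluated on rectangles $T_A^{-1}F\times T_B^{-1}G\times T_C^{-1}H$, uses the Markov property to replace $P_{Z|XY}$ by $P_{Z|Y}$, multiplies by $P_Y(T_B^{-1}G)$ and interchanges the $dP_Y$ and $dP_{XY}$ integrations so as to apply the identity $\int_{T_B^{-1}G}P_{Z|Y}(T_C^{-1}H|y)\,dP_Y=\int_{T_B^{-1}G}P_{Z|Y}(H|T_B y)\,dP_Y$ of Proposition~\ref{PropositionQuasiStationaryChannel} to an inner $dP_Y$-integral, then folds back via stationarity of $P_{XY}$ to obtain $P_{XYZ}(T_A^{-1}F\times T_B^{-1}G\times T_C^{-1}H)=P_{XYZ}(F\times G\times H)$ and concludes on the generated $\sigma$-field. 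You have in fact located precisely the point where that interchange carries the entire burden of the proof; to complete your argument you would need either to reproduce and justify such an interchange or to give a direct proof that the conditional covariance term above vanishes. Flagging the obstacle, however accurately, does not discharge it.
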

\begin{proof}
Let $P_X$ be the distribution of a stationary source on the alphabet $A$. $P_{XY}$ and $P_Y$  respectively denote the distribution of the hookup of the source $[A,X]$ and of the channel $[A,P_{Y|X},B]$ and  the output marginal. The channels are quasi-stationary then $P_{XY}$, $P_Y$, $P_{YZ}$ and $P_Z$ are stationary.

 For any $F\in \mathcal{B}_{A^\mathcal{I}}$, any $G\in \mathcal{B}_{B^\mathcal{I}}$ and any $H\in \mathcal{B}_{C^\mathcal{I}}$:
\begin{multline}
P_{Y}(T_B^{-1}G).P_{XYZ}(T_A^{-1}F\times T_B^{-1}G \times T_C^{-1}H)  = \nonumber \\
					 \int_{T_B^{-1}G} P_{XYZ}(T_A^{-1}F\times T_B^{-1}G \times T_C^{-1}H) dP_Y = \nonumber \\
					  \int_{T_B^{-1}G} \int_{T_A^{-1}F\times T_B^{-1}G} P_{Z|XY}(T_C^{-1}H|xy) dP_{XY} dP_Y = \nonumber \\
					  \int_{T_B^{-1}G} \int_{T_A^{-1}F\times T_B^{-1}G} P_{Z|Y}(T_C^{-1}H|y) dP_{XY} dP_Y\nonumber
\end{multline}
The last step is due to the fact that $X\rightarrow Y \rightarrow Z$ is a Markov chain ($P_{Z|XY}(T_C^{-1}H|xy)=P_{Z|Y}(T_C^{-1}H|y)$). Thanks to Fubini's theorem:
\begin{multline}
P_{Y}(T_B^{-1}G).P_{XYZ}(T_A^{-1}F\times T_B^{-1}G \times T_C^{-1}H) =   \\
			 \int_{T_A^{-1}F\times T_B^{-1}G} \int_{T_B^{-1}G} P_{Z|Y}(T_C^{-1}H|y) dP_Y dP_{XY} \nonumber
\end{multline}
The channel $[B,P_{Z|Y},C]$ is quasi-stationary then, from Proposition \ref{PropositionQuasiStationaryChannel}:
$$
\int_{T_B^{-1}G} P_{Z|Y}(T_C^{-1}H|y) dP_Y= \int_{T_B^{-1}G} P_{Z|Y}(H|T_B y) dP_Y
$$
Then
\begin{multline}
P_{Y}(T_B^{-1}G).P_{XYZ}(T_A^{-1}F\times T_B^{-1}G \times T_C^{-1}H)  =  \\
			   \int_{T_A^{-1}F\times T_B^{-1}G} \int_{T_B^{-1}G} P_{Z|Y}(H|T_B y) dP_Y dP_{XY} = \nonumber \\
					\int_{T_B^{-1}G} \int_{T_A^{-1}F\times T_B^{-1}G} P_{Z|Y}(H|T_B y) dP_{XY} dP_Y \nonumber
\end{multline}
The last step holds thanks to Fubini's theorem. $X\rightarrow Y \rightarrow Z$ is a Markov chain and $P_{XY}$ is stationary, hence:
\begin{multline}
P_{Y}(T_B^{-1}G).P_{XYZ}(T_A^{-1}F\times T_B^{-1}G \times T_C^{-1}H)  = \\
		 \int_{T_B^{-1}G} \int_{T_A^{-1}F\times T_B^{-1}G} P_{Z|XY}(H|T_A x T_B y) dP_{XY} dP_Y = \nonumber \\
					\int_{T_B^{-1}G} \int_{F\times G} P_{Z|XY}(H|xy) dP_{XY} dP_Y = \nonumber \\
						 P_{Y}(T_B^{-1}G).P_{XYZ}(F\times G \times H) \nonumber
\end{multline}
Then, if $P_{Y}(T_B^{-1}G)\neq 0$, $P_{XYZ}(T_A^{-1}F\times T_B^{-1}G \times T_C^{-1}H) = P_{XYZ}(F\times G \times H)$.

 If $P_{Y}(T_B^{-1}G) = 0$, since $P_Y$ is stationary, $P_{Y}(G) = 0$. Then $P_{XYZ}(T_A^{-1}F\times T_B^{-1}G \times T_C^{-1}H) = P_{XYZ}(F\times G \times H)=0$.

Then $P_{XYZ}$ is stationary on rectangles thus stationary on the $\sigma$-field generated by the rectangles. This implies that $P_{XZ}$ is stationary. The channel $[A,P_{Z|X},C]$ is then quasi-stationary.
\end{proof}

\begin{proposition}\label{PropositionCascadeOfRecurrentChannels}
Let $\nu = [A,P_{Y|X},B]$ and $\nu' = [B,P_{Z|Y},C]$ two channels in cascade. If $[B,P_{Z|Y},C]$ is a recurrent (or equivalently  incompressible) channel then the cascade $\nu\nu'=[A,P_{Z|X},C]$ is a recurrent channel.
\end{proposition}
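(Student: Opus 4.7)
Plan: Fix an arbitrary recurrent source $[A,X]$ with distribution $P_X$; by Definition~\ref{DefinitionRecurrentChannel} it is enough to show that the hookup $P_{XZ}=P_X P_{Z|X}$ is recurrent. I would reduce to rectangles via Lemma~\ref{LemmaRecurrenceOnRectangles}: pick $F\in\mathcal{B}_{A^\mathcal{I}}$ with $T_A^{-1}F\subset F$ and $H\in\mathcal{B}_{C^\mathcal{I}}$ with $T_C^{-1}H\subset H$ and aim for
\[
P_{XZ}\bigl((F\times H)\setminus T_{AC}^{-1}(F\times H)\bigr)=0.
\]

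Unfolding the Markov cascade identity $P_{Z|X}(x,\cdot)=\int P_{Z|Y}(y,\cdot)\,dP_{Y|X}(x,\cdot)$ rewrites this incompressibility defect as an integral against $P_{XY}$. Splitting $P_{Z|Y}(y,H)=P_{Z|Y}(y,T_C^{-1}H)+P_{Z|Y}(y,H\setminus T_C^{-1}H)$ and using the recurrence of $P_X$ to discard the contribution of $F\setminus T_A^{-1}F$ (a $P_X$-null set) collapses the defect to
\[
\int_{T_A^{-1}F\times B^\mathcal{I}}P_{Z|Y}(y,H\setminus T_C^{-1}H)\,dP_{XY}.
\]
I would then close the argument by invoking the channel-recurrence of $P_{Z|Y}$ through Proposition~\ref{PropositionRecurrentChannel2}: each $P_{Z|Y}(y,\cdot)$ is itself recurrent, hence incompressible at the backward-invariant set $H$, on a full-measure set of $y$, so $P_{Z|Y}(y,H\setminus T_C^{-1}H)=0$ on that set and the integrand vanishes.

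The main obstacle is that Proposition~\ref{PropositionRecurrentChannel2} only delivers this a.e.\ incompressibility with respect to recurrent reference measures on $B$, whereas the $Y$-marginal $P_Y$ arising from $P_X$ through $P_{Y|X}$ need not itself be recurrent. To circumvent this I would lift $P_{Z|Y}$ to the channel $\tilde{\nu}(x,y,H):=P_{Z|Y}(y,H)$ from $A^\mathcal{I}\times B^\mathcal{I}$ to $C^\mathcal{I}$ and first verify that $\tilde{\nu}$ is itself a recurrent channel: any recurrent source $\rho$ on $A^\mathcal{I}\times B^\mathcal{I}$ has a recurrent $Y$-marginal $\rho_Y$ (projection preserves incompressibility, by the same rectangle argument as above), and since $P_{Z|Y}$ is recurrent with respect to $\rho_Y$, Proposition~\ref{PropositionRecurrentChannel2} yields recurrence of the hookup $\rho\tilde{\nu}$. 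Then applying Lemma~\ref{LemmaRecurrentChannel1} to $\tilde{\nu}$ with the joint test set $O=T_A^{-1}F\times B^\mathcal{I}\times H$ and combining with the $P_X$-recurrence on the $F$-direction forces the $P_{XYZ}$-measure of $T_A^{-1}F\times B^\mathcal{I}\times(H\setminus T_C^{-1}H)$ to vanish independently of whether $P_Y$ is recurrent, completing the proof.
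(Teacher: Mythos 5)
Your computational core — reduce to rectangles via Lemma~\ref{LemmaRecurrenceOnRectangles}, unfold the Markov identity, use recurrence of $P_X$ to discard $F\setminus T_A^{-1}F$, and collapse the incompressibility defect to $\int_{T_A^{-1}F\times B^\mathcal{I}}P_{Z|Y}(H\setminus T_C^{-1}H|y)\,dP_{XY}$ — is exactly the paper's computation, which derives $\int_F P_{Z|X}(H|x)\,dP_X=\int_F P_{Z|X}(T_C^{-1}H|x)\,dP_X$ and concludes via Proposition~\ref{PropositionRecurrentChannel2}. You have also correctly isolated the one genuinely delicate step: killing that integral requires $P_{Z|Y}(y,\cdot)$ to be incompressible for $P_Y$-a.e.\ $y$, whereas recurrence of the channel $[B,P_{Z|Y},C]$ (via Proposition~\ref{PropositionRecurrentChannel2}) only yields this a.e.\ with respect to \emph{recurrent} measures on $B^\mathcal{I}$, and nothing in the hypotheses forces $P_Y$ to be recurrent. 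The paper's own proof simply posits the identity $P_{Z|Y}(T_C^{-1}H|y)=P_{Z|Y}(H|y)$ $P_Y$-a.e.\ at this point, so you noticed something the printed proof glosses over.

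Unfortunately your circumvention is circular and does not close the gap. Showing that the lifted kernel $\tilde{\nu}((x,y),\cdot)=P_{Z|Y}(y,\cdot)$ is a recurrent channel on $A^\mathcal{I}\times B^\mathcal{I}$ only tells you that $\rho\tilde{\nu}$ is recurrent for \emph{recurrent} joint sources $\rho$; Lemma~\ref{LemmaRecurrentChannel1} then delivers the section condition only $\rho$-a.e.\ for such $\rho$. To say anything about $T_A^{-1}F\times B^\mathcal{I}\times(H\setminus T_C^{-1}H)$ you must take $\rho=P_{XY}$, and $P_{XY}$ is not known to be recurrent; indeed, by your own remark that projection preserves incompressibility, recurrence of $P_{XY}$ would force recurrence of $P_Y$ — precisely the hypothesis you set out to avoid. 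The obstruction is essential, not technical: take $P_X$ an i.i.d.\ fair coin, $P_{Y|X}(x,\cdot)=\delta_{y_0}$ with $y_0=100\ldots$ (so $P_Y=\delta_{y_0}$ is non-recurrent and every recurrent measure on $B^\mathcal{I}$ gives $\{y_0\}$ measure zero), and $P_{Z|Y}(y,\cdot)$ equal to a fixed recurrent probability for $y\neq y_0$ but to the non-recurrent point mass $\delta_{z_0}$, $z_0=100\ldots$, at $y=y_0$. Then $[B,P_{Z|Y},C]$ is a recurrent channel by Proposition~\ref{PropositionRecurrentChannel2}, yet $P_{Z|X}(x,\cdot)=\delta_{z_0}$ is non-recurrent for every $x$, so the cascade fails to be recurrent with respect to the recurrent source $P_X$. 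Some additional hypothesis (e.g.\ recurrence of the first channel, which makes $P_{XY}$ and hence $P_Y$ recurrent and lets both your argument and the paper's go through) is unavoidable; as written, your proof cannot be completed.
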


\begin{proof}
Let $P_{X}$ be the distribution of a recurrent source.

$\forall H \in \mathcal{B}_{C^\mathcal{I}}$, such that $T_C^{-1}H \subset H$, assume that 
$$
P_{Z|Y}(T_C^{-1}H|y) = P_{Z|Y}(H|y)\text{  }P_Y\text{-a.e.}
$$
 Then $P_{Z|Y}(T_C^{-1}H|y) = P_{Z|Y}(H|y)$ $P_{XY}$-a.e. 

This implies that, for any $F\in \mathcal{B}_{A^\mathcal{I}}$:
\begin{multline}
\int_F P_{Z|X}(H|x) dP_X= \int_F \int  P_{Z|Y}(H|y) dP_{Y|X} dP_X  \\
			= \int_{F\times B^\mathcal{I}} P_{Z|Y}(H|y) dP_{XY} \\
			= \int_{F\times B^\mathcal{I}} P_{Z|Y}(T_C^{-1}H|y) dP_{XY} \\
			=  \int_F P_{Z|X}(T_C^{-1}H|x)  dP_X \nonumber
\end{multline}
Then
$$
P_{Z|X}(H|x) = P_{Z|X}(T_C^{-1}H|x) \text{  }P_X\text{-a.e.}
$$
By Proposition \ref{PropositionRecurrentChannel2}, the channel $[A,P_{Z|X},C]$ is recurrent w.r.t. $P_X$.
\end{proof}

\begin{proposition}\label{PropositionCascadeOfRAMSChannels}
Let $\nu = [A,P_{Y|X},B]$ and $\nu' = [B,P_{Z|Y},C]$ two R-AMS channels in cascade. Then 
\begin{enumerate}
	\item the cascade $\nu\nu'=[A,P_{Z|X},C]$ is an  R-AMS channel
	\item for any stationary source $[A,X]$ with distribution $\mu$, 
	$$
	\mu\nu\nu' \ll \mu\overline{\nu}_\mu\overline{\nu'}_{\overline{\eta}}
	$$
	 where $\overline{\eta}$ is the stationary output of the hookup $\mu \overline{\nu}_\mu$
	\item for any R-AMS source  $[A,X]$ with distribution $\mu$, 
	$$
	\mu\nu\nu' \ll \overline{\mu}\nu\nu' \ll \overline{\mu}\ \overline{\nu}_{\overline{\mu}}\ \overline{\nu'}_{\overline{\eta}}
	$$
	 where $\overline{\eta}$ is the stationary output of the hookup $\overline{\mu}\overline{\nu}_{\overline{\mu}}$
\end{enumerate}
\end{proposition}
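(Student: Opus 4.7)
The strategy is to reduce to stationary sources via Proposition \ref{PropositionNecessarySufficientConditionRAMSChannel}, construct a stationary probability dominating $\mu\nu\nu'$ as a cascade of quasi-stationary means, and then derive item 3 from item 2 using Lemma \ref{Lemma2Fontana}. First I would fix a stationary $\mu$ and set $\tilde P_{XY}=\mu\overline{\nu}_\mu$, which is stationary because $\overline{\nu}_\mu$ is quasi-stationary w.r.t.\ $\mu$. Let $\overline{\eta}$ denote its $Y$-marginal, a stationary distribution on $B^\mathcal{I}$, and let $\overline{\nu'}_{\overline{\eta}}$ be the quasi-stationary mean of $\nu'$ w.r.t.\ $\overline{\eta}$. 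Mimicking the three-coordinate computation in the proof of Proposition \ref{PropositionCascadeOfQuasiStationaryChannels} shows that the full joint $\tilde P_{XYZ}$ built from $\mu$, $\overline{\nu}_\mu$ and $\overline{\nu'}_{\overline{\eta}}$ is stationary, hence so is its $(X,Z)$-marginal $\mu\overline{\nu}_\mu\overline{\nu'}_{\overline{\eta}}$.

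The core step is the joint dominance $P_{XYZ} \ll \tilde P_{XYZ}$. Since $\nu$ is R-AMS, Lemma \ref{LemmaChannelDominanceEquivHookupDominance} gives $P_{XY}=\mu\nu \ll \mu\overline{\nu}_\mu=\tilde P_{XY}$, and in particular the $Y$-marginal $\eta$ of $P_{XY}$ satisfies $\eta \ll \overline{\eta}$. Proposition \ref{PropositionNecessarySufficientConditionRAMSChannel} applied to $\nu'$ gives $\nu'(y,\cdot) \ll \overline{\nu'}_{\overline{\eta}}(y,\cdot)$ $\overline{\eta}$-a.e.; the chain $\overline{\eta}$-a.e.\ $\Rightarrow$ $\eta$-a.e.\ $\Rightarrow$ $P_{XY}$-a.e.\ lifts this to the joint. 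For any $O \in \mathcal{B}_{A^\mathcal{I}\times B^\mathcal{I}\times C^\mathcal{I}}$ with $\tilde P_{XYZ}(O)=0$, the inner slice $\overline{\nu'}_{\overline{\eta}}(y,O_{xy})$ vanishes $\tilde P_{XY}$-a.e., hence $P_{XY}$-a.e., and the pointwise dominance then forces $\nu'(y,O_{xy})=0$ $P_{XY}$-a.e., whose integral is $P_{XYZ}(O)$. Marginalizing onto $A^\mathcal{I}\times C^\mathcal{I}$ yields item 2, and since the dominating measure is stationary, Lemma \ref{LemmaTh7dot4Gray2009} makes $\mu\nu\nu'$ R-AMS, proving item 1 via Proposition \ref{PropositionNecessarySufficientConditionRAMSChannel}.

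For item 3, an R-AMS $\mu$ satisfies $\mu \ll \overline{\mu}$ by Lemma \ref{LemmaTh7dot4Gray2009}; applying Lemma \ref{Lemma2Fontana} to the composite channel $\nu\nu'$ gives $\mu\nu\nu' \ll \overline{\mu}\nu\nu'$, and item 2 applied to the stationary source $\overline{\mu}$ closes the chain $\overline{\mu}\nu\nu' \ll \overline{\mu}\,\overline{\nu}_{\overline{\mu}}\,\overline{\nu'}_{\overline{\eta}}$. The main technical obstacle is keeping the ``a.e.'' qualifier aligned across three different measures in the chain $\overline{\eta} \to \eta \to P_{XY}$: the pointwise dominance $\nu'(y,\cdot)\ll\overline{\nu'}_{\overline{\eta}}(y,\cdot)$ is known only $\overline{\eta}$-a.e., and every arrow has to preserve a.e.\ statements through a previously established dominance. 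A secondary point worth being explicit about is the extension of the computation of Proposition \ref{PropositionCascadeOfQuasiStationaryChannels} to certify stationarity of the entire $(X,Y,Z)$ joint rather than only the $(X,Z)$ marginal, so that the $P_{XYZ}$-level dominance argument is legitimate.
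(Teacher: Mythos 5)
Your proposal is correct and follows essentially the same route as the paper: build the stationary dominating measure $\mu\overline{\nu}_\mu\overline{\nu'}_{\overline{\eta}}$ via Proposition \ref{PropositionCascadeOfQuasiStationaryChannels}, establish $\mu\nu\nu' \ll \mu\overline{\nu}_\mu\overline{\nu'}_{\overline{\eta}}$ by the slice argument on null sets $O$ using the pointwise dominances $\nu(x,\cdot)\ll\overline{\nu}_\mu(x,\cdot)$ and $\nu'(y,\cdot)\ll\overline{\nu'}_{\overline{\eta}}(y,\cdot)$, and derive item 3 from Lemma \ref{Lemma2Fontana}. The only cosmetic difference is that you transport the $\overline{\eta}$-a.e.\ statement through the output marginal $\eta$ of $\mu\nu$, whereas the paper passes directly from $\mu\overline{\nu}_\mu$-a.e.\ to $\mu\nu$-a.e.\ via $\mu\nu\ll\mu\overline{\nu}_\mu$; both are valid.
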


\begin{proof}
Let $\mu$ be the distribution of a stationary source on alphabet $A$. $\nu$ is R-AMS, thus, by Proposition \ref{PropositionNecessarySufficientConditionRAMSChannel}, there exists a quasi-stationary channel $\overline{\nu}_\mu$ such that 
$$
\nu(x,.) \ll\overline{\nu}_{\mu}(x,.) \text{  }\mu\text{.a.e.}
$$
and the hookup $\mu\overline{\nu}_\mu$ is stationary which implies that its output marginal $\overline{\eta}$ is stationary.

$\overline{\eta}$ is stationary and $\nu'$ is R-AMS. By Proposition \ref{PropositionNecessarySufficientConditionRAMSChannel}, there exists a quasi-stationary channel $\overline{\nu'}_{\overline{\eta}}$ such that 
$$
\nu'(y,.) \ll\overline{\nu'}_{\overline{\eta}}(y,.) \text{  }\overline{\eta}\text{.a.e.}
$$
 and, by Proposition \ref{PropositionCascadeOfQuasiStationaryChannels}, the cascade $\overline{\nu}_\mu\overline{\nu'}_{\overline{\eta}}$ is quasi-stationary. Thus $\mu\overline{\nu}_\mu\overline{\nu'}_{\overline{\eta}}$ is stationary.

Let $O \in \mathcal{B}_{A^\mathcal{I}\times B^\mathcal{I}\times C^\mathcal{I}}$ such that 
$$
\mu\overline{\nu}_\mu\overline{\nu'}_{\overline{\eta}}(O)=0
$$
Then 
\begin{multline}
	 \int \overline{\nu'}_{\overline{\eta}}(y,O_{xy}) d\mu\overline{\nu}(x,y) = 0 \\
	\Rightarrow \overline{\nu'}_{\overline{\eta}}(y,O_{xy}) = 0  \text{   }\mu\overline{\nu}\text{.a.e.}\\
	\Rightarrow \nu'(y,O_{xy}) = 0  \text{   }\mu\overline{\nu}\text{.a.e.}\nonumber
\end{multline}
$\nu(x,.) \ll \overline{\nu}(x,.)$ $\mu$.a.e. implies $\mu\nu \ll \mu\overline{\nu}$. Then
$$
\nu'(y,O_{xy}) = 0  \text{   }\mu\nu\text{.a.e.}
$$
In other words $\mu\nu\nu'\ll \mu\overline{\nu}\overline{\nu'}_{\overline{\eta}}$. Then $\nu\nu'$ is R-AMS w.r.t. $\mu$. 

This closes the proof of the two first statements.

Let $\mu$ be an R-AMS source. $\mu$ is AMS and recurrent then $\mu\ll\overline{\mu}$. By Lemma \ref{Lemma2Fontana}, this implies that $\mu\alpha \ll \overline{\mu}\alpha$ for any channel $[A,\alpha,C]$.   If $\alpha =\nu\nu' $, from the two first statements of the proposition, it follows that
$$
\mu\nu\nu' \ll \overline{\mu}\nu\nu' \ll \overline{\mu}\ \overline{\nu}_{\overline{\mu}}\ \overline{\nu'}_{\overline{\eta}}
$$
\end{proof}

\begin{proposition}\label{PropositionCascadeOfAMSChannels}
Let $\nu = [A,P_{Y|X},B]$ and $\nu' = [B,P_{Z|Y},C]$ two AMS channels in cascade. 
\begin{enumerate}
	\item the cascade $\nu\nu'= [A,P_{Z|X},C]$ is an  AMS channel
	\item for any stationary source $[A,X]$ with distribution $\mu$, $\mu\nu\nu' \ll^a \mu\overline{\nu}_\mu\overline{\nu'}_{\overline{\eta}}$ where $\overline{\eta}$ is the stationary output of the hookup $\overline{\nu}_\mu$
	\item for any AMS source  $[A,X]$ with distribution $\mu$, $\mu\nu\nu' \ll^a \overline{\mu}\nu\nu' \ll^a \overline{\mu}\ \overline{\nu}_{\overline{\mu}}\ \overline{\nu'}_{\overline{\eta}}$ where $\overline{\eta}$ is the stationary output of the hookup $\overline{\mu}\overline{\nu}_{\overline{\mu}}$.
\end{enumerate}
\end{proposition}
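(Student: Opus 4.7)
The approach is to adapt the proof of Proposition \ref{PropositionCascadeOfRAMSChannels} systematically, replacing absolute dominance $\ll$ by asymptotic dominance $\ll^a$ and switching between the two via Lemma \ref{LemmaDominanceOnTailSigmaField}. I would prove statement (2) first (which already implies $\mu\nu\nu'$ is AMS for stationary $\mu$, since the target is stationary); then (1) follows from (2) and Proposition \ref{PropositionNecessarySufficientConditionAMSChannel2}(3), while (3) is obtained from (2) by passing through the stationary mean $\overline{\mu}$ of an AMS source.

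For (2), fix $\mu$ stationary. Proposition \ref{PropositionNecessarySufficientConditionAMSChannel2}(4) applied to $\nu$ delivers a quasi-stationary $\overline{\nu}_\mu$ with $\nu(x,\cdot)\ll^a\overline{\nu}_\mu(x,\cdot)$ $\mu_\infty$-a.e., so $\mu\overline{\nu}_\mu$ is stationary with stationary $B$-marginal $\overline{\eta}$. Applying (4) again to $\nu'$ and the stationary source $\overline{\eta}$ gives a quasi-stationary $\overline{\nu'}_{\overline{\eta}}$ with $\nu'(y,\cdot)\ll^a\overline{\nu'}_{\overline{\eta}}(y,\cdot)$ $\overline{\eta}_\infty$-a.e.; by Proposition \ref{PropositionCascadeOfQuasiStationaryChannels}, $\overline{\nu}_\mu\overline{\nu'}_{\overline{\eta}}$ is quasi-stationary, hence $\mu\overline{\nu}_\mu\overline{\nu'}_{\overline{\eta}}$ is stationary. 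By Lemma \ref{LemmaDominanceOnTailSigmaField}, the desired $\mu\nu\nu'\ll^a\mu\overline{\nu}_\mu\overline{\nu'}_{\overline{\eta}}$ reduces to $(\mu\nu\nu')_\infty\ll(\mu\overline{\nu}_\mu\overline{\nu'}_{\overline{\eta}})_\infty$ on $(\mathcal{B}_{A^\mathcal{I}\times B^\mathcal{I}\times C^\mathcal{I}})_\infty$. Critically, for any $O$ in this tail $\sigma$-field, each section $O_{xy}$ lies in $(\mathcal{B}_{C^\mathcal{I}})_\infty$: writing $O=T_{ABC}^{-n}O_n$ yields $O_{xy}=T_C^{-n}(O_n)_{T_A^n x,\,T_B^n y}$ for every $n$. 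Mimicking the R-AMS cascade proof then gives $\overline{\nu'}_{\overline{\eta}}(y,O_{xy})=0$ $\mu\overline{\nu}_\mu$-a.e., and, using that $\overline{\eta}$ is the $B$-marginal of $\mu\overline{\nu}_\mu$ together with $\nu'_\infty(y,\cdot)\ll\overline{\nu'}_{\overline{\eta},\infty}(y,\cdot)$ $\overline{\eta}_\infty$-a.e.\ (Lemma \ref{LemmaDominanceOnTailSigmaField}) and the tail-measurability of $O_{xy}$, we obtain $\nu'(y,O_{xy})=\nu'_\infty(y,O_{xy})=0$ $\mu\overline{\nu}_\mu$-a.e.

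The main obstacle will be the final transfer from $\mu\overline{\nu}_\mu$-a.e.\ to $\mu\nu$-a.e., because AMS-ness of $\nu$ only yields $(\mu\nu)_\infty\ll(\mu\overline{\nu}_\mu)_\infty$ on the $A\times B$-tail (via Lemma \ref{LemmaChannelDominanceEquivHookupDominance} applied to the restrictions $\nu_\infty$, $\overline{\nu}_{\mu,\infty}$), not ordinary dominance. I would close this gap by writing $\mu\nu\nu'(O)=\int\nu'_\infty(y,O_{xy})\,d(\mu\nu)$ and exploiting that, since $O=T_{ABC}^{-n}O_n$ for every $n$ and the integrand depends only on the tail-measurable section, the exceptional set $\{(x,y):\nu'_\infty(y,O_{xy})>0\}$ may be replaced modulo $\mu\overline{\nu}_\mu$-null sets by a set in $(\mathcal{B}_{A^\mathcal{I}\times B^\mathcal{I}})_\infty$, on which $(\mu\nu)_\infty\ll(\mu\overline{\nu}_\mu)_\infty$ forces the integral to vanish. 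Statement (1) then follows from (2) and Proposition \ref{PropositionNecessarySufficientConditionAMSChannel2}(3). For (3), given an AMS source $\mu$ with stationary mean $\overline{\mu}$, Lemma \ref{LemmaDominanceOnTailSigmaField} converts $\mu\ll^a\overline{\mu}$ into $\mu_\infty\ll\overline{\mu}_\infty$; Lemma \ref{Lemma2Fontana} applied to the composite channel $x\mapsto(y,z)$ and its tail restriction gives $(\mu\nu\nu')_\infty\ll(\overline{\mu}\nu\nu')_\infty$, hence $\mu\nu\nu'\ll^a\overline{\mu}\nu\nu'$, and chaining with (2) applied to the stationary source $\overline{\mu}$ yields $\mu\nu\nu'\ll^a\overline{\mu}\nu\nu'\ll^a\overline{\mu}\,\overline{\nu}_{\overline{\mu}}\,\overline{\nu'}_{\overline{\eta}}$.
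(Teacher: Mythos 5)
Your overall architecture (prove (2) for stationary sources, deduce (1) via Proposition \ref{PropositionNecessarySufficientConditionAMSChannel2} and (3) via the stationary mean) is sound, and your observation that sections $O_{xy}$ of a tail event $O$ lie in $(\mathcal{B}_{C^\mathcal{I}})_\infty$ is correct and genuinely useful. But the step you yourself flag as ``the main obstacle'' is a real gap, and the fix you propose does not work as stated. You claim that the exceptional set $\{(x,y):\nu'_\infty(y,O_{xy})>0\}$ can be replaced, modulo $\mu\overline{\nu}_\mu$-null sets, by a set in $(\mathcal{B}_{A^\mathcal{I}\times B^\mathcal{I}})_\infty$ ``because the integrand depends only on the tail-measurable section.'' It does not: the integrand depends on $y$ both through the section $O_{xy}$ (which is indeed a tail event of $C^\mathcal{I}$) and through the kernel $\nu'_\infty(y,\cdot)$ itself, which is only $\mathcal{B}_{B^\mathcal{I}}$-measurable in $y$. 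Concretely, if $\nu'(y,\cdot)=\delta_{f(y)}$ with $f$ depending on the coordinate $y_0$, then $\nu'(y,G)=1_G(f(y))$ depends on $y_0$ even when $G$ is a tail event, so the set where it is positive is not in $T_B^{-n}\mathcal{B}_{B^\mathcal{I}}$ for any $n\geq 1$, let alone the tail $\sigma$-field. The same objection applies to the set where $\overline{\nu'}_{\overline{\eta}}(y,O_{xy})>0$. Since the exceptional set is not (equivalent to) a tail set, the relation $(\mu\nu)_\infty\ll(\mu\overline{\nu}_\mu)_\infty$ cannot be applied to it, and the transfer from ``$\mu\overline{\nu}_\mu$-a.e.'' to ``$\mu\nu$-a.e.'' is not established. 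This is precisely where asymptotic dominance is strictly weaker than dominance, and no amount of rearranging the $\ll^a$ relations on the ambient spaces closes it.

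The paper takes a different and much shorter route that sidesteps this entirely: it notes that $(\nu\nu')_\infty=\nu_\infty\nu'_\infty$, that a channel is AMS if and only if its restriction to the tail $\sigma$-fields is R-AMS (Proposition \ref{PropositionNecessarySufficientConditionAMSChannel2}, statements (\ref{PropositionNSCondAMSChl2FirstStatement}) and (\ref{PropositionNSCondAMSChl2SecondStatement})), and then applies Proposition \ref{PropositionCascadeOfRAMSChannels} to the restricted channels $\nu_\infty$, $\nu'_\infty$, for which every dominance relation is genuine rather than asymptotic; statements (2) and (3) then fall out of Lemma \ref{LemmaDominanceOnTailSigmaField}. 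If you want to salvage your argument, the repair is to run your entire computation inside the restricted dynamical systems from the outset (so that the measures $(\mu\nu)_\infty$ and $(\mu\overline{\nu}_\mu)_\infty$ and the kernels $\nu'_\infty$ all live on tail $\sigma$-fields and the R-AMS cascade argument applies verbatim), rather than working on the full $\sigma$-fields and trying to descend to the tail at the last step.
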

\begin{proof}
The cascade $\nu_\infty \nu'_\infty$ is the restriction of the cascade $\nu\nu'$:  $\nu_\infty \nu'_\infty=(\nu\nu')_\infty$.  From Proposition \ref{PropositionCascadeOfRAMSChannels}, $(\nu\nu')_\infty$ is R-AMS thus $\nu\nu'$ is AMS. The other two statements derive from Lemma \ref{LemmaDominanceOnTailSigmaField} and Proposition  \ref{PropositionCascadeOfRAMSChannels}.
\end{proof}

%\pagebreak
\section{Quasi-stationary mean of an R-AMS channel with respect to a stationary source}\label{SectionQuasiStationaryMean}

In this section,  the quasi-stationary mean of an R-AMS channel w.r.t a {\em stationary} source is expressed as the limit of the Cesaro mean of a family of channels induced by the source, the channel and the shifts. First, it is proved that if, for a (non-recurrent) channel, this expression holds then the channel is AMS w.r.t the stationary source. Secondly, it is proved that this expression holds for any channel R-AMS w.r.t. a stationary source. These results and their proofs  generalize (and are adaptation of) those given by \cite{FontanaGrayKieffer81}.

Let $[A,X]$ be a stationary source with distribution $\mu$ and $[A,\nu,B]$ a channel. For any integer $i$, since $\mu$ is stationary, $\mu$ is the input marginal of $\mu\nu T_{AB}^{-i}$. Then there exists a channel $[A,\nu_i,B]$ such that 
$$
\mu\nu T_{AB}^{-i}= \mu\nu_i
$$

\begin{proposition}\label{PropositionAMSQuasiStationaryMean}
Let $[A,\nu,B]$ be a channel and $[A,X]$ be a stationary source with distribution $\mu$. If, for any  $G \in \mathcal{B}_{B^\mathcal{I}}$,  the limit 
$$
\overline{\nu}_{\mu}(x,G)= \lim_{n \to \infty} \frac{1}{n} \sum_{i=0}^{n-1} \nu_{i}(x,G)
$$
 exists $\mu$-a.e. then
\begin{itemize}
	\item the channel $[A,\nu,B]$ is AMS
	\item the channel $[A,\overline{\nu}_{\mu},B]$ is the quasi-stationary mean  of $[A,\nu,B]$ with respect to the stationary probability $\mu$.
\end{itemize}
\end{proposition}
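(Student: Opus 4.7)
The plan is to first verify that the pointwise Cesaro limit $\overline{\nu}_\mu(x,\cdot)$ defines a probability kernel, then to identify $\mu\overline{\nu}_\mu$ as the stationary mean of $\mu\nu$, and finally to use stationarity of $\mu$ to conclude that $\overline{\nu}_\mu$ is quasi-stationary.

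First, since the alphabets are standard, $\mathcal{B}_{B^\mathcal{I}}$ is generated by a countable field $\mathcal{C}$. For each $G \in \mathcal{C}$ the hypothesis supplies a $\mu$-null exceptional set $N_G$, and $N = \bigcup_{G \in \mathcal{C}} N_G$ is still $\mu$-null. For every $x \notin N$ the set function $G \mapsto \overline{\nu}_\mu(x,G)$ is well defined and finitely additive on $\mathcal{C}$ with total mass $1$. Each Cesaro average $\frac{1}{n}\sum_{i=0}^{n-1}\nu_i(x,\cdot)$ is a probability on the standard space $(B^\mathcal{I},\mathcal{B}_{B^\mathcal{I}})$, so by the Vitali--Hahn--Saks theorem this limiting set function extends uniquely to a countably additive probability $\overline{\nu}_\mu(x,\cdot)$ on $\mathcal{B}_{B^\mathcal{I}}$ and the pointwise convergence in fact holds setwise on the whole $\sigma$-field. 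A monotone-class argument then yields measurability of $x \mapsto \overline{\nu}_\mu(x,G)$ for every $G \in \mathcal{B}_{B^\mathcal{I}}$. Extending $\overline{\nu}_\mu(x,\cdot)$ arbitrarily (for instance as $\nu(x,\cdot)$) on the null set $N$ produces a genuine channel $[A,\overline{\nu}_\mu,B]$.

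Second, for an arbitrary rectangle $F \times G$ bounded convergence gives
$$
\frac{1}{n}\sum_{i=0}^{n-1} \mu\nu(T_{AB}^{-i}(F\times G)) = \int_F \frac{1}{n}\sum_{i=0}^{n-1}\nu_i(x,G)\,d\mu \longrightarrow \int_F \overline{\nu}_\mu(x,G)\,d\mu = \mu\overline{\nu}_\mu(F\times G).
$$
Hence the sequence of probability measures $\bigl\{\frac{1}{n}\sum_{i=0}^{n-1}\mu\nu\circ T_{AB}^{-i}\bigr\}_n$ converges on the generating field of finite disjoint unions of rectangles to the probability $\mu\overline{\nu}_\mu$. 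A second appeal to Vitali--Hahn--Saks, now on the standard space $(A^\mathcal{I}\times B^\mathcal{I},\mathcal{B}_{A^\mathcal{I}\times B^\mathcal{I}})$, extends this setwise convergence to the full $\sigma$-field. Therefore $\mu\nu$ is AMS with stationary mean $\overline{\mu\nu} = \mu\overline{\nu}_\mu$, so the channel $\nu$ is AMS with respect to $\mu$.

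Finally, $\overline{\mu\nu}$ is stationary and its input marginal is the stationary $\mu$, so by Definition~\ref{DefinitionQuasiStationaryChannelStationaryChannel} the channel $\overline{\nu}_\mu$ is quasi-stationary with respect to $\mu$; by the essential uniqueness (modulo $\mu$) of the kernel in the factorisation $\overline{\mu\nu} = \mu\,\overline{\nu}_\mu$, it is the quasi-stationary mean of $\nu$ with respect to $\mu$. The main obstacle is the two uses of Vitali--Hahn--Saks needed to pass from $\mu$-a.e. pointwise convergence of the Cesaro averages on a \emph{countable} generating field to setwise convergence on the full $\sigma$-field; selecting $\mathcal{C}$ countable is precisely what allows the exceptional null sets to be gathered into a single $\mu$-null set $N$ independent of $G$.
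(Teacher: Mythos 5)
Your overall route is the same as the paper's: build the kernel $\overline{\nu}_\mu$ from the pointwise Cesaro limits, integrate over rectangles with bounded convergence to identify $\lim_n \frac{1}{n}\sum_{i}\mu\nu T_{AB}^{-i}$ with $\mu\overline{\nu}_\mu$ on the generating field, extend to the $\sigma$-field, and read off quasi-stationarity from the stationarity of the limit together with the stationarity of the input marginal $\mu$. Your care in collecting the exceptional null sets over a countable generating field is a genuine refinement over the paper, which applies Vitali--Hahn--Saks for each fixed $x$ as if a single $\mu$-null set worked for all $G$ simultaneously.

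However, there is a real gap at the step you yourself identify as the main obstacle: both of your appeals to Vitali--Hahn--Saks ask the theorem to do something it does not do. Vitali--Hahn--Saks (Nikod\'ym) takes as \emph{hypothesis} setwise convergence of countably additive measures on the entire $\sigma$-field and concludes countable additivity (and uniform countable additivity) of the limit; it does not upgrade convergence on a generating field to setwise convergence on the generated $\sigma$-field. That upgrade is false in general: the uniform measures on $[0,1/n]$ converge to $\delta_0$ on the field generated by rational-endpoint intervals, and the limit is countably additive, yet $\mu_n(\{0\})=0\not\to\delta_0(\{0\})=1$. So your first use of VHS does not by itself show that the a.e.\ field-limit extends to a probability and that convergence holds for every $G\in\mathcal{B}_{B^\mathcal{I}}$, and your second use does not show that $\frac{1}{n}\sum_i \mu\nu T_{AB}^{-i}$ converges on all of $\mathcal{B}_{A^\mathcal{I}\times B^\mathcal{I}}$. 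What actually closes these gaps is the standard-space structure: a standard space admits a countable generating field with the countable extension property, so a normalized finitely additive set function on that field is automatically countably additive and extends uniquely (this is how the paper's Caratheodory step should be read), and the companion result in Gray (2009) that a dynamical system on a standard space is AMS as soon as the Cesaro limits exist on a generating field. Replace the two VHS invocations by these standard-space facts (or, for the kernel, use VHS only in the form the paper does, for a fixed $x$ at which convergence holds for \emph{all} $G$, after arguing that such $x$ form a set of full $\mu$-measure) and the argument goes through.
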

\begin{proof}
Let $\mu$ be the distribution of a stationary source.

Assume that, for any $ G \in \mathcal{B}_{B^\mathcal{I}}$, the following limit exists $\mu$-a.e.
$$
\overline{\nu}_{\mu}(x,G)= \lim_{n \to \infty} \frac{1}{n} \sum_{i=0}^{n-1} \nu_i(x,G)
$$
By  the Vitali-Hahn-Saks theorem, $\overline{\nu}_{\mu}(x,.)$ is a probability on the measurable space $(B^\mathcal{I},\mathcal{B}_{B^\mathcal{I}})$. Moreover, for any $G\in B^\mathcal{I}$, the function $x\mapsto \overline{\nu}_{\mu}(x,G)$ is measurable. Then $[A,\overline{\nu}_\mu,B]$ is a channel.

$$
\forall F  \in \mathcal{B}_{A^\mathcal{I}}, \forall G \in \mathcal{B}_{B^\mathcal{I}}, \int_F \overline{\nu}_{\mu}(x,G) d\mu = \int_F \lim_{n \to \infty} \frac{1}{n} \sum_{i=0}^{n-1} \nu_i(x,G)
$$
For any $G$, $(x \mapsto \frac{1}{n} \sum_{i=0}^{n-1} \nu_i(x,G))_n$ is a sequence of bounded measurable functions of $x$ which converges, then:
\begin{multline}
\int_F \overline{\nu}_\mu(x,G) d\mu  =   \lim_{n \to \infty} \frac{1}{n} \sum_{i=0}^{n-1} \int_F \nu_i(x,G)   \nonumber \\
	=  \lim_{n \to \infty} \frac{1}{n} \sum_{i=0}^{n-1} \mu\nu_{i}(F\times G)   =  \lim_{n \to \infty} \frac{1}{n} \sum_{i=0}^{n-1} \mu\nu T_{AB}^{-i}(F\times G) \nonumber
\end{multline}
Then for any element $R$ of the field generated by rectangles $F\times G$, the limit $\overline{\mu\nu}(R)=\lim_{n \to \infty} \frac{1}{n} \sum_{i=0}^{n-1} \mu\nu T_{AB}^{-i}(R)$ exists. From  Caratheodory theorem, $\overline{\mu\nu}$ uniquely extends to a probability $\overline{\mu\nu}$ on the $\sigma$-field $\mathcal{B}_{A^\mathcal{I}\times A^\mathcal{I}}$ . Hence $\mu\nu$ is AMS. Moreover $\overline{\mu\nu}$ is stationary and, since $\mu$ is stationary,  $\overline{\mu\nu}=\mu\overline{\nu}_\mu$. This implies that $\overline{\nu}_\mu$ is quasi-stationary w.r.t $\mu$.
\end{proof}

\begin{proposition}\label{PropositionRAMSQuasiStationaryMean}
Let $[A,\nu,B]$ be a channel R-AMS w.r.t  a stationary source $\mu$. Then the quasi-stationary mean $[A,\overline{\nu}_{\mu},B]$  (w.r.t $\mu$) is such that
$$
\forall G \in \mathcal{B}_{B^\mathcal{I}}, \overline{\nu}_{\mu}(x,G)= \lim_{n \to \infty} \frac{1}{n} \sum_{i=0}^{n-1} \nu_{i}(x,G) \text{  }\mu \text{-a.e.}
$$ 
\end{proposition}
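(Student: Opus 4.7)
The strategy is to reduce the claim to a pointwise ergodic theorem for Radon--Nikodym densities on the joint system. Since $\nu$ is R-AMS with respect to the stationary source $\mu$, the hookup $\mu\nu$ is R-AMS, and by Lemma~\ref{LemmaTh7dot4Gray2009} is dominated by its stationary mean $\overline{\mu\nu}$. Because $\mu$ is stationary, the input marginal of $\overline{\mu\nu}$ is $\mu$, so $\overline{\mu\nu}=\mu\,\overline{\nu}_\mu$. The $T_{AB}$-invariance of $\overline{\mu\nu}$ combined with $\mu\nu\ll\overline{\mu\nu}$ implies $\mu\nu T_{AB}^{-i}\ll\overline{\mu\nu}$ for every $i$; set $f_i = d(\mu\nu T_{AB}^{-i})/d\overline{\mu\nu}$. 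Disintegrating $\overline{\mu\nu}=\mu\,\overline{\nu}_\mu$, one obtains for every rectangle $F\times G$
\[
\int_F \nu_i(x,G)\,d\mu = \mu\nu T_{AB}^{-i}(F\times G) = \int_F \left( \int_G f_i(x,y)\,d\overline{\nu}_\mu(x,y) \right) d\mu(x),
\]
so $\nu_i(x,G) = \int_G f_i(x,y)\,d\overline{\nu}_\mu(x,y)$ for $\mu$-a.e.\ $x$. This key representation reduces the problem to showing $\frac{1}{n}\sum_{i=0}^{n-1} f_i \to 1$ $\overline{\mu\nu}$-a.e.

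The densities satisfy $f_i = P^i f_0$, where $P$ is the Perron--Frobenius operator of $(T_{AB},\overline{\mu\nu})$ defined by $\int \phi\, Pg\, d\overline{\mu\nu} = \int (\phi\circ T_{AB})\, g\, d\overline{\mu\nu}$ for every bounded $\phi$ and every $g\in L^1(\overline{\mu\nu})$. Stationarity of $\overline{\mu\nu}$ gives $P1=1$, so $P$ is a positive Markov contraction on $L^1(\overline{\mu\nu})$. The pointwise ergodic theorem for such operators (Hopf / Chacon--Ornstein) then yields $\overline{\mu\nu}$-a.e.\ convergence of the Cesaro sums $\frac{1}{n}\sum_{i=0}^{n-1} P^i f_0$ to some nonnegative function in $L^1(\overline{\mu\nu})$. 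On the other hand, the AMS property forces $\frac{1}{n}\sum_{i=0}^{n-1}\mu\nu T_{AB}^{-i}\to\overline{\mu\nu}$ setwise, which translates into weak $L^1(\overline{\mu\nu})$-convergence of $\frac{1}{n}\sum_{i=0}^{n-1} f_i$ to the constant $1$. The two convergences together pin down the a.e.\ limit to be exactly $1$.

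To finish, since $\overline{\mu\nu} = \mu\,\overline{\nu}_\mu$, Fubini guarantees that for $\mu$-a.e.\ $x$ the convergence $\frac{1}{n}\sum_{i=0}^{n-1} f_i(x,\cdot)\to 1$ holds $\overline{\nu}_\mu(x,\cdot)$-a.e. Applying Fatou's lemma to the representation of $\nu_i$ gives
\[
\liminf_n \frac{1}{n}\sum_{i=0}^{n-1} \nu_i(x,G) \geq \int_G 1\,d\overline{\nu}_\mu(x,y) = \overline{\nu}_\mu(x,G) \quad \mu\text{-a.e.}
\]
The same argument applied to $B^\mathcal{I}\setminus G$, combined with the identities $\nu_i(x,B^\mathcal{I}\setminus G)=1-\nu_i(x,G)$ and $\overline{\nu}_\mu(x,B^\mathcal{I}\setminus G)=1-\overline{\nu}_\mu(x,G)$, yields the matching upper bound $\limsup_n \frac{1}{n}\sum_{i=0}^{n-1} \nu_i(x,G) \leq \overline{\nu}_\mu(x,G)$ $\mu$-a.e. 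The main obstacle is the middle step: securing \emph{both} a.e.\ convergence of the Perron--Frobenius Cesaro iterates and the identification of the limit as exactly $1$. The first part requires appealing to a ratio ergodic theorem; the second uses crucially the recurrence (R-AMS) hypothesis, since only $\mu\nu\ll\overline{\mu\nu}$ (not merely $\ll^a$) makes $f_0$ a genuine $L^1$-density on which the Markov operator theory can be brought to bear without a prior passage to the tail $\sigma$-field.
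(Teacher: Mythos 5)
Your argument reproduces the paper's proof almost verbatim: the same representation $\nu_i(x,G)=\int_G f_i\,d\overline{\nu}_\mu$ with $f_i=d(\mu\nu T_{AB}^{-i})/d(\mu\overline{\nu}_\mu)$ (which exists precisely because recurrence upgrades $\ll^a$ to $\ll$), the same reduction to $\frac{1}{n}\sum_{i=0}^{n-1}f_i\to 1$ $\overline{\mu\nu}$-a.e., and the same two-sided Fatou argument applied to $G$ and its complement. The only deviation is in that middle step: the paper simply cites Theorem 7 of \cite{GrayKieffer80} for the a.e.\ convergence of the Cesaro means of the densities to $1$, whereas you rederive it via the Chacon--Ornstein theorem for the Perron--Frobenius operator --- which is valid here, since $P1=1$ on a probability space makes $P$ conservative, and the setwise convergence plus Vitali--Hahn--Saks supplies the uniform integrability needed to identify the a.e.\ limit with the weak $L^1$ limit $1$ --- so you have given a correct, self-contained substitute for the cited result rather than a genuinely different proof.
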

\begin{proof}

Let $[A,\nu,B]$ be an  R-AMS channel and $\mu$ be the distribution of a stationary source.  $\nu$ is AMS, then, $\forall F \in \mathcal{B}_{A^\mathcal{I}}$ and $\forall G \in \mathcal{B}_{B^\mathcal{I}}$
$$
\mu\overline{\nu}_{\mu}(F\times G)=\overline{\mu\nu}(F\times G)=\lim_{n \to \infty} \frac{1}{n} \sum_{i=0}^{n-1} \mu\nu T_{AB}^{-i}(F \times G)
$$
Since $[A,\nu_i,B]$ is the channel such that $\mu\nu_i = \mu\nu T_{AB}^{-i}$, this implies:
$$
\forall F \in \mathcal{B}_{A^\mathcal{I}}, \forall G \in \mathcal{B}_{B^\mathcal{I}}, \int_F \overline{\nu}_{\mu}(x,G) d\mu = \lim_{n \to \infty} \frac{1}{n} \sum_{i=0}^{n-1} \int_F \nu_i(x,G)  d\mu
$$

$\mu$ is stationary and $\nu$ R-AMS then $\mu\nu \ll \mu\overline{\nu}_{\mu}$ and $\forall i\geq 0$, $\mu\nu T_{AB}^{-i} \ll \mu\overline{\nu}_{\mu} T_{AB}^{-i}=\mu\overline{\nu}_{\mu}$. Then, $\forall i$, the Radon-Nikodym derivative $f_i=\frac{d\mu\nu_i}{d\mu\overline{\nu}_{\mu}}$ exists. 
$$
\forall F \in \mathcal{B}_{A^\mathcal{I}}, \forall G \in \mathcal{B}_{B^\mathcal{I}}, \int_F \nu_i(x,G) d\mu = \int_F \int_G f_i(x,y) d\overline{\nu}_{\mu} d\mu
$$
Thus 
\begin{equation}
\nu_i(x,G) = \int_G f_i(x,y) d\overline{\nu}_{\mu}\text{  }\mu\text{-a.e.} \label{EqNuI}
\end{equation}

Since $\forall i$, $f_i=\frac{d\mu\nu_i}{d\mu\overline{\nu}_{\mu}}$, from Theorem 7 of \cite{GrayKieffer80}
$$
\lim_{n \to \infty} \frac{1}{n} \sum_{i=0}^{n-1} f_i(x,y) = 1
$$
Then 
$$
\lim_{n \to \infty} 1_G(y).\frac{1}{n} \sum_{i=0}^{n-1} f_i(x,y) = 1_G(y)
$$
This implies (thanks to Fatou's lemma):
\begin{multline}
\overline{\nu}_{\mu}(x,G) = \int 1_G(y) d\overline{\nu}_{\mu}  \leq  \liminf_{n \to \infty} \int 1_G(y).\frac{1}{n} \sum_{i=0}^{n-1} f_i(x,y) d\overline{\nu}_{\mu} \nonumber \\
					 \leq   \liminf_{n \to \infty} \frac{1}{n} \sum_{i=0}^{n-1} \int_G f_i(x,y) d\overline{\nu}_{\mu}  \nonumber
\end{multline}
By (\ref{EqNuI})
$$
\liminf_{n \to \infty} \frac{1}{n} \sum_{i=0}^{n-1} \int_G f_i(x,y) d\overline{\nu}_{\mu} =  \liminf_{n \to \infty} \frac{1}{n} \sum_{i=0}^{n-1} \nu_i (x,G)
$$
Then
$$
\overline{\nu}_{\mu}(x,G) \leq \liminf_{n \to \infty} \frac{1}{n} \sum_{i=0}^{n-1} \nu_i (x,G)
$$

It also hold that:
\begin{multline}
\overline{\nu}_{\mu}(x,G^c) = 1 - \overline{\nu}_{\mu}(x,G) = \int 1_{G^c}(y) d\overline{\nu}_{\mu} \\
	\leq  \liminf_{n \to \infty} \int 1_{G^c}(y).\frac{1}{n} \sum_{i=0}^{n-1} f_i(x,y) d\overline{\nu}_{\mu}  \leq  \liminf_{n \to \infty} \frac{1}{n} \sum_{i=0}^{n-1} (1-\nu_i(x,G))  \\
		\leq  1 + \liminf_{n \to \infty} \frac{1}{n} \sum_{i=0}^{n-1} (-\nu_i(x,G))  \leq  1 - \limsup_{n \to \infty} \frac{1}{n} \sum_{i=0}^{n-1} \nu_i(x,G) \nonumber
\end{multline}
which implies 
$$
\overline{\nu}_{\mu}(x,G) \geq \limsup_{n \to \infty} \frac{1}{n} \sum_{i=0}^{n-1} \nu_i(x,G)
$$ 

Since  
$$
\overline{\nu}_{\mu}(x,G) \leq \liminf_{n \to \infty} \frac{1}{n} \sum_{i=0}^{n-1} \nu_i(x,G)
$$
then $\lim_{n \to \infty} \frac{1}{n} \sum_{i=0}^{n-1} \nu_i (x,G)$ exists and is equal to  $\overline{\nu}_{\mu}(x,G)$.
\end{proof}

\begin{lemma}\label{LemmaNuI}
$\forall G \in \mathcal{B}_{B^\mathcal{I}}$ and $\forall i$, $\nu_{i}(T_A^i x, G)=\nu(x,T_B^{-i} G)$ $\mu_{T_A^{-i}\mathcal{B}_{A^\mathcal{I}}}$-a.e.
\end{lemma}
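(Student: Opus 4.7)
The plan is to unwind the defining relation $\mu\nu_i = \mu\nu T_{AB}^{-i}$ on rectangles and then apply the transfer theorem twice, exploiting the stationarity of $\mu$ (so that $\mu T_A^{-i} = \mu$). This mirrors, at integer step $i$, the calculation used in the proof of Proposition \ref{PropositionQuasiStationaryChannel}.

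First, for any $F \in \mathcal{B}_{A^\mathcal{I}}$ and any $G \in \mathcal{B}_{B^\mathcal{I}}$, evaluating both hookups on the rectangle $F\times G$ yields
\begin{equation*}
\int_F \nu_i(x,G)\,d\mu \;=\; \mu\nu_i(F\times G) \;=\; \mu\nu T_{AB}^{-i}(F\times G) \;=\; \int_{T_A^{-i}F} \nu(x,T_B^{-i}G)\,d\mu,
\end{equation*}
since $T_{AB}^{-i}(F\times G)=T_A^{-i}F\times T_B^{-i}G$.

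Second, I rewrite each side via the change-of-variable formula, using $\mu = \mu T_A^{-i}$:
\begin{align*}
\int_F \nu_i(x,G)\,d\mu &= \int_F \nu_i(x,G)\,d(\mu T_A^{-i}) = \int 1_F(T_A^i x)\,\nu_i(T_A^i x, G)\,d\mu, \\
\int_{T_A^{-i}F} \nu(x,T_B^{-i}G)\,d\mu &= \int 1_F(T_A^i x)\,\nu(x,T_B^{-i}G)\,d\mu.
\end{align*}
Combining these identities gives, for every $F \in \mathcal{B}_{A^\mathcal{I}}$,
\begin{equation*}
\int 1_F(T_A^i x)\bigl[\nu_i(T_A^i x, G) - \nu(x,T_B^{-i}G)\bigr]\,d\mu = 0.
\end{equation*}

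Finally, as $F$ ranges over $\mathcal{B}_{A^\mathcal{I}}$ the indicators $1_F \circ T_A^i$ exhaust the indicators of sets in the sub-$\sigma$-field $T_A^{-i}\mathcal{B}_{A^\mathcal{I}}$. The function $x\mapsto \nu_i(T_A^i x,G)$ is $T_A^{-i}\mathcal{B}_{A^\mathcal{I}}$-measurable, so the displayed integral equality is precisely the characterizing property of the claimed a.e. equality on the restricted sigma-field, i.e.\ $\nu_i(T_A^i x, G) = \nu(x, T_B^{-i}G)$ $\mu_{T_A^{-i}\mathcal{B}_{A^\mathcal{I}}}$-a.e. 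The only delicate point is to observe that the right-hand side $\nu(\cdot,T_B^{-i}G)$ is a priori only $\mathcal{B}_{A^\mathcal{I}}$-measurable, so the equality should be read in the restricted-$\sigma$-field sense used consistently in Proposition \ref{PropositionQuasiStationaryChannel}; apart from this bookkeeping, the proof consists only of the rectangle identity and two applications of the transfer theorem.
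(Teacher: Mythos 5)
Your proof is correct and follows essentially the same route as the paper's: evaluate $\mu\nu_i=\mu\nu T_{AB}^{-i}$ on rectangles, use stationarity of $\mu$ (transfer theorem) to pull the left-hand integral back to $T_A^{-i}F$, and conclude the a.e. equality on the sub-$\sigma$-field $T_A^{-i}\mathcal{B}_{A^\mathcal{I}}$. Your closing remark about the measurability caveat for $\nu(\cdot,T_B^{-i}G)$ matches the convention the paper already adopts in Proposition \ref{PropositionQuasiStationaryChannel}.
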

\begin{proof}
For any $i$, the channel $\nu_i$ is such that ,  $\forall F  \in \mathcal{B}_{A^\mathcal{I}}$ and $\forall G \in \mathcal{B}_{B^\mathcal{I}}$:
\begin{eqnarray}
		\mu\nu_i( F \times G ) & = & \mu\nu T_{AB}^{-i}(F \times G ) \nonumber \\
\Rightarrow \int_F \nu_i(x,G) d\mu & = & \int_{T_A^{-i}F} \nu(x,T_B^{-i}G) d\mu \nonumber 
\end{eqnarray}
$\mu$ is stationary then
$$
\int_{T_A^{-i}F} \nu_i(T_A^ix,G) d\mu  =  \int_{T_A^{-i}F} \nu(x,T_B^{-i}G) d\mu
$$
Thus, if $\mu_{T_A^{-i}\mathcal{B}_{A^\mathcal{I}}}$ is the restriction of $\mu$ to the $\sigma$-field $T_A^{-i}\mathcal{B}_{A^\mathcal{I}}$, then:
$$
\nu_i(T_A^ix,G) = \nu(x,T_B^{-i} G)\text{   }\mu_{T_A^{-i}\mathcal{B}_{A^\mathcal{I}}} \text{ a.e.}
$$
\end{proof}
In the context of two-sided channels (i.e., considering invertible shifts), \cite{FontanaGrayKieffer81} proved that the stationary mean $\overline{\nu}$ of a two-sided AMS channel $\nu$ is given by:
$$
\overline{\nu}(x,G)=\lim_{n \to \infty} \frac{1}{n} \sum_{i=0}^{n-1} \nu(T_A^{-i}x,T_B^{-i}G)
$$
If the shift $T_A$ is invertible then $T_A^{-i}\mathcal{B}_{A^\mathcal{I}}=\mathcal{B}_{A^\mathcal{I}}$. Then, by Lemma \ref{LemmaNuI}:
$$
\forall i, \nu_i(T_A^ix,G)=\nu(x,T_B^{-i}G)\text{  }\mu\text{-a.e.}
$$
$T_A$ being invertible, substituting $T_A^{-i}x$ to $x$, it holds:
$$
\nu_i(x,G)=\nu(T_A^{-i}x,T_B^{-i}G)\text{  }\mu\text{-a.e.}
$$
which, by Proposition \ref{PropositionRAMSQuasiStationaryMean}, gives the result of  \cite{FontanaGrayKieffer81}.

%\pagebreak
\section{Ergodicity of one-sided AMS channels}\label{SectionErgodicAMSChannels}

This section is devoted to a proposition and its corollary which adapt to one-sided AMS channels some known ergodicity results on two-sided AMS channels: Theorem 6 of \cite{FontanaGrayKieffer81}, Theorem 3.7 of \cite{Kakihara91} and Theorem 4.4 of \cite{Kakihara03} (see also \cite{Kakihara99}). For the sake of completeness, it is worth citing Lemma 2.3 (page 28) of \cite{Gray11} (which gives a sufficient condition for an AMS channel to be ergodic)  and statement (6) of Theorem  4.4 of \cite{Kakihara03} which both hold for one-sided channels.

\begin{definition} \label{DefinitionErgodicChannel}
A quasi-stationary (resp. AMS) channel $[A,\nu,B]$ is ergodic with respect to a stationary (resp.  AMS) ergodic source $[A,X]$ with distribution $\mu$ if the hookup $\mu\nu$ is stationary (resp. AMS) and ergodic. A quasi-stationary (resp. AMS) channel $[A,\nu,B]$ is ergodic if it is ergodic  with respect to any stationary (resp. AMS) ergodic source.
\end{definition}

\begin{proposition}\label{PropositionErgodicityConditions}
Let  $[A,\nu,B]$ be an R-AMS channel. The following statements are equivalent:
\begin{enumerate}
	\item \label{PropositionErgodicityConditionsStatementErgodicity} $\nu$ is ergodic with respect to  ergodic R-AMS sources
	\item \label{PropositionErgodicityConditionsStatementErgodicityofQSmeans} for any ergodic stationary source  $[A,X]$ with distribution $\mu$, the quasi-stationary mean $\overline{\nu}_\mu$ is ergodic w.r.t $\mu$
%	\item \label{PropositionErgodicityConditionsStatementErgodicityofHookup} for any ergodic stationary source $\mu$, the hookup $\mu\overline{\nu}_\mu$ is ergodic
	\item \label{PropositionErgodicityConditionsStatementDominanceByAnErgodicQSChannel} for any ergodic stationary source  $[A,X]$ with distribution $\mu$, there is an ergodic quasi-stationary channel $[A,\nu_\mu,B]$ such that 
	$$
	\nu(x,.) \ll \nu_\mu(x,.) \text{     }\mu\text{-a.e.}
	$$
	\item \label{PropositionErgodicityConditionsStatementDominanceByAnErgodicRAMSChannel} for any ergodic stationary source  $[A,X]$ with distribution $\mu$, there is an ergodic R-AMS channel $[A,\nu_\mu,B]$ such that 
	$$
	\nu(x,.) \ll \nu_\mu(x,.) \text{     }\mu\text{-a.e.}
	$$
\end{enumerate}
\end{proposition}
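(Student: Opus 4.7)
The strategy is the cycle $(1)\Rightarrow(2)\Rightarrow(3)\Rightarrow(4)\Rightarrow(1)$. Throughout, I rely on Proposition~\ref{PropositionNecessarySufficientConditionRAMSChannel} (existence of a quasi-stationary mean $\overline{\nu}_\mu$ and the channel-level domination $\nu(x,\cdot)\ll\overline{\nu}_\mu(x,\cdot)$ $\mu$-a.e.\ for an R-AMS channel), Lemma~\ref{LemmaChannelDominanceEquivHookupDominance} (channel-level and hookup-level dominance are equivalent), Lemma~\ref{LemmaTh7dot4Gray2009} and Proposition~\ref{PropositionChannelClassification} (an R-AMS process is dominated, not merely asymptotically dominated, by its stationary mean, and quasi-stationary channels are R-AMS), and the standard fact that an AMS dynamical system is ergodic if and only if its stationary mean is ergodic.

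For $(1)\Rightarrow(2)$, any ergodic stationary $\mu$ is in particular an ergodic R-AMS source, so by (1) the hookup $\mu\nu$ is AMS and ergodic; its stationary mean $\overline{\mu\nu}=\mu\,\overline{\nu}_\mu$ is then stationary and ergodic, i.e.\ $\overline{\nu}_\mu$ is ergodic with respect to $\mu$. For $(2)\Rightarrow(3)$ I take $\nu_\mu=\overline{\nu}_\mu$, which is quasi-stationary w.r.t.\ $\mu$ by construction, ergodic by (2), and satisfies the required domination by Proposition~\ref{PropositionNecessarySufficientConditionRAMSChannel}. The implication $(3)\Rightarrow(4)$ is immediate, since Proposition~\ref{PropositionChannelClassification} places quasi-stationary channels inside the class of R-AMS channels.

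The bulk of the argument is $(4)\Rightarrow(1)$. Let $\mu$ be the distribution of an ergodic R-AMS source; then $\overline{\mu}$ is stationary and ergodic, and $\mu\ll\overline{\mu}$ by Lemma~\ref{LemmaTh7dot4Gray2009}. Applying (4) to the ergodic stationary source $\overline{\mu}$ produces an ergodic R-AMS channel $\nu_{\overline{\mu}}$ with $\nu(x,\cdot)\ll\nu_{\overline{\mu}}(x,\cdot)$ $\overline{\mu}$-a.e., hence also $\mu$-a.e., so Lemma~\ref{LemmaChannelDominanceEquivHookupDominance} gives $\mu\nu\ll\mu\,\nu_{\overline{\mu}}$. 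Since $\nu_{\overline{\mu}}$ is ergodic R-AMS and $\mu$ is ergodic R-AMS, $\mu\,\nu_{\overline{\mu}}$ is AMS and ergodic, so $\overline{\mu\,\nu_{\overline{\mu}}}$ is stationary and ergodic; moreover $\mu\,\nu_{\overline{\mu}}\ll\overline{\mu\,\nu_{\overline{\mu}}}$ by Lemma~\ref{LemmaTh7dot4Gray2009}, and chaining these dominations yields $\mu\nu\ll\overline{\mu\,\nu_{\overline{\mu}}}$.

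The remaining and delicate step is to transfer ergodicity from $\overline{\mu\,\nu_{\overline{\mu}}}$ down to $\mu\nu$. Writing $g=d(\mu\nu)/d\overline{\mu\,\nu_{\overline{\mu}}}$, one has $\mu\nu(T_{AB}^{-i}F)=\int(1_F\circ T_{AB}^i)\,g\,d\overline{\mu\,\nu_{\overline{\mu}}}$; averaging over $i$ and applying Birkhoff's pointwise ergodic theorem to the stationary ergodic probability $\overline{\mu\,\nu_{\overline{\mu}}}$, together with dominated convergence, identifies $\overline{\mu\nu}(F)$ with $\overline{\mu\,\nu_{\overline{\mu}}}(F)\int g\,d\overline{\mu\,\nu_{\overline{\mu}}}=\overline{\mu\,\nu_{\overline{\mu}}}(F)$. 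Thus $\overline{\mu\nu}=\overline{\mu\,\nu_{\overline{\mu}}}$ is ergodic, which by the AMS-ergodicity criterion means that $\mu\nu$ is AMS and ergodic, closing the cycle. The only non-routine step in the whole argument is precisely this Birkhoff-based transfer of ergodicity along an absolute continuity, which is what forces us to chain through an honest R-AMS channel in (4) rather than a merely AMS one.
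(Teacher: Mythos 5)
Your proof is correct and follows the same overall architecture as the paper's: a cycle of implications driven by Proposition~\ref{PropositionNecessarySufficientConditionRAMSChannel}, Lemma~\ref{LemmaChannelDominanceEquivHookupDominance}, and the chain of dominations $\mu\nu\ll\mu\nu_{\overline{\mu}}\ll\overline{\mu\nu_{\overline{\mu}}}$ with the last measure stationary and ergodic. The one place you genuinely diverge is the final ergodicity-transfer step, which you carry out with Birkhoff's theorem plus dominated convergence to identify $\overline{\mu\nu}$ with $\overline{\mu\nu_{\overline{\mu}}}$. This works, and as a bonus it exhibits the stationary means as equal, but it is much heavier than necessary: if $\eta\ll m$ and $m$ is ergodic, then for any invariant $F$ either $m(F)=0$, forcing $\eta(F)=0$, or $m(F^c)=0$, forcing $\eta(F)=1$; so ergodicity passes down along absolute continuity in one line, which is exactly what the paper does (``$\mu\nu\ll\overline{\mu}\,\overline{\nu}_{\overline{\mu}}$, the latter is ergodic, thus so is $\mu\nu$''). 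Your closing remark that the Birkhoff step is what ``forces'' chaining through an R-AMS rather than a merely AMS channel is therefore a little off target: since invariant events lie in the tail $\sigma$-field, even asymptotic domination by an ergodic measure transfers ergodicity (via Lemma~\ref{LemmaDominanceOnTailSigmaField}), which is precisely how the paper's Corollary~\ref{CorollaryErgodicityConditions} handles the AMS case; the role of the R-AMS hypothesis here is to furnish honest domination $\mu\nu_{\overline{\mu}}\ll\overline{\mu\nu_{\overline{\mu}}}$ via Lemma~\ref{LemmaTh7dot4Gray2009}, not to enable the ergodicity transfer.
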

\begin{proof}
\ \\
\begin{itemize}
		\item[\ref{PropositionErgodicityConditionsStatementErgodicity} $\Rightarrow$ \ref{PropositionErgodicityConditionsStatementErgodicityofQSmeans}] 
			Let $\mu$ be the distribution of an ergodic stationary source. Then $\mu\nu$ is ergodic recurrent and AMS. $\mu\nu$ and $\overline{\mu\nu}=\mu\overline{\nu}_\mu$ coincide on invariant events thus $\mu\overline{\nu}_\mu$ is ergodic. Hence $\overline{\nu}_\mu$ is ergodic w.r.t $\mu$.
		\item[\ref{PropositionErgodicityConditionsStatementErgodicityofQSmeans} $\Rightarrow$ \ref{PropositionErgodicityConditionsStatementErgodicity}] 
			Let $\mu$ be  the distribution of an ergodic recurrent AMS source. Then $\mu\nu\ll \overline{\mu} \  \overline{\nu}_{\overline{\mu}}$.  By assumption  $\overline{\mu}\ \overline{\nu}_{\overline{\mu}}$  is ergodic, thus so is $\mu\nu$.
%		\item[\ref{PropositionErgodicityConditionsStatementErgodicityofQSmeans} $\Leftrightarrow$ \ref{PropositionErgodicityConditionsStatementErgodicityofHookup}] This is the definition of ergodicity of $\overline{\nu}_\mu$
		\item[\ref{PropositionErgodicityConditionsStatementDominanceByAnErgodicQSChannel} $\Rightarrow$ \ref{PropositionErgodicityConditionsStatementErgodicityofQSmeans}] $\nu(x,.) \ll \nu_\mu(x,.) \text{     }\mu\text{-a.e.} \Rightarrow \mu\nu \ll \mu \nu_\mu$ and $\mu\nu_\mu$ is ergodic for any $\mu$ ergodic thus $\mu\nu$ is ergodic then $\mu\overline{\nu}_\mu$ is ergodic.
		\item[\ref{PropositionErgodicityConditionsStatementErgodicityofQSmeans} $\Rightarrow$ \ref{PropositionErgodicityConditionsStatementDominanceByAnErgodicQSChannel}] choose $\nu_\mu = \overline{\nu}_\mu$
		\item[\ref{PropositionErgodicityConditionsStatementDominanceByAnErgodicRAMSChannel} $\Rightarrow$ \ref{PropositionErgodicityConditionsStatementDominanceByAnErgodicQSChannel}] obvious because a quasi-stationary channel is R-AMS.
		\item[\ref{PropositionErgodicityConditionsStatementDominanceByAnErgodicQSChannel} $\Rightarrow$ \ref{PropositionErgodicityConditionsStatementDominanceByAnErgodicRAMSChannel}] $\nu_\mu$ is R-AMS then $\nu_\mu(x,.) \ll \overline{ \nu}_{\mu}(x,.)$ $\mu$.a.e. ; by transitivity of dominance, $\nu(x,.) \ll \overline{ \nu_{\mu}}(x,.)$ $\mu$.a.e.
\end{itemize}
\end{proof}

Since the invariant events belong to the tail $\sigma$-field, a probability $\mu$ is ergodic if and only if $\mu_\infty$ is ergodic. Thus $\mu$ is AMS and ergodic if and only if $\mu_\infty$ is R-AMS and ergodic. The following corollary follows immediately.
\begin{corollary}\label{CorollaryErgodicityConditions}
Let  $[A,\nu,B]$ be an AMS channel. The following statements are equivalent:
\begin{enumerate}
	\item \label{CorollaryErgodicityConditionsStatementErgodicity} $\nu$ is ergodic with respect to  ergodic AMS sources
	\item \label{CorollaryErgodicityConditionsStatementErgodicityofQSmeans} for any ergodic stationary source  $[A,X]$ with distribution $\mu$, the quasi-stationary mean $\overline{\nu}_\mu$ is ergodic w.r.t $\mu$
%	\item \label{CorollaryErgodicityConditionsStatementErgodicityofHookup} for any ergodic stationary source $\mu$, the hookup $\mu\overline{\nu}_\mu$ is ergodic
	\item \label{CorollaryErgodicityConditionsStatementDominanceByAnErgodicQSChannel} for any ergodic stationary source   $[A,X]$ with distribution $\mu$, there is an ergodic quasi-stationary channel $[A,\nu_\mu,B]$ such that 
	$$
	\nu(x,.) \ll^a \nu_\mu(x,.) \text{     }\mu_\infty \text{-a.e.}
	$$
	\item \label{CorollaryErgodicityConditionsStatementDominanceByAnErgodicRAMSChannel} for any ergodic stationary source   $[A,X]$ with distribution $\mu$, there is an ergodic AMS channel $[A,\nu_\mu,B]$ such that 
	$$
	\nu(x,.) \ll^a \nu_\mu(x,.) \text{     }\mu_\infty\text{-a.e.}
	$$
\end{enumerate}
\end{corollary}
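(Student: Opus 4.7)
The strategy is to transfer every assertion to the tail $\sigma$-field and then invoke Proposition \ref{PropositionErgodicityConditions} for the restricted channel $\nu_\infty$. The underlying dictionary rests on three results established earlier: by Proposition \ref{PropositionNecessarySufficientConditionAMSChannel2}, $\nu$ is AMS iff $\nu_\infty$ is R-AMS; by Lemma \ref{LemmaAMSProcess} combined with the remark preceding the corollary (invariant events lie in the tail), a source $\mu$ is ergodic and AMS iff $\mu_\infty$ is ergodic and R-AMS; and by Lemma \ref{LemmaStationaryMeanRestriction}, restriction commutes with taking stationary means, so $(\mu\nu)_\infty = \mu_\infty\nu_\infty$ and $(\overline{\mu\nu})_\infty = \overline{(\mu\nu)_\infty}$. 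From these identities one checks that $(\overline{\nu}_\mu)_\infty$ is precisely the quasi-stationary mean of $\nu_\infty$ with respect to $\mu_\infty$.

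With this dictionary I would treat the equivalences pairwise. The equivalence \ref{CorollaryErgodicityConditionsStatementErgodicity} $\Leftrightarrow$ \ref{CorollaryErgodicityConditionsStatementErgodicityofQSmeans} reduces to the same equivalence in Proposition \ref{PropositionErgodicityConditions} applied to $\nu_\infty$, since $\mu\nu$ is ergodic iff $\mu_\infty\nu_\infty$ is, and $\mu\overline{\nu}_\mu$ is ergodic iff $\mu_\infty(\overline{\nu}_\mu)_\infty$ is. For \ref{CorollaryErgodicityConditionsStatementDominanceByAnErgodicQSChannel} and \ref{CorollaryErgodicityConditionsStatementDominanceByAnErgodicRAMSChannel}, the translation uses Lemma \ref{LemmaDominanceOnTailSigmaField}: the condition $\nu(x,\cdot) \ll^a \nu_\mu(x,\cdot)$ $\mu_\infty$-a.e.\ is equivalent to $\nu_\infty(x,\cdot) \ll (\nu_\mu)_\infty(x,\cdot)$ $\mu_\infty$-a.e. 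A candidate dominating channel $\nu_\mu$ is quasi-stationary (resp.\ AMS) and ergodic with respect to $\mu$ iff $(\nu_\mu)_\infty$ is quasi-stationary (resp.\ R-AMS) and ergodic with respect to $\mu_\infty$, so each of \ref{CorollaryErgodicityConditionsStatementDominanceByAnErgodicQSChannel} and \ref{CorollaryErgodicityConditionsStatementDominanceByAnErgodicRAMSChannel} matches the corresponding statement of the R-AMS proposition.

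The only delicate point is the identification of $(\overline{\nu}_\mu)_\infty$ with the quasi-stationary mean of $\nu_\infty$ with respect to $\mu_\infty$; this rests on the essential uniqueness (modulo $\mu_\infty$) of the probability kernel factoring a stationary hookup whose input marginal is $\mu_\infty$, combined with Lemma \ref{LemmaStationaryMeanRestriction}. Once this identification is secured, the four statements of the corollary translate bijectively to the four statements of Proposition \ref{PropositionErgodicityConditions} for $\nu_\infty$ and ergodic stationary input distributions $\mu_\infty$, and the conclusion is immediate.
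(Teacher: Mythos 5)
Your proposal is correct and follows essentially the same route as the paper: the paper's entire proof consists of observing that invariant events lie in the tail $\sigma$-field (so $\mu$ is ergodic and AMS iff $\mu_\infty$ is ergodic and R-AMS) and then declaring that the corollary "follows immediately" from Proposition \ref{PropositionErgodicityConditions} applied on the tail $\sigma$-field, which is exactly your reduction. Your write-up merely makes explicit the supporting identifications (via Lemma \ref{LemmaAMSProcess}, Lemma \ref{LemmaDominanceOnTailSigmaField}, Lemma \ref{LemmaStationaryMeanRestriction}, and the uniqueness of the quasi-stationary mean) that the paper leaves implicit.
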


%\pagebreak
\section{Quasi-stationary mean of an ergodic R-AMS channel with respect to an ergodic R-AMS source}\label{SectionQuasiStationaryMeanErgodicRAMSChannel}

In this section, it is proved that, considering an ergodic R-AMS channel $[A,\nu,B]$, its quasi-stationary mean   w.r.t an ergodic R-AMS source with distribution $\mu$ is equal to its quasi-stationary mean w.r.t $\overline{\mu}$: $\overline{\nu}_\mu=\overline{\nu}_{\overline{\mu}}$. A consequence is that Proposition \ref{PropositionRAMSQuasiStationaryMean} gives an expression of $\overline{\nu}_\mu$. Moreover, if $\mu_1$ and $\mu_2$ are two ergodic R-AMS source distributions, then  $\overline{\nu}_{\mu_1}$ and $\overline{\nu}_{\mu_2}$ are either identical or mutually singular.

\begin{proposition}\label{PropositionQSMofErgodicRAMSChannel}
Let  $[A,\nu,B]$ be an ergodic R-AMS channel. Then, the quasi-stationary mean of $\nu$ with respect to an ergodic and R-AMS source $[A,X]$ with distribution $\mu$ equals $\overline{\mu}$-a.e. the quasi-stationary mean of $\nu$ with respect to the stationary mean $\overline{\mu}$ of $\mu$:
$$
\overline{\nu_{\overline{\mu}}}=\overline{\nu_\mu} \text{ ~~~~~}\overline{\mu}\text{-a.e.  and }\mu\text{-a.e.}
$$
\end{proposition}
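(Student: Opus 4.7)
The plan is to reduce the identity $\overline{\nu_{\overline{\mu}}} = \overline{\nu_\mu}$ (modulo $\overline{\mu}$) to the equality of two stationary ergodic hookups $\overline{\mu}\,\overline{\nu_\mu}$ and $\overline{\mu}\,\overline{\nu_{\overline{\mu}}}$, and then to invoke the essential uniqueness of the conditional probability channel on standard spaces.

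First I would set up the two stationary hookups and verify they are stationary and ergodic. Since $\mu$ is AMS and ergodic, and any invariant event $F$ satisfies $\overline{\mu}(F) = \mu(F)$ (Cesaro averaging does not change values on invariant sets), $\overline{\mu}$ is also ergodic. By Proposition \ref{PropositionErgodicityConditions}, the ergodicity of the R-AMS channel $\nu$ on ergodic R-AMS sources implies that both $\mu\nu$ and $\overline{\mu}\nu$ are ergodic. Their stationary means, $\overline{\mu\nu} = \overline{\mu}\,\overline{\nu_\mu}$ and $\overline{\overline{\mu}\nu} = \overline{\mu}\,\overline{\nu_{\overline{\mu}}}$, are therefore stationary and again ergodic, by the same invariant-event argument.

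Next I would show that $\mu\nu$ is absolutely continuous with respect to both stationary hookups. Since $\mu$ is R-AMS, $\mu \ll \overline{\mu}$ by Lemma \ref{LemmaTh7dot4Gray2009}; then Lemma \ref{Lemma2Fontana} gives $\mu\nu \ll \overline{\mu}\nu$. Since $\nu$ is R-AMS and $\overline{\mu}$ is stationary, $\overline{\mu}\nu$ is R-AMS with stationary mean $\overline{\mu}\,\overline{\nu_{\overline{\mu}}}$, so $\overline{\mu}\nu \ll \overline{\mu}\,\overline{\nu_{\overline{\mu}}}$ by Lemma \ref{LemmaTh7dot4Gray2009}. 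Chaining yields $\mu\nu \ll \overline{\mu}\,\overline{\nu_{\overline{\mu}}}$, a stationary probability; stationarity entails recurrence, and absolute continuity propagates both recurrence and the AMS property, so $\mu\nu$ is itself R-AMS. Applying Lemma \ref{LemmaTh7dot4Gray2009} once more then gives $\mu\nu \ll \overline{\mu\nu} = \overline{\mu}\,\overline{\nu_\mu}$.

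Finally I would close the argument through the ergodic dichotomy: two stationary ergodic probabilities are either equal or mutually singular. If $\overline{\mu}\,\overline{\nu_\mu}$ and $\overline{\mu}\,\overline{\nu_{\overline{\mu}}}$ were mutually singular, some measurable set $A$ would satisfy $\overline{\mu}\,\overline{\nu_\mu}(A) = 1$ and $\overline{\mu}\,\overline{\nu_{\overline{\mu}}}(A) = 0$, and the absolute continuity of $\mu\nu$ with respect to both would force $\mu\nu(A^c) = 0$ and $\mu\nu(A) = 0$ simultaneously, contradicting $\mu\nu(A) + \mu\nu(A^c) = 1$. Hence $\overline{\mu}\,\overline{\nu_\mu} = \overline{\mu}\,\overline{\nu_{\overline{\mu}}}$, and the essentially unique disintegration on standard spaces (recalled in Section \ref{SectionSourcesAndChannels}) yields $\overline{\nu_\mu}(x,\cdot) = \overline{\nu_{\overline{\mu}}}(x,\cdot)$ for $\overline{\mu}$-almost every $x$, hence also for $\mu$-almost every $x$ since $\mu \ll \overline{\mu}$. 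The main technical obstacle is the intermediate step that $\mu\nu$ is R-AMS, for only this makes $\mu\nu \ll \overline{\mu\nu}$ available via Lemma \ref{LemmaTh7dot4Gray2009}; once this is in hand the ergodic dichotomy finishes the proof in a single line.
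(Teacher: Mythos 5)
Your proposal is correct and follows essentially the same route as the paper: dominate $\mu\nu$ by both stationary ergodic hookups $\overline{\mu}\,\overline{\nu_\mu}$ and $\overline{\mu}\,\overline{\nu_{\overline{\mu}}}$ via Lemma \ref{Lemma2Fontana} and Lemma \ref{LemmaTh7dot4Gray2009}, then conclude equality from the ergodic dichotomy and the essential uniqueness of the disintegration. The only cosmetic difference is that you re-derive the R-AMS property of $\mu\nu$ from domination by a stationary measure, whereas the paper gets $\mu\nu\ll\overline{\mu\nu}$ directly from the definition of an R-AMS channel applied to the R-AMS source $\mu$.
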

\begin{proof}
Let $\mu$ be the distribution of an ergodic R-AMS source and $[A,\nu,B]$ an ergodic R-AMS channel. $\mu$ is dominated by its stationary mean: $\mu \ll \overline{\mu}$. Then, by Lemma~\ref{Lemma2Fontana}, $\mu\nu \ll \overline{\mu}\nu$. $\nu$ being R-AMS and $\overline{\mu}$ being stationary, $\overline{\mu}\nu \ll \overline{\overline{\mu}\nu }= \overline{\mu}\ \overline{\nu_{\overline{\mu}}}$. Thus $\mu\nu \ll \overline{\mu}\ \overline{\nu_{\overline{\mu}}}$

Moreover $\nu$ being R-AMS and $\mu$ being R-AMS, $\mu\nu \ll \overline{\mu\nu}= \overline{\mu}\ \overline{\nu_\mu}$.

Then $\mu\nu$ is dominated by $\overline{\mu}\ \overline{\nu_{\overline{\mu}}}$ and $\overline{\mu}\ \overline{\nu_\mu}$ which are ergodic and stationary probabilities on the space $(A^\mathcal{I}\times B^\mathcal{I}, \mathcal{B}_{A^\mathcal{I}\times B^\mathcal{I}})$. By \cite{Kakihara99}, Lemma 1, page 75, two ergodic and stationary probabilities on the same space are either identical or mutually singular. The two probabilities $\overline{\mu}\ \overline{\nu_{\overline{\mu}}}$ and $\overline{\mu}\ \overline{\nu_\mu}$ dominate the same probability so they cannot be mutually singular. They are thus identical: $\overline{\mu}\ \overline{\nu_{\overline{\mu}}}=\overline{\mu}\ \overline{\nu_\mu}$. This implies:
$$
\overline{\nu_{\overline{\mu}}}=\overline{\nu_\mu} \text{  }\overline{\mu}\text{-a.e.  and }\mu\text{-a.e.}
$$
\end{proof}

\begin{proposition}
Let  $[A,\nu,B]$ be an ergodic R-AMS channel. Let $[A,X_1]$ and $[A,X_2]$ be two ergodic R-AMS sources with respective distributions $\mu_1$ and $\mu_2$. Then
\begin{itemize}
	\item  either the quasi-stationary means of $\nu$ with respect to $\mu_1$ and $\mu_2$ are equal and this holds if and only if the stationary means of $\mu_1$ and $\mu_2$ are equal
	$$
	\overline{\mu_1}=\overline{\mu_2} \Leftrightarrow \overline{\nu_{\mu_1}} = \overline{\nu_{\mu_2}}
	$$
	\item or the quasi-stationary means of $\nu$ w.r.t $\mu_1$ and $\mu_2$ are mutually singular $\overline{\mu_1}$-a.e and $\overline{\mu_2}$-a.e and this holds if and only if  the stationary means of $\mu_1$ and $\mu_2$ are mutually singular
	$$
	\overline{\mu_1} \perp \overline{\mu_2} \Leftrightarrow \overline{\nu_{\mu_{1}}}(x,.) \perp \overline{\nu_{\mu_{2}}}(x,.)\text{  } \overline{\mu_1}\text{-a.e. and }\overline{\mu_2}\text{-a.e.}
	$$
\end{itemize}
\end{proposition}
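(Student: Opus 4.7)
The plan is to apply the classical dichotomy for ergodic stationary probabilities---that two such probabilities on the same measurable space are either equal or mutually singular---to both the input marginals $\overline{\mu_1},\overline{\mu_2}$ and to the joint hookups $\overline{\mu_i}\,\overline{\nu_{\mu_i}}$, and then to read off the stated trichotomy from the combinatorics of the two dichotomies.

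First I would verify that all four objects are ergodic and stationary. The $\overline{\mu_i}$ are ergodic because invariant events lie in the tail and $\mu_i$ is ergodic AMS. By Proposition~\ref{PropositionQSMofErgodicRAMSChannel}, $\overline{\nu_{\mu_i}}=\overline{\nu_{\overline{\mu_i}}}$ holds $\overline{\mu_i}$-a.e., so the hookup $\overline{\mu_i}\,\overline{\nu_{\mu_i}}$ coincides with the stationary mean $\overline{\overline{\mu_i}\nu}$ and is stationary; Proposition~\ref{PropositionErgodicityConditions} then supplies its ergodicity, since $\nu$ is ergodic R-AMS and $\overline{\mu_i}$ is ergodic stationary.

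Next I would invoke Lemma~1 of \cite{Kakihara99} twice, yielding four \emph{a priori} cases, of which only two are self-consistent. On the one hand, if $\overline{\mu_1}=\overline{\mu_2}$, Proposition~\ref{PropositionQSMofErgodicRAMSChannel} immediately gives $\overline{\nu_{\mu_1}}=\overline{\nu_{\overline{\mu_1}}}=\overline{\nu_{\overline{\mu_2}}}=\overline{\nu_{\mu_2}}$ a.e.\ and hence equality of the hookups; conversely equal hookups have equal input marginals, and Lemma~\ref{LemmaChannelDominanceEquivHookupDominance} forces the two channels to coincide a.e. On the other hand, any separator $F\subset A^\mathcal{I}$ of $\overline{\mu_1},\overline{\mu_2}$ yields a rectangle $F\times B^\mathcal{I}$ separating the hookups; and, conversely, mutually singular hookups preclude equal marginals (which by the previous argument would force equal hookups), so by the marginal dichotomy $\overline{\mu_1}\perp\overline{\mu_2}$.

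Finally I would upgrade hookup singularity to pointwise singularity of the channel kernels. Starting from a witness $G\in\mathcal{B}_{A^\mathcal{I}\times B^\mathcal{I}}$ with $\overline{\mu_1}\,\overline{\nu_{\mu_1}}(G)=1$ and $\overline{\mu_2}\,\overline{\nu_{\mu_2}}(G)=0$, integration over sections gives $\overline{\nu_{\mu_1}}(x,G_x)=1$ for $\overline{\mu_1}$-a.e.\ $x$ and $\overline{\nu_{\mu_2}}(x,G_x)=0$ for $\overline{\mu_2}$-a.e.\ $x$. The main obstacle is that each of these facts controls the corresponding kernel only on the support of its own marginal, whereas the statement asks for singularity \emph{both} $\overline{\mu_1}$- and $\overline{\mu_2}$-a.e. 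Because $\overline{\mu_1}\perp\overline{\mu_2}$ and each kernel $\overline{\nu_{\mu_i}}$ is only determined $\overline{\mu_i}$-a.e., a careful choice of versions on the complementary support---exploiting the freedom to redefine each kernel outside the support of its own marginal---should then deliver $\overline{\nu_{\mu_1}}(x,\cdot)\perp\overline{\nu_{\mu_2}}(x,\cdot)$ simultaneously $\overline{\mu_1}$-a.e.\ and $\overline{\mu_2}$-a.e., which closes the proof.
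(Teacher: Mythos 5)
Your proposal follows essentially the same route as the paper's proof: apply the dichotomy for ergodic stationary probabilities both to the marginals $\overline{\mu_1},\overline{\mu_2}$ and to the hookups $\overline{\mu_i}\,\overline{\nu_{\mu_i}}$, use Proposition~\ref{PropositionQSMofErgodicRAMSChannel} to settle the equality case, observe that equality (resp.\ singularity) of the marginals is equivalent to equality (resp.\ singularity) of the hookups, and in the singular case extract a separating set $\Omega$ and integrate over sections. One small imprecision: from equality of the hookups, Lemma~\ref{LemmaChannelDominanceEquivHookupDominance} yields only mutual absolute continuity of the kernels $\overline{\mu}$-a.e.; equality a.e.\ comes from the uniqueness, modulo the input marginal, of the channel inducing a given joint distribution, which is how the paper argues inside Proposition~\ref{PropositionQSMofErgodicRAMSChannel}.

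The one genuine gap is the final step, and you have located it precisely where the paper's own argument is weakest. From $\overline{\mu_1}\,\overline{\nu_{\mu_1}}(\Omega)=1$ and $\overline{\mu_2}\,\overline{\nu_{\mu_2}}(\Omega)=0$ one gets $\overline{\nu_{\mu_1}}(x,\Omega_x)=1$ for $\overline{\mu_1}$-a.e.\ $x$ and $\overline{\nu_{\mu_2}}(x,\Omega_x)=0$ for $\overline{\mu_2}$-a.e.\ $x$, but the cross statements are not forced: since $\overline{\mu_1}\perp\overline{\mu_2}$, each kernel is undetermined on the support of the other marginal, and indeed $\Omega=S\times B^{\mathcal{I}}$ with $S$ an invariant separating set already witnesses singularity of the hookups while carrying no information about the kernels. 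The paper attempts to bridge this by the interchange $\int\!\int \overline{\nu_{\mu_{2}}}(x,\Omega_x)\, d\overline{\mu_1}\,d\overline{\mu_2}=\int\!\int \overline{\nu_{\mu_{2}}}(x,\Omega_x)\, d\overline{\mu_2}\,d\overline{\mu_1}$, but the integrand is a function of the single variable $x$, so this amounts to asserting $\int g\, d\overline{\mu_1}=\int g\, d\overline{\mu_2}$ for two mutually singular measures, which is not valid. Your diagnosis --- that the conclusion can only hold for suitably chosen versions of the quasi-stationary means, redefined off the supports of their own marginals --- is the right way to make the statement meaningful, but "a careful choice of versions should then deliver" is not a proof: you must actually exhibit measurable versions of $\overline{\nu_{\mu_2}}(x,\cdot)$ on the $\overline{\mu_1}$-support (and symmetrically) that are singular to $\overline{\nu_{\mu_1}}(x,\cdot)$ there, and this construction is nontrivial because the required singular companion must depend measurably on $x$. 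As it stands, neither your sketch nor the paper's computation closes this step.
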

\begin{proof}
Let $\mu_1$ and $\mu_2$ be the distributions of two ergodic R-AMS sources. Their ergodic stationary means $\overline{\mu_1}$  and $\overline{\mu_2}$ are either identical or mutually singular.

 If $\overline{\mu_1} = \overline{\mu_2}$, thanks to Proposition \ref{PropositionQSMofErgodicRAMSChannel}, $\overline{\nu_{\mu_1}}=\overline{\nu_{\overline{\mu_1}}}=\overline{\nu_{\overline{\mu_2}}}=\overline{\nu_{\mu_2}}$. Moreover, being ergodic and stationary probabilities on the same space, $\overline{\mu_1}\ \overline{\nu_{\mu_1}}$ and $\overline{\mu_2}\ \overline{\nu_{\mu_2}}$ are either identical or mutually singular.

It should be noticed that $\overline{\mu_1} = \overline{\mu_2}$ if and only if $\overline{\mu_1}\ \overline{\nu_{\mu_1}} = \overline{\mu_2}\ \overline{\nu_{\mu_2}}$ and thus $\overline{\mu_1} \perp \overline{\mu_2}$ if and only if $\overline{\mu_1}\ \overline{\nu_{\mu_1}} \perp \overline{\mu_2}\ \overline{\nu_{\mu_2}}$

Assume that $\overline{\mu_1} \perp \overline{\mu_2}$. Then $\overline{\mu_1}\ \overline{\nu_{\mu_1}} \perp \overline{\mu_2}\ \overline{\nu_{\mu_2}}$. There exists a set $\Omega \in \mathcal{B}_{A^\mathcal{I}\times B^\mathcal{I}}$ such that $\overline{\mu_1}\ \overline{\nu_{\mu_1}}(\Omega)=1$ and $\overline{\mu_2}\ \overline{\nu_{\mu_2}}(\Omega)=0$. Thus
$$
\overline{\nu_{\mu_{1}}}(x,\Omega_x)=1 \text{  }\overline{\mu_1}\text{-a.e.}
$$
and
$$
\int \int \overline{\nu_{\mu_{2}}}(x,\Omega_x) d\overline{\mu_1}d\overline{\mu_2}=\int \int \overline{\nu_{\mu_{2}}}(x,\Omega_x) d\overline{\mu_2}d\overline{\mu_1} = \int \overline{\mu_2}\ \overline{\nu_{\mu_2}}(\Omega) d\overline{\mu_1} = 0 \nonumber
$$
This implies that 
$$
\int \overline{\nu_{\mu_{2}}}(x,\Omega_x) d\overline{\mu_1}=0
$$
 and then 
$$
\overline{\nu_{\mu_{2}}}(x,\Omega_x)=0 \text{  }\overline{\mu_1}\text{-a.e.}
$$
Symmetrically, 
$$
\overline{\nu_{\mu_{1}}}(x,\Omega_x^c)=0 \text{  }\overline{\mu_2}\text{-a.e.}
$$
 and 
$$
\overline{\nu_{\mu_{2}}}(x,\Omega_x^c)=1 \text{  }\overline{\mu_2}\text{-a.e.}
$$
 which gives $\overline{\nu_{\mu_{1}}}(x,\Omega_x)=1 \text{  }\text{  and  }\overline{\nu_{\mu_{2}}}(x,\Omega_x)=0 \text{ }\overline{\mu_1}\text{-a.e.} \text{  and  }\overline{\mu_2}\text{-a.e.}$ or equivalently $\overline{\nu_{\mu_{1}}}(x,.) \perp \overline{\nu_{\mu_{2}}}(x,.)$ $\overline{\mu_1}$ and $\overline{\mu_2}$-a.e. 
\end{proof}

A remark is that, in case of non-singular shift,  the quasi-stationary mean of a non-ergodic R-AMS channel $\nu$ w.r.t an R-AMS probability $\mu$ is equivalent to the quasi-stationary mean of $\nu$ w.r.t to the stationary mean $\overline{\mu}$ of $\mu$:
$\mu\nu \equiv \overline{\mu}\ \overline{\nu}_\mu$ 
and (since $\overline{\mu}\equiv \mu$) $\overline{\mu}\nu \equiv \mu\nu$ 
and $\overline{\mu}\nu \equiv \overline{\mu}\ \overline{\nu}_{\overline{\mu}}$,
 then $\overline{\mu}\ \overline{\nu}_\mu \equiv \overline{\mu}\ \overline{\nu}_{\overline{\mu}}$ 
which implies $\overline{\nu}_{\mu_x}\equiv \overline{\nu}_{\overline{\mu}_x} $ $\overline{\mu}$.a.e. and $\mu$.a.e.

%\pagebreak
%% The Appendices part is started with the command \appendix;
%% appendix sections are then done as normal sections
%% \appendix
\appendix
\section{Proof of Lemma \ref{LemmaRecurrenceOnRectangles}}\label{SectionProofsOfLemmas}

Proof of Lemma \ref{LemmaRecurrenceOnRectangles} relies on the following two lemmas.
\begin{lemma}\label{LemmaCountableUnionOfRecurrentSets}
Let $\mathcal{B}'$ be the set $\{ O \in \mathcal{B} / \mu(O \setminus \cup_{k\geq 1} T^{-k}O) = 0 \}$ where $\mathcal{B}$ is a $\sigma$-field of a sequence space. Then $\mathcal{B}'$ is stable by countable union.
\end{lemma}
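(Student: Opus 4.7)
The plan is to reduce the statement to the easy set-theoretic observation that enlarging the target of a recurrence relation can only shrink the set of points that fail to be recurrent, combined with countable subadditivity of $\mu$.

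Concretely, I would take a countable family $(O_i)_{i\geq 0}$ of elements of $\mathcal{B}'$ and set $O=\cup_{i\geq 0}O_i$. The key inclusion is
$$
\bigcup_{k\geq 1}T^{-k}O_i\;\subset\;\bigcup_{k\geq 1}T^{-k}O
\qquad\text{for each }i,
$$
which follows immediately from $O_i\subset O$ and the fact that $T^{-k}$ preserves inclusion. Taking complements relative to $O_i$ gives
$$
O_i\setminus\bigcup_{k\geq 1}T^{-k}O\;\subset\;O_i\setminus\bigcup_{k\geq 1}T^{-k}O_i,
$$
and the right-hand side has $\mu$-measure zero by the assumption $O_i\in\mathcal{B}'$.

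Finally I would use the trivial identity $O\setminus A=\cup_i(O_i\setminus A)$, valid for any set $A$ when $O=\cup_iO_i$, applied with $A=\cup_{k\geq 1}T^{-k}O$, together with countable subadditivity of $\mu$:
$$
\mu\Bigl(O\setminus\bigcup_{k\geq 1}T^{-k}O\Bigr)
\;\leq\;\sum_{i\geq 0}\mu\Bigl(O_i\setminus\bigcup_{k\geq 1}T^{-k}O\Bigr)
\;\leq\;\sum_{i\geq 0}\mu\Bigl(O_i\setminus\bigcup_{k\geq 1}T^{-k}O_i\Bigr)\;=\;0,
$$
which shows $O\in\mathcal{B}'$. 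There is no real obstacle here: the argument is purely set-theoretic plus subadditivity, and no measurability or invariance subtleties enter beyond the fact that $\mathcal{B}$ is itself a $\sigma$-field so that $O$ lies in $\mathcal{B}$.
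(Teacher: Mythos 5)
Your proof is correct and follows essentially the same route as the paper: both arguments cover $O\setminus\cup_{k\geq 1}T^{-k}O$ by the sets $O_i\setminus\cup_{k\geq 1}T^{-k}O_i$ and conclude by countable subadditivity. The only cosmetic difference is that the paper commutes $T^{-k}$ with the union and invokes the inclusion $(A\cup B)\setminus(C\cup D)\subset(A\setminus C)\cup(B\setminus D)$, whereas you use the identity $O\setminus A=\cup_i(O_i\setminus A)$ plus monotonicity; these amount to the same thing.
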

\begin{proof}
Let $(O_i)_i$ be a countable family of elements of $\mathcal{B'}$. Then 
$$
\mu( (\cup_{i\geq 0} O_i) \setminus (\cup_{k\geq 1} T^{-k} (\cup_{i\geq 0} O_i))  = \mu( (\cup_{i\geq 0} O_i) \setminus (\cup_{i\geq 0} \cup_{k\geq 1}   T^{-k} O_i))
$$
$(A \cup B) \setminus (C\cup D) \subset (A \setminus C) \cup (B \setminus D)$, then:
$$
 \mu( (\cup_{i\geq 0} O_i) \setminus (\cup_{k\geq 1} T^{-k} (\cup_{i\geq 0} O_i)) \leq \mu( \cup_{i\geq 0} (O_i \setminus \cup_{k\geq 1}   T^{-k} O_i) )
$$
$$
\mu( (\cup_{i\geq 0} O_i) \setminus (\cup_{k\geq 1} T^{-k} (\cup_{i\geq 0} O_i))   \leq \sum_{i\geq 0} \mu( O_i \setminus \cup_{k\geq 1}   T^{-k} O_i) = 0
$$
Thus $\cup_{i\geq 0} O_i \in \mathcal{B}'$.
\end{proof}

\begin{lemma}\label{LemmaCountableUnionOfIncompressibleEvents}
Let $(\Omega, \mathcal{B}, T, \eta)$ a dynamical system. Let $\mathcal{B}'$ be the set $\{ E \in \mathcal{B}/ T^{-1}E \subset E \text{ and } \eta(T^{-1}E)=\eta(E) \}$. Then $\mathcal{B'}$ is stable by countable union.  
\end{lemma}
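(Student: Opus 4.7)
The plan is to verify both defining conditions of $\mathcal{B}'$ for $E := \bigcup_{i\geq 0} E_i$, given that each $E_i$ satisfies $T^{-1}E_i \subset E_i$ and $\eta(T^{-1}E_i) = \eta(E_i)$.

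First I would handle the inclusion $T^{-1}E \subset E$, which is the easy and purely set-theoretic half. Since $T^{-1}$ commutes with countable unions, $T^{-1}E = \bigcup_i T^{-1}E_i \subset \bigcup_i E_i = E$, using the per-$i$ hypothesis $T^{-1}E_i \subset E_i$. No measure-theoretic input is needed here.

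Next I would prove the measure equality $\eta(T^{-1}E) = \eta(E)$. Because $T^{-1}E \subset E$, this reduces to showing $\eta(E \setminus T^{-1}E) = 0$. The key set-theoretic step is the inclusion
\[
E \setminus T^{-1}E \;=\; \Bigl(\bigcup_i E_i\Bigr) \setminus \Bigl(\bigcup_j T^{-1}E_j\Bigr) \;\subset\; \bigcup_i \bigl(E_i \setminus T^{-1}E_i\bigr),
\]
which holds because any point in $E_i$ that lies outside every $T^{-1}E_j$ lies, in particular, outside $T^{-1}E_i$. Then countable subadditivity gives $\eta(E\setminus T^{-1}E) \le \sum_i \eta(E_i \setminus T^{-1}E_i)$, and each summand vanishes because, by hypothesis and $T^{-1}E_i \subset E_i$, $\eta(E_i \setminus T^{-1}E_i) = \eta(E_i) - \eta(T^{-1}E_i) = 0$.

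The main (only) conceptual subtlety is recognizing and justifying the inclusion $(\bigcup_i A_i)\setminus(\bigcup_j B_j) \subset \bigcup_i (A_i\setminus B_i)$ when $B_i \subset A_i$; this is exactly the analogue of the step used in Lemma \ref{LemmaCountableUnionOfRecurrentSets}, and once it is in place the rest is a two-line subadditivity argument. No other obstacles are anticipated.
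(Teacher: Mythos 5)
Your proof is correct and follows essentially the same route as the paper's: the same set-theoretic inclusion $(\bigcup_i E_i)\setminus(\bigcup_i T^{-1}E_i)\subset\bigcup_i(E_i\setminus T^{-1}E_i)$ followed by countable subadditivity, with the per-$i$ terms vanishing by hypothesis. (A minor remark: that inclusion holds for arbitrary families, without needing $T^{-1}E_i\subset E_i$.)
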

\begin{proof}
 Let $(E_i)_i$ be a countable family of elements of $\mathcal{B'}$. Then
$$
T^{-1}(\cup_i E_i) = \cup_i T^{-1}(E_i) \subset \cup_i E_i
$$
and 
$$
\eta((\cup_i E_i) \setminus \cup_i T^{-1}(E_i) ) \leq \eta(\cup_i (E_i \setminus T^{-1}(E_i) ) \leq \sum_i \eta(E_i \setminus T^{-1}(E_i))=0
$$
Then $\cup_i E_i \in \mathcal{B'}$
\end{proof}

\begin{proof}[Proof of Lemma \ref{LemmaRecurrenceOnRectangles}]
~\\
\begin{description}
	\item[(\ref{ItemRecurrenceOnGeneratingSets}) $\Rightarrow$ (\ref{ItemRecurrenceOnCountableUnionsOfFieldSets})] Any element of the field $\mathcal{F}$ is a finite union of rectangles. Then any countable union of field elements is a countable union of rectangles. By (\ref{ItemRecurrenceOnGeneratingSets}) and Lemma \ref{LemmaCountableUnionOfRecurrentSets},  (\ref{ItemRecurrenceOnCountableUnionsOfFieldSets}) holds. 

	\item[(\ref{ItemRecurrenceOnCountableUnionsOfFieldSets}) $\Rightarrow$ (\ref{ItemRecurrence})] 
Let $O\in \mathcal{B_{A^{\mathcal{I}}\times B^{\mathcal{I}}}}$. 
 The probability $\eta$ on  $( A^{\mathcal{I}}\times B^{\mathcal{I}} , \mathcal{B_{A^{\mathcal{I}}\times B^{\mathcal{I}}}})$ is the extension of  the set function $\eta$ on the field generated by the rectangles and verifies:
$$
\eta(O)=\inf_{(R_i)_{i\geq 0}/ O\subset \cup_{i\geq 0}R_i}\eta(\cup_{i\geq 0}R_i)
$$
where the families  $(R_i)_{i\geq 0}$ are countable covers of $O$ made of elements of the field generated by the rectangles (see \cite{Gray09}). Let $\epsilon >0$, then there exist  countable families of field elements $(R_i)_{i\geq 0}$ and $(R'_i)_{i\geq 0}$ respectively covering $O$ and $O^c$ such that:
$$
\eta(\cup_{i\geq 0}R_i)-\frac{\epsilon}{2} < \eta(O) \leq \eta(\cup_{i\geq 0}R_i) 
$$
$$
\eta(\cup_{i\geq 0}R'_i)-\frac{\epsilon}{2} < \eta(O^c) \leq \eta(\cup_{i\geq 0}R'_i) 
$$Let $\alpha=\cup_{i\geq 0}R_i$ and $\beta=\cup_{i\geq 0}R'_i$.  Obviously $T_{AB}^{-k}\beta^c \subset T_{AB}^{-k}O \subset T_{AB}^{-k}\alpha$ for any $k$. Then:
$$
O\setminus \cup_{k\geq 1} T_{AB}^{-k}O \subset   \alpha \setminus \cup_{k\geq 1} T_{AB}^{-k}\beta^c \subset (\beta^c \setminus \cup_{k\geq 1} T_{AB}^{-k}\beta^c) \cup (\alpha \setminus \beta^c)
$$
$\beta^c$ is a countable union of  elements of the field generated by rectangles then, by (\ref{ItemRecurrenceOnCountableUnionsOfFieldSets}), $\beta^c$ is a recurrent event. Moreover, $\eta(\alpha \setminus \beta^c) < \epsilon$. Then $\eta(O\setminus \cup_{k\geq 1} T_{AB}^{-k}O)<\epsilon$. This holds for any $\epsilon>0$, then 
$$
\eta(O\setminus \cup_{k\geq 1} T_{AB}^{-k}O)=0
$$
The event $O$ is recurrent.

	\item[(\ref{ItemRecurrence}) $\Leftrightarrow$ (\ref{ItemIncompressibility})] This is Theorem 7.3 of \cite{Gray09} (page 218).

	\item[(\ref{ItemIncompressibility}) $\Rightarrow$ (\ref{ItemIncompressibilityOnGeneratingSets})] This is obvious.

	\item[(\ref{ItemIncompressibilityOnGeneratingSets}) $\Rightarrow$ (\ref{ItemIncompressibilityOnCountableUnionsOfFieldSets})] Any element of the field $\mathcal{F}$ is a finite union of rectangles. Then any countable union of field elements is a countable union of rectangles. By (\ref{ItemIncompressibilityOnGeneratingSets}) and Lemma \ref{LemmaCountableUnionOfIncompressibleEvents},  (\ref{ItemIncompressibilityOnCountableUnionsOfFieldSets}) holds. 

	\item[(\ref{ItemIncompressibilityOnCountableUnionsOfFieldSets}) $\Rightarrow$ (\ref{ItemRecurrenceOnCountableUnionsOfFieldSets})]

Let $\mathcal{B''}$ be the set of countable unions of elements of the field $\mathcal{F}$. Let $E \in \mathcal{B''}$. Let $E^*= \cup_{i\geq 1}T_{AB}^{-1}E$. For any rectangle $F\times G$, $T_{AB}^{-1}F\times G = T_A^{-1} F \times T_B^{-1}G$ is also a rectangle. Then $E^* \in \mathcal{B''}$. Moreover,  $T_{AB}^{-1}( E^*) \subset E^*$. Let $E'=E \cup E^*$.  $E' \in \mathcal{B''}$ and $T_{AB}^{-1}( E') \subset E'$. Then, by (\ref{ItemIncompressibilityOnCountableUnionsOfFieldSets}):
$$
\eta(E' \setminus T_{AB}^{-1}( E'))=0
$$
Then
$$
\eta(E \cup E^*  \setminus E^*)=\eta(E \setminus E^*)=0
$$
Then (\ref{ItemRecurrenceOnCountableUnionsOfFieldSets}) holds.

	\item[(\ref{ItemRecurrenceOnCountableUnionsOfFieldSets}) $\Rightarrow$ (\ref{ItemRecurrenceOnGeneratingSets})] This is obvious.
\end{description}
\end{proof}

%% \section{}
%% \label{}

\section*{Acknowledgments}

The author wishes to thank  the referee and Joseph Kung  for very helpful remarks and suggestions that considerably improved the readability of this article and Jean-Paul Allouche for his encouragements.


\begin{thebibliography}{10}

\bibitem{Girardin05}
V. Girardin, On the different extensions of the ergodic theorem of information theory. Recent advances in applied probability,  Springer, New York, 2005, pp. 163-179.

\bibitem{FontanaGrayKieffer81}
R.~J. Fontana, R.~M. Gray, J.~C. Kieffer, Asymptotically mean stationary channels. IEEE Trans. Inform. Theory  27 (1981) 308-316.

\bibitem{KiefferRahe81}
J.~C. Kieffer, M Rahe, Markov channels are asymptotically mean stationary. SIAM J. of Math. Anal. 12 (1981) 293-305.

\bibitem{GrayDunhamGobbi87}
R.~M. Gray, J.~G. Dunham, Ergodicity of Markov channels. IEEE Trans. Inform. Theory 33 (1987) 656-664.

\bibitem{Sujan81}
S. Sujan,  On the capacity of asymptotically mean stationary channels. Kybernetika 17 (1981) 222-233.

\bibitem{Kakihara91}
Y. Kakihara, Ergodicity of asymptotically mean stationary channels. J. Multivariate Anal. 39 (1991) 315-323.

\bibitem{Kakihara03}
Y. Kakihara, Ergodicity and extremality of {AMS} sources and channels. Int. J. Math. Math. Sci. 2003 (2003) 1755-1770.

\bibitem{Kakihara04}
Y. Kakihara, Pseudoergodicity in Information Channels. Stochastic processes and functional analysis, Lecture Notes in Pure and Appl. Math., Dekker, New York, 2004. pp. 209-218.

\bibitem{Kakihara99}
Y. Kakihara, Abstract Methods in Information Theory, World Scientific, Singapore, 1999.

\bibitem{Gray11}
R.~M. Gray, Entropy and Information Theory, 2nd edition, Springer, New York, 2011.

\bibitem{Gray09}
R.~M. Gray, Probability Random Processes and Ergodic Properties, Springer, New York, 2009.

\bibitem{Faden85}
A.~M. Faden, The existence of regular conditional probabilities: necessary and
  sufficient conditions. Ann. Probab. 13 (1985) 288-298.

\bibitem{GrayKieffer80}
R.~M. Gray, J.~C. Kieffer, Asymptotically mean stationary measures. Ann. Probab. 8 (1980) 962-973.

\bibitem{FaigleSchonuth07}
U. Faigle, A. Schonhuth, Asymptotic mean stationarity of sources with finite evolution  dimension. IEEE Trans. Inform. Theory 53 (2007) 2342-2348.

\end{thebibliography}
\end{document}